\newtheorem*{proposition}{Proposition}
\renewcommand{\d}{\mbox{d}}
\renewcommand{\bar}[1]{\overline{#1}}
\newcommand{\rhs}{{\it r.h.s.} }
\numberwithin{equation}{section}
\newcommand{\ie}{{\it i.e.,} }
\begin{document}

	\vspace{1.7 cm}
	
	\begin{flushright}
		
		{\small FIAN/TD/2-2025}
	\end{flushright}
	\vspace{1.7 cm}
	
	\begin{center}
		{\large\bf Linearized Coxeter Higher-Spin Theories}
		
		\vspace{1 cm}
		
		{\bf A.A.~Tarusov$^{1}$, K.A.~Ushakov$^{1}$ and  M.A.~Vasiliev$^{1,2}$}\\
		\vspace{0.5 cm}
		\textbf{}\textbf{}\\
		\vspace{0.5cm}
		\textit{${}^1$ I.E. Tamm Department of Theoretical Physics,
			Lebedev Physical Institute,}\\
		\textit{ Leninsky prospect 53, 119991, Moscow, Russia}\\
		
		\vspace{0.7 cm} \textit{
			${}^2$ Moscow Institute of Physics and Technology,\\
			Institutsky lane 9, 141700, Dolgoprudny, Moscow region, Russia
		}
		
	\end{center}
	
	\vspace{0.4 cm}
	
\begin{abstract}
\noindent
A class of higher-spin gauge theories on $AdS_4$ associated with various Coxeter groups  $\mathcal{C}$ is analyzed at the linear order. For a general $\mathcal{C}$, a solution corresponding to the $AdS_4$ space and the form of the free unfolded equations are established. A disentanglement criterion has been formulated for Coxeter HS modules. The shifted homotopy technique is uplifted to the general Coxeter HS models. In case of the Coxeter group $B_2$ classification of unitary HS modules and a consistent truncation to them are determined, the dynamical content is discussed briefly.

\end{abstract}
	
\newpage
	
\vspace{-1cm}
\tableofcontents
	
\newpage

\section{Introduction}
	
Higher-spin (HS) gauge theories describe interactions of massless fields of all spins. The first example of a nonlinear HS theory was given for the $4d$ case in \cite{Vasiliev:1990en}, while its modern formulation was presented in \cite{Vasiliev:1992av}. A unique feature of HS gauge theories is that consistent interactions of propagating massless fields exist in a curved background, providing a length scale in HS interactions that contain higher derivatives. $AdS$ is the most symmetric curved background compatible with HS interactions \cite{Fradkin:1986qy, Vasiliev:1988sa}. The lowest dimension where the HS massless fields propagate is $d=4$
with $AdS_4$ as the most symmetric vacuum.

One of the fundamental questions in HS theory concerns the construction of more general HS models that could be related to String Theory. Arguments that String Theory possesses higher symmetries in the high-energy limit were given long ago in \cite{Gross:1987ar, Gross:1988ue}.  Although the conjecture that HS theory is related to String theory is supported by the analysis of the high-energy limit of string amplitudes \cite{Gross:1988ue} and passed some non-trivial tests \cite{Bianchi:2003wx}-\cite{Bianchi:2004npm}, no satisfactory understanding of this relation beyond the free field sector of the tensionless limit of String Theory \cite{Lindstrom:2003mg, Bonelli:2003kh, Sagnotti:2003qa} is available.

A potential candidate for a suitable extended HS model was proposed in \cite{Vasiliev:2018zer}, where a new class of higher-spin gauge theories associated with various Coxeter groups was constructed (see \cite{Bourbaki} for a detailed explanation of Coxeter groups). These extended models are based on deformed oscillator algebras, known as Cherednik algebras \cite{Cherednik:1992sy}. HS-like models of this class could have been formulated long ago, since the relevance of the Cherednik algebra to HS theory was mentioned in \cite{Brink:1993sz}. However, a naive extension of this class was not formulated because of the problem with the resulting spectrum of states. There is no room left for a massless spin-two state, \ie graviton, not allowing the description of the HS gravity. Fortunately, an extension of the standard Cherednik algebras by a set of idempotents, known as framed Cherednik algebras \cite{Vasiliev:2018zer}, allowed one to bypass the problem of missing massless HS fields in the spectrum.

It was conjectured in \cite{Vasiliev:2018zer} that a multiparticle extension of the
HS theory, \ie transition to the theory built upon a universal enveloping algebra of the HS algebra (see \cite{Vasiliev:2012tv} for the multiparticle extension), based on the Coxeter group $B_2$ has a rich enough symmetry and spectrum to match with String Theory.\footnote{If we denote a star-product algebra as $A$, then multiparticle algebra $M(A)$ is isomorphic to $U(\operatorname{Lie}(A))$, where $\operatorname{Lie(\bullet)}$ constructs a Lie algebra out of an associative one. As a vector space $M(A)$ is the direct sum of all graded-symmetric tensor degrees of $A$: $M(A) \simeq \overset{\infty}{\underset{i=1}{\oplus}}\operatorname{Sym}A^{\otimes i}$. Thus, $M(A)$ acts on the space of all multiparticle states.} (Note that the construction of multiparticle HS theory is somewhat analogous to the idea of singleton string whose spectrum is represented by multi-singletons  \cite{Engquist:2005yt, Engquist:2007pr}.)
This conjecture was based on several grounds.
One was that this model has two independent coupling constants associated with the two conjugacy classes of $B_2$. These were conjectured to be associated with the HS coupling constant and String coupling of the model. A related fact is that a multiparticle $B_2$ HS model has a room for the fields to be associated with multitrace operators in the holographic picture. Another motivation was due to
the observation of a doubled HS algebra with non-trivial mixing in the context of String theory on a special background \cite{Gaberdiel:2015mra, Gaberdiel:2015wpo}.
From that perspective, $B_2$ Coxeter model is the simplest non-trivial extension, possessing two copies of the HS algebra associated with the pair of orthogonal vectors belonging to the root system of $B_2$, which are mixed non-trivially by the added permutations. This is to be  contrasted to the $A_2$ group, in which orthogonal root vectors do not exist, that does not allow independent copies.
A related fact is that $A_2$ has a single conjugacy class. On the other hand, the models based on the Coxeter groups of higher ranks were argued in \cite{Vasiliev:2018zer} to be associated with much richer  tensor extensions of String Theory.

\begin{tikzpicture}[scale=1.5]
 \begin{scope}[xshift=-3cm]
    \foreach \angle in {0, 60, 120, 180, 240, 300} {
      \draw[->, thick] (0,0) -- (\angle:2cm);
    }
    \node[below] at (0,-2.2) {Root system \( A_2 \)};
  \end{scope}

\begin{scope}[xshift=3cm]
    \foreach \angle in {0, 90, 180, 270} {
      \draw[->, red, thick] (0,0) -- (\angle:1.414cm); 
    }
    \foreach \angle in {45, 135, 225, 315} {
      \draw[->, blue, thick] (0,0) -- (\angle:2cm);
    }
    \node[below] at (0,-2.2) {Root system \( B_2 \)};
  \end{scope}

\end{tikzpicture}

To prove the conjectured link between the $B_2$ Coxeter extended multiparticle model with String Theory one has to spontaneously break the extended HS symmetry to the space-time symmetry and compare the resulting massive spectrum to the string one. Since this procedure requires knowledge of the theory at the linear level, analysis of the linearized multiparticle Coxeter HS (CHS) models should be performed.

In this paper, we consider linearization of a general CHS theory, determine the $AdS_4$ background solution and extract the form of the First On-Shell Theorem (\ie the linearized unfolded field equations) for a general Coxeter group. A new type of HS modules that are not equivalent to the tensor products of twisted-adjoint and adjoint modules of standard $4d$ HS theory has been found. We propose a criterion for the disentanglement of a module in the case of a general group $\mathcal{C}$, \ie necessary and sufficient conditions for a CHS module to be a tensor product of adjoint and twisted-adjoint modules of a standard HS theory. Moreover, we classify all unitary modules in the $B_2$ model, provide a consistent truncation of CHS modules in a zero-form sector to unitary submodules and briefly discuss the dynamical content in the $B_2$ case. It is argued that the dynamical fields consist of copies of fields $C$ corresponding to standard generalized Weyl tensors and $\omega$ corresponding to Fronsdal fields and their combinations. In addition, the shifted homotopy technique \cite{Didenko:2018fgx} is extended to CHS models.

The paper is organized as follows. We start with recalling the construction of CHS models, proposed in \cite{Vasiliev:2018zer}, in Section \ref{Coxeter HS models}. In Section \ref{AdS solution} we obtain the embedding of $AdS_4$ in $4d$ general CHS model and show its uniqueness. Then in Section \ref{Derivative and modules} we consider CHS modules and propose the disentanglement criterion in the case of a general Coxeter group. In the $B_2$ model we present a realization of the CHS linear equations in terms of the field-theoretical Fock modules and classify CHS modules according to the unitarity/non-unitarity through the identification with $su(2,2)$ modules induced by a Bogolyubov transform. Generalization of the First On-Shell Theorem for general CHS theory and modified shifted homotopy technique are derived in Section \ref{FOST}. In Section \ref{dynamics} we discuss the dynamical content of the $B_2$ theory. Our conclusions are in Section \ref{Conclusion}.

\section{Coxeter higher-spin models}\label{Coxeter HS models}

\subsection{Coxeter groups and framed Cherednik algebra}

Following \cite{Vasiliev:2018zer}, we start with the definition of a Coxeter group. A rank-$p$ Coxeter group $\mathcal{C}$ is generated by reflections with respect to a system of root vectors $v_a$ in a $p$-dimensional Euclidean vector space $V$ with the scalar product $(x, y) \in \mathds{R}$, $x, y \in V$. An elementary reflection associated with the root vector $v_a$ acts on $x \in V$ as follows
\begin{equation}
    R_{v_a}x^i = x^i - 2 \frac{(v_a,x)}{(v_a,v_a)}v_a^i\,, \quad R_{v_a}^2 = Id\,.
\end{equation}

In the sequel, we will be mainly concerned with the groups $A_p$ and $B_p$. The root system of $A_p$ consists of the vectors $v^{ij} = e^i - e^j$, where $e^i$ form an orthonormal frame in $\mathds{R}^{p+1}$. $V$ is the $p$-dimensional subspace of relative coordinates in $\mathds{R}^{p+1}$ spanned by $v^{ij}$. The root system of $B_p$ consists of two conjugacy classes under the action of $B_p$
\begin{equation}
    \mathcal{R}_1 = \{\pm e^i, 1 \leq i \leq p\}\,, \quad \mathcal{R}_2 = \{\pm e^i \pm e^j, 1 \leq i < j \leq p\}\,.
\end{equation}
In addition to permutations, $B_p$ contains reflections of any basis axis in $V = \mathds{R}^p$ generated by $v^i_\pm = \pm e^i$ \cite{Bourbaki}.

We introduce a set of idempotents $I_n$, a set of oscillators $q_\alpha^n$ and dressed Klein operators $\hat{K}_v$ for each root vector $v$ (here $\alpha \in \{1,2\}, n\in\{1,...,p\}$) that obey
\begin{equation}\label{e: qIK1}
     I_n I_m = I_m I_n\,, \quad I_n I_n = I_n\,, \quad I_n q_\alpha^n = q_\alpha^n I_n = q_\alpha^n\,, \quad I_m q_\alpha^n = q_\alpha^n I_m\,,
\end{equation}
with no summation over repeated Latin indices, and
\begin{equation}\label{e: qIK2}
    \hat{K}_v q_\alpha^n = R_v{}^n{}_m q_\alpha^m \hat{K}_v\,, \quad \hat{K}_v \hat{K}_u = \hat{K}_u \hat{K}_{R_u(v)} = \hat{K}_{R_v(u)}\hat{K}_v\,, \quad \hat{K}_v \hat{K}_v = \prod I_{i_1(v)}...I_{i_k(v)}\,, \quad \hat{K}_v = \hat{K}_{-v}\,,
\end{equation}
\begin{equation}\label{e:cherednik comm}
    [q_\alpha^n,q_\beta^m] = -i \varepsilon_{\alpha\beta}\bigg(2\delta^{nm}I_n + \sum_{v \in \mathcal{R}} \nu(v) \frac{v^n v^m}{(v,v)} \hat{K}_v\bigg)\,,
\end{equation}
where $\mathcal{R}$ is a set of conjugacy classes of root vectors under the action of  $\mathcal{C}$, $\nu(v)$ is a function of the conjugacy classes, and the labels $i_1(v), . . . , i_k(v)$ enumerate those idempotents $I_n$ that carry labels affected by the reflection $R_v$. For instance, in the case of $B_p$ there are two types of dressed Klein operators: $\hat K_{ij}$ corresponding to the root vector $v^{ij}$ and $\hat K_i$ corresponding to the vector $e^i$.
As a consequence, the dressed Klein operators can be naively related to the regular ones as
\begin{equation}
    \hat{K}_v = K_v \prod I_{i_1(v)}...I_{i_k(v)}\,.
\end{equation}
Dressed Klein operators $\hat{K}_v$ are demanded to obey
\begin{equation}\label{e:KI comm}
    I_n \hat{K}_v = \hat{K}_v I_n\,, \forall n  \in \{1,...,p\}\,,
\end{equation}
\begin{equation}\label{e:KI absorb}
    I_{n} \hat{K}_v = \hat{K}_v I_n= \hat{K}_v \,, \forall n  \in \{i_1(v), . . . , i_k(v)\}\,.
\end{equation}
It should be stressed that the unhatted Klein operators do not appear in the construction of the framed Cherednik algebra. One can check that the double commutator of $q_\alpha^n$ satisfies Jacobi identity which is the most fundamental property of the Cherednik algebra. Indeed, the non-zero part of the triple commutator of $q_n^\alpha, q_m^\beta, q_k^\gamma$ is proportional to $v_n v_m v_k$ and hence contains the total antisymmetrization over three two-component indices $\alpha, \beta, \gamma$ giving zero.

For any Coxeter root system the generators
\begin{equation}
    t_{\alpha\beta} = \frac{i}{4}\sum_{n=1}^{p}\{q^n_\alpha, q^n_\beta\}I_n\,
\end{equation}
obey the $sp(2)$ commutation relations
\begin{equation}
\left[t_{\alpha \beta}, t_{\gamma \delta}\right]=\epsilon_{\beta \gamma} t_{\alpha \delta}+\epsilon_{\beta \delta} t_{\alpha \gamma}+\epsilon_{\alpha \gamma} t_{\beta \delta}+\epsilon_{\alpha \delta} t_{\beta \gamma}\,,
\end{equation}
properly rotating all Greek indices,
\begin{equation}\label{e: sl2 invariance}
    [t_{\alpha\beta}, q^n_\gamma] = \epsilon_{\beta\gamma} q^n_\alpha + \epsilon_{\alpha\gamma} q^n_\beta\,.
\end{equation}

The main feature of the framed Cherednik algebra compared to the standard one is the presence of idempotents $I_n$ which "split" the identity operator and induce filtration of the algebra. This extension makes it possible to resolve the long-standing problem of rising vacuum energy with an increase in the number of oscillator copies (see \cite{Vasiliev:2018zer} for details). Note that usual Cherednik algebra results from the framed one by quotioning out the ideal identifying all $I_n$ with the unit element of the algebra.

\subsection{Coxeter higher-spin equations}\label{Coxeter higher-spin equations}

Consider $x$-dependent fields $W$, $S$ and $B$ which also depend on $p$ sets of variables enumerated by the label $n \in \{1,...,p\}$, that include $Y^n_A , Z^n_A$ $(A \in \{1,...,4\})$, idempotents $I_n$, anticommuting differentials $dZ^A_n$ and dressed Klein operators $\hat{K}_v$ associated with all root vectors of a chosen Coxeter group $\mathcal{C}$ (at the convention $\hat K_{-v} = \hat K_v)$. The field $W(Y,Z,I;\hat{K}|x)$ is a $dx$ one-form,  $S(Y,Z,I;\hat{K}|x)$ is a $dZ$ one-form and  $B(Y,Z,I;\hat{K}|x)$ is a zero-form. The field equations associated with the framed Cherednik algebra (\ref{e:cherednik comm}) are formulated in terms of the star product analogous to the standard HS one of \cite{Vasiliev:1992av}
\begin{equation}\label{e:star product}
    (f*g)(Y,Z,I) = \frac{1}{(2\pi)^{4 p}}\int d^{4p}S d^{4p}T \exp\bigg(iS^A_n T^B_m C_{AB}\delta^{nm}\bigg) f(Y_i+I_i S_i,Z_i+I_i S_i,I)g(Y+T,Z-T,I)\,,
\end{equation}
where
\begin{equation}
    C_{AB} = \begin{pmatrix}
\varepsilon_{\alpha\beta} & 0 \\
0 & \bar\varepsilon_{\dot \alpha \dot \beta}
\end{pmatrix}\,.
\end{equation}
The spinor indices are raised and lowered by the Lorentz invariant antisymmetric tensors $\varepsilon^{\alpha\beta}$ and $\bar\varepsilon^{\dot\alpha\dot\beta}$ according to the rules
\begin{equation}
    A^\alpha = \varepsilon^{\alpha\beta}A_\beta\,, \quad A_\beta = \varepsilon_{\alpha\beta}A^\alpha\,, \quad A^{\dot\alpha} = \bar\varepsilon^{\dot\alpha\dot\beta}A_{\dot\beta}\,, \quad A_{\dot\beta} = \bar\varepsilon_{\dot\alpha\dot\beta}A^{\dot\alpha}\,.
\end{equation}

We demand that central elements $I_n$ obey (no summation over indices implied)
\begin{gather}
    Y^m_A * I_n = I_n * Y^m_A\,, \quad Y^n_A * I_n = I_n * Y^n_A = Y^n_A\,, \quad Z^m_A * I_n = I_n * Z^m_A\,, \\
    Z^n_A * I_n = I_n * Z^n_A = Z^n_A\,, \quad I_n*I_n = I_n\,, \quad I_n * I_m = I_m * I_n\,.
\end{gather}

 This is achieved by replacing standard oscillators $Y_A$, $Z_A$ with the tensor products $Y_A^n = Y_A \otimes e^n$, $Z_A^n = Z_A \otimes e^n$, where the basis element of the root space $e^n$ absorbs the corresponding idempotents, \ie $e^n I_n = I_n e^n = e^n$. It is important to note that these properties imply that any oscillator variable is accompanied (sometimes implicitly) by an indempotent sharing the same Coxeter index. Moreover, the explicit presence of idempotents in a star product means that there are no constant terms not multiplied by some idempotent, which is crucial for the resolution of the problem of missing massless states in the spectrum  \cite{Vasiliev:2018zer}. Therefore, the full nonlinear theory decomposes into  different sectors that mix in a triangle-like way. For instance, for $B_2$ theory, the terms with $I_2$ and $I_1 I_2$ do not contribute to the $I_1$ terms (and vice versa for $I_2$), while a product of $I_1$ and $I_2$ does contribute to the $I_1 I_2$ sector. In a $B_p$ CHS theory at the lowest level this brings a number of copies of the standard nonlinear HS theories associated with every idempotent $I_n$. Their mixing occurs at the higher multiparticle levels $\prod_{n\in X} I_n$, where $X$ is a subset of $\{1,\dots,p\}$.

From the star product and properties of $I_n$ it follows
\begin{equation}
    [Y_A^n,Y_B^m]_* = - [Z_A^n,Z_B^m]_* = 2i C_{AB} \delta^{nm} I_n\,, \quad [Y_A^n, Z_B^m]_* = 0\,.
\end{equation}
The appearance of the idempotents on the \rhs of the commutators distinguishes the framed Cherednik algebra from the standard one and leads to the resolution of the aforementioned rising energy problem.

From the (\ref{e:star product}) it is easy to derive
\begin{gather}
    Y_A^n * = Y_A^n +i \hat\partial_{Y}{}_A^n - i \hat\partial_{Z}{}_A^n\,, \quad * Y_A^n = Y_A^n -i \hat\partial_{Y}{}_A^n - i \hat\partial_{Z}{}_A^n\,, \\
    Z_A^n * = Z_A^n +i \hat\partial_{Y}{}_A^n - i \hat\partial_{Z}{}_A^n\,, \quad * Z_A^n = Z_A^n +i \hat\partial_{Y}{}_A^n + i \hat\partial_{Z}{}_A^n\,,
\end{gather}
where we introduce useful notation
\begin{equation}
    \hat\partial_{Y}{}_A^n := I_n \partial_{Y}{}_A^n\,, \quad \hat\partial_{Z}{}_A^n := I_n \partial_{Z}{}_A^n\,.
\end{equation}

Analogously to the standard HS construction, this star product admits inner Klein operators $\varkappa_v$, $\bar{\varkappa}_v$ associated with
the root vectors $v$
\begin{equation}
    \varkappa_v = \exp\bigg(i \frac{v^n v^m}{(v,v)} z_{\alpha n}y^\alpha_m \bigg)\,, \quad \bar\varkappa_v = \exp\bigg(i \frac{v^n v^m}{(v,v)} \bar z_{\dot\alpha n}\bar y^{\dot\alpha}_m \bigg)\,.
\end{equation}

One can see that the inner Klein operators $\varkappa_v$ generate the star product realization of the Coxeter group via
\begin{equation}
    \varkappa_v * q_\alpha^n = R_v{}^n{}_m q_\alpha^m * \varkappa_v\,, \quad q_\alpha^n = y_\alpha^n, z_\alpha^n\,,
\end{equation}
(and analogously for $\bar{q}_{\dot\alpha}$)
since $v^n = e^n (v,e^n)$,  where $e^n$ is the basis element of the root space.

Nonlinear equations for the generalized HS theory associated with the Coxeter group $\mathcal{C}$
are \cite{Vasiliev:2018zer}
\begin{gather}
    \d_x W + W*W = 0\,, \label{e:nonlinear system 1}\\
    \d_x B + W*B - B*W = 0\,, \label{e:nonlinear system 2}\\
    \d_x S + W*S + W*S = 0\,, \label{e:nonlinear system 3}\\
    S*B = B*S\,, \label{e:nonlinear system 4} \\
    S*S = i \bigg(dZ^{An}dZ_{An} + \sum_i \sum_{v \in \mathcal{R}_i} \bigg[F_{i*}(B) \frac{v^n v^m}{(v,v)}dz^\alpha_n dz_{\alpha m} * \varkappa_v \hat{k}_v  + \bar F_{i*}(B) \frac{v^n v^m}{(v,v)}d\bar z^{\dot\alpha}_n d\bar z_{\dot\alpha m} * \bar\varkappa_v \hat{\bar k}_v\bigg]\bigg)\,, \label{e:nonlinear system 5}
\end{gather}
where $\varkappa_v \hat{k}_v$ acts on $dz^\alpha_n$ as
\begin{equation}
    \varkappa_v\hat{k}_v * dz^\alpha_n = R_v{}_n{}^m dz^\alpha_m *  \varkappa_v\hat{k}_v\,,
\end{equation}
$F_{i*}(B)$ is any star product function of the zero-form $B$ on the conjugacy classes $\mathcal{R}_i$ of $\mathcal{C}$. In the following considerations, we set $F_{i*}(B) = \eta_i B$ to avoid problems with locality of expressions yielded by the star product.
Equations (\ref{e:nonlinear system 1})-(\ref{e:nonlinear system 5}) are formally consistent since the relations (\ref{e:cherednik comm}) respect the Jacobi identities, which in terms of the field equations are fulfilled due to the property that the \rhs of (\ref{e:nonlinear system 5}) is central. Indeed, one can check that
\begin{equation}
    \hat{\gamma}_i = \sum_{v \in \mathcal{R}_i} \frac{v^n v^m}{(v,v)}dz^\alpha_n dz_{\alpha m} * \varkappa_v \hat{k}_v
\end{equation}
and its conjugated counterpart $\hat{\bar{\gamma}}_i$ are central with respect to the star product (\ref{e:star product}). Therefore, the centrality of $\hat{\gamma}_i$, $\hat{\bar{\gamma}}_i$ and eq.(\ref{e:nonlinear system 4}) guarantee that $[S,S*S]_* = 0$. The equation (\ref{e:nonlinear system 5}) can be represented as (\ref{e:cherednik comm}) after the substitution $S = dz^\alpha_n q_\alpha^n$, $F_{i*}(B) = \nu(v)$ and some redefinition of the Klein operators. Therefore, the consistency condition $[S,S*S]_* = 0$ transforms into the Jacobi identity of the framed Cherednik algebra.

\section{$AdS_4$ solution}\label{AdS solution}

In this section we find the vacuum solution of the nonlinear system (\ref{e:nonlinear system 1})-(\ref{e:nonlinear system 5}), that describes $AdS_4$. It is easy to see that
\begin{equation}
    B_0 = 0\,, \quad S_0 = dZ^{An}Z_{An}\,, \quad W = W_0(Y,I|x)\,
\end{equation}
solve nonlinear equations provided that $W_0(Y,I|x)$ obeys the equation
\begin{equation}\label{e:zero_curv}
    \d_x W_0(Y,I|x) + W_0(Y,I|x)*W_0(Y,I|x) = 0\,.
\end{equation}

Consider a bilinear ansatz for $W_0(Y,I|x)$ that includes $dx$ one-forms $\omega^{nm}_{\alpha\beta}(I|x), \bar \omega^{nm}_{\dot \alpha \dot \beta}(I|x)$ and $e^{nm}_{\alpha \dot \alpha}(I|x)$
\begin{equation}\label{e:ansatz}
    W_0(Y,I|x) = -\frac{i}{4}\bigg(\omega^{nm}_{\alpha\beta}(I|x)y^\alpha_n y^\beta_m + \bar \omega^{nm}_{\dot \alpha \dot \beta}(I|x) \bar y^{\dot \alpha}_n \bar y^{\dot \beta}_m + 2 e^{nm}_{\alpha \dot \alpha}(I|x) y^\alpha_n \bar y^{\dot \alpha}_m\bigg)\,.
\end{equation}

Insertion of (\ref{e:ansatz}) into (\ref{e:zero_curv}) yields a set of equations on the one-forms $\omega^{nm}_{\alpha\beta}(I|x), \bar \omega^{nm}_{\dot \alpha \dot \beta}(I|x)$ and $e^{nm}_{\alpha \dot \alpha}(I|x)$:
\begin{gather}
    \bigg(\d_x \omega^{nm}_{\alpha\beta} + \sum_q \varepsilon^{\gamma \lambda}\omega^{nq}_{\alpha\gamma}\wedge \omega^{mq}_{\beta\lambda} I_q + \sum_q\bar \varepsilon^{\dot \alpha \dot \beta} e^{nq}_{\alpha \dot \alpha}\wedge e^{mq}_{\beta \dot \beta} I_q\bigg)y^\alpha_n y^\beta_m = 0\,, \label{e:AdS 1}\\
    \bigg(\d_x \bar\omega^{nm}_{\dot\alpha\dot\beta} + \sum_q\bar\varepsilon^{\dot\gamma \dot\lambda}\bar\omega^{nq}_{\dot\alpha\dot\gamma}\wedge \bar\omega^{mq}_{\dot\beta\dot\lambda} I_q +  \sum_q\varepsilon^{\alpha \beta} e^{nq}_{\alpha \dot \alpha}\wedge e^{mq}_{\beta \dot \beta} I_q \bigg)\bar y^{\dot \alpha}_n \bar y^{\dot \beta}_m = 0\,, \\
    \bigg(\d_x e^{nm}_{\alpha \dot \alpha} +  \sum_q\varepsilon^{\gamma \lambda}\omega^{nq}_{\alpha\gamma}\wedge e^{qm}_{\lambda \dot \alpha} I_q + \sum_q\bar\varepsilon^{\dot\gamma \dot\lambda}\bar\omega^{nq}_{\dot\alpha\dot\gamma}\wedge e^{qm}_{\alpha \dot \lambda} I_q\bigg) y^\alpha_n \bar y^{\dot \alpha}_m = 0\,. \label{e:AdS 3}
\end{gather}

Further restricting the components of $W_0(Y,I|x)$ as
\begin{equation}
    \omega^{nm}_{\alpha\beta}(I|x) = \omega_{\alpha\beta}(x) \delta^{nm}\,, \quad \bar \omega^{nm}_{\dot \alpha \dot \beta}(I|x) = \bar \omega_{\dot \alpha \dot \beta}(x)\delta^{nm}\,, \quad e^{nm}_{\alpha \dot \alpha}(I|x) = e_{\alpha \dot \alpha}(x)\delta^{nm}\,,
\end{equation}
where $\delta^{nm}$ is invariant under the action of any Coxeter group, as they are subgroups of $O(p)$, equations (\ref{e:AdS 1})-(\ref{e:AdS 3}) yield
\begin{gather}
    \d_x \omega_{\alpha\beta} + \varepsilon^{\gamma \lambda}\omega_{\alpha\gamma}\wedge \omega_{\beta\lambda}  + \bar \varepsilon^{\dot \alpha \dot \beta} e_{\alpha \dot \alpha}\wedge e_{\beta \dot \beta} = 0\,, \\
    \d_x \bar\omega_{\dot\alpha\dot\beta} + \bar\varepsilon^{\dot\gamma \dot\lambda} \bar\omega_{\dot\alpha\dot\gamma}\wedge\bar\omega_{\dot\beta\dot\lambda} +  \varepsilon^{\alpha \beta} e_{\alpha \dot \alpha}\wedge e_{\beta \dot \beta} = 0\,, \\
    \d_x e_{\alpha \dot \alpha} +  \varepsilon^{\gamma \lambda}\omega_{\alpha\gamma}\wedge e_{\lambda \dot \alpha} + \bar\varepsilon^{\dot\gamma \dot\lambda}\bar\omega_{\dot\alpha\dot\gamma}\wedge e_{\alpha \dot \lambda}  = 0\,,
\end{gather}
which encode $AdS_4$ spin-connections $\omega_{\alpha\beta}, \bar\omega_{\dot\alpha\dot\beta}$ and vierbein $e_{\alpha \dot \alpha}$. Therefore, in a general CHS theory, $AdS_4$  is represented by a $dx$ one-form
\begin{equation}\label{e:AdS connection}
    \Omega_{AdS}(Y|x) = -\frac{i}{4}\delta^{nm}\bigg(\omega_{\alpha\beta}(x)y^\alpha_n y^\beta_m + \bar \omega_{\dot \alpha \dot \beta}(x) \bar y^{\dot \alpha}_n \bar y^{\dot \beta}_m + 2 e_{\alpha \dot \alpha}(x) y^\alpha_n \bar y^{\dot \alpha}_m\bigg)\,.
\end{equation}

It is worth noting that the  $AdS_4$ connection (\ref{e:AdS connection}) has no explicit dependence on idempotents $I_n$, which means that the covariant derivative preserves the filtration of the fields with respect to idempotents. This happened because we introduced a set of idempotents in a way that does not distinguish between holomorphic $y^\alpha_n, z^\alpha_n, \hat{k}_v$ and anti-holomorphic $\bar{y}^{\dot{\alpha}}_n, \bar{z}^{\dot{\alpha}}_n, \hat{\bar{k}}_v$ variables and Klein operators. One may consider a model $\mathcal{C}\times \mathcal{C}$ in $4d$ space with doubled set of idempotents $I_n, \bar{I}_n$ and even find a solution of (\ref{e:zero_curv}) corresponding to the $AdS_4$, that has an explicit dependence on idempotents $I_n, \bar{I}_n$. However, the analysis of the lower-rank states \cite{Vasiliev:2018zer} and the $AdS_4$ covariant derivative shows that such model cannot be interpreted as a generalization of the standard $4d$ HS theory, but rather being a product of the two $3d$ ones. Due to (\ref{e:KI comm}), both $I_n$ and $\bar{I}_n$ commute with dressed Klein operators $\hat{k}_v$ and $\hat{\bar k}_v$ and $I_n - \bar{I}_n$ generates an ideal $\mathcal{J}$ of the $\mathcal{C}\times \mathcal{C}$ system. In the model $(\mathcal{C}\times \mathcal{C})/\mathcal{J}$ the lowest states are associated with $4d$ massless fields represented by functions of a single copy of oscillators $y^\alpha_n, z^\alpha_n, \bar{y}^{\dot \alpha}_n, \bar{z}^{\dot \alpha}_n$ and $I_n$, \ie the fields $\omega$ and $C$ -- lowest-rank $Z$-independent parts of the $W$ and $B$ fields
\begin{gather}
\omega=\sum_{i}^{p} \omega\left(y_i, \hat{k}_i ; \bar{y}_i, \hat{\bar{k}}_i | x\right) * I_i, \quad \omega\left(y_i, \hat{k}_i ; \bar{y}_i, \hat{\bar{k}}_i | x\right)=\omega\left(y_i,-\hat{k}_i ; \bar{y}_i,-\hat{\bar{k}}_i | x\right)\,, \\
C=\sum_{i}^{p} C\left(y_i, \hat{k}_i ; \bar{y}_i, \hat{\bar{k}}_i | x\right) * I_i, \quad C\left(y_i, \hat{k}_i ; \bar{y}_i, \hat{\bar{k}}_i | x\right)=-C\left(y_i,-\hat{k}_i ; \bar{y}_i,-\hat{\bar{k}}_i | x\right)\,
\end{gather}
describe the massless fields of standard HS theory. Therefore, in the sequel we use a set of idempotents $I_n$ that does not distinguish between holomorphic and anti-holomorphic sectors.

An interesting observation is that zero-curvature equation (\ref{e:zero_curv}) admits a set of solutions parameterized by a $SO(p,\mathds{R})$ rotation with an explicit dependence on idempotents which still encode $AdS_4$ background geometry. The representative of such family has a form
\begin{equation}
    \Omega_{AdS}(Y,I|x|A) = -\frac{i}{4}\bigg(\omega_{\alpha\beta}(x)\delta^{nm} y^\alpha_n y^\beta_m + \bar \omega_{\dot \alpha \dot \beta}(x)\delta^{nm} \bar y^{\dot \alpha}_n \bar y^{\dot \beta}_m + 2 e_{\alpha \dot \alpha}(x) A^n{}_m y^\alpha_n \bar y^{\dot \alpha}{}^m\bigg)\prod^{p}_{j=1}I_{j}\,,
\end{equation}
where
\begin{equation}
     A^T A = \mathbb{1}\,, \quad A \in SO(p,\mathds{R})\,.
\end{equation}
This family of solution exists due to the presence of $SO(p,\mathds{R})$-invariant contraction between auxiliary parameters of integration $S^A_n$ and $T^A_n$ in the star product (\ref{e:star product}). It allows us to redefine variables $y^\alpha_n$ and $\bar{y}^{\dot\alpha}_n$ to absorb $SO(p,\mathds{R})$-rotation and return to the curvature (\ref{e:AdS connection}) multiplied by the full set of idempotents. However, such redefinition of variables will change the action of Klein operators $\hat{K}_v$ and affect the structure of underlying modules. Therefore, we naively have a set of nonequivalent vacua to study. Fortunately, the requirement of anti-hermicity of the connection under the conjugation $y^\dagger = \bar{y}$ and the preservation of the field filtration by the covariant derivative rule out any vacuum connection with a non-trivial $A$. Thus, we are left with the connection (\ref{e:AdS connection}).

\section{Covariant derivatives and modules}\label{Derivative and modules}

\subsection{Covariant derivative}
In this section we analyze the covariant derivative
\begin{equation}
    D_\Omega (\bullet)= \d_x (\bullet) + [\Omega_{AdS}, \bullet]_*\,
\end{equation}
built from $\Omega_{AdS}(Y|x)$ acting on various related modules.

After some calculations involving star product (\ref{e:star product}) and commutation properties of the dressed Klein operators $\hat{k}_v, \hat{\bar k}_v$ we get
\begin{multline}\label{e: covariant derivative}
    D_\Omega f(Y,I;\hat{k},\hat{\bar k}|x) = \bigg[D_L + \frac{1}{2}\delta^{nm}e^{\alpha \dot\alpha}\bigg(\mathbb{1}^k_n \bar{\mathbb{1}}^l_m + R(k){}^k_n \bar{R}(\bar{k}){}^l_m \bigg)(y_{\alpha k} \hat{\bar\partial}_{\dot\alpha l} + \bar y_{\dot\alpha l} \hat{\partial}_{\alpha k}) - \\
    -\frac{i}{2}\delta^{nm}e^{\alpha \dot\alpha}\bigg(\mathbb{1}^k_n \bar{\mathbb{1}}^l_m - R(k){}^k_n \bar{R}(\bar{k}){}^l_m \bigg)(y_{\alpha k}\bar y_{\dot\alpha l} -  \hat\partial_{\alpha k}\hat{\bar\partial}_{\dot\alpha l})\bigg]f(Y,I;\hat{k},\hat{\bar k}|x)\,,
\end{multline}
\begin{equation}
    D_L f(Y,I;\hat{k},\hat{\bar k}|x) :=  \d_x f(Y,I;\hat{k},\hat{\bar k}|x) + \delta^{nm}\bigg(\omega^{\alpha \beta} y_{\alpha n}\partial_{\beta m} + \bar \omega^{\dot\alpha \dot\beta} \bar y_{\dot\alpha n}\bar\partial_{\dot\beta m}\bigg)f(Y,I;\hat{k},\hat{\bar k}|x)\,,
\end{equation}
where $\mathbb{1}^k_n$ and $\bar{\mathbb{1}}^l_m$ are identity matrices, $\hat{k}$ and $\hat{\bar k}$ are products of some elementary dressed Klein operators $\hat{k}_v$ and $\hat{\bar k}_v$
or equal to the unity element, matrices $R(k){}^k_n$ and $\bar{R}(\bar{k}){}^l_m$ are reflections in the root space which correspond to the products of elementary dressed Klein operators $\hat{k}$ and $\hat{\bar k}$ (identity matrix in case of $\hat{k} = 1$). In the sequel, we often write $R{}^k_n$ and $\bar{R}{}^l_m$ omitting the dependence on the Klein operators if its source is obvious.

Let us stress that Lorenz covariant derivative $D_L$ acquires its canonical form due to the $sl(2)$ invariance of the framed Cherednik algebra (\ref{e: sl2 invariance}). Notice that one can use unhatted derivatives $\partial_{\alpha n}$ in $D_L$ due to (\ref{e: qIK1}).

The form of equation (\ref{e: covariant derivative}) can be expected considering $\delta^{nm}$ in (\ref{e:AdS connection}) is an invariant $O(p)$ metric ($p$ being the rank of the Coxeter group). Since the matrices $R(k)$, generated by Coxeter elements $\hat{k}_v$, also reside in the $O(p)$ group, any combination $\delta^{nm} R(k){}^k_n \bar{R}(\bar{k}){}^l_m $ can be rewritten as a linear combination of $\delta^{nl} R(k){}^k_n$, \ie the anti-holomorphic variables do not add new independent equations. Thus, the total number of independent covariant derivatives and thus equations at the linear level is bound to be equal to the order of the Coxeter group in question. This statement holds only in the linear case, as different fields which obey the same linearized equations in general obey different field equations beyond the free field approximation.

The above reasoning can be illustrated by the standard $4d$ HS theory with the Coxeter group $A_1 \cong \mathds{Z}_2$. The group $A_1$ has a one-dimensional root space and a single element $\hat{k}$ that changes a sign of the unique root vector. Thence, covariant derivative (\ref{e: covariant derivative}) reduces to the two cases
\begin{gather}
    D_\Omega f(Y|x) = D_L f(Y|x) + e^{\alpha \dot\alpha}(y_{\alpha} \bar\partial_{\dot\alpha } + \bar y_{\dot\alpha } \partial_{\alpha })f(Y|x)\,, \label{e: adj standard}\\
    D_\Omega (f(Y|x)\hat{k}) = D_L f(Y|x)\hat{k} - i e^{\alpha \dot\alpha}(y_{\alpha}\bar y_{\dot\alpha} - \partial_{\alpha}\bar\partial_{\dot\alpha})f(Y|x)\hat{k} \label{e: tw standard}\,.
\end{gather}
The first case is an adjoint module in which physical fields $\omega(Y;\hat{K}|x)$ are valued and the second one describes a twisted-adjoint module of physical fields $C(Y;\hat{K}|x)$. It is well-known in a standard HS theory that the adjoint module is non-unitary since it is an infinite sum of finite (and thus non-unitary) modules of a non-compact algebra while the twisted-adjoint module is an infinite sum of infinite modules, complex equivalent to the unitary ones used to describe single particle states \cite{Vasiliev:2001zy}.

In a general CHS model a mixing of adjoint and twisted-adjoint modules occurs. Moreover, some modules do not have a form of the tensor products of standard HS adjoint and twisted-adjoint modules. We will refer to those modules that are not isomorphic to the tensor product of standard adjoint and twisted-adjoint modules as \textit{entangled} and to those that are as \textit{disentangled}. The structure of the resulting CHS module depends on the properties of matrices
\begin{equation} \label{def:orthogonalProjectors}
P_{\pm}^{kl}=\frac{1}{2}\delta^{nm}\bigg(\mathbb{1}^k_n \bar{\mathbb{1}}^l_m \pm R(k){}^k_n \bar{R}(\bar{k}){}^l_m \bigg).
\end{equation}
Matrices $P_{\pm}^{kl}$ resemble a pair of orthogonal projectors. However, to be a set of projectors the condition $(R\bar{R}{}^T)^2 = \mathbb{1}$ must be met. In that case a pair $P_{\pm}^{kl}$ are orthogonal projectors and the corresponding module disentangle into the product of standard HS modules.

\begin{proposition}[Disentanglement criterion]
\label{Disentanglement criteria}
    $(R\bar{R}{}^T)^2 = \mathbb{1}$ is necessary and sufficient condition for the module to be disentangled.
\end{proposition}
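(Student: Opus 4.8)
The plan is to reduce the proposition to a statement about the single orthogonal matrix $M^{kl}:=\delta^{nm}R(k){}^k{}_n\,\bar R(\bar k){}^l{}_m=(R\bar R^T)^{kl}$, in terms of which \eqref{def:orthogonalProjectors} reads $P_\pm=\tfrac12(\mathbb{1}\pm M)$. First I would record that $M$ is orthogonal: since $R$ and $\bar R$ are products of elementary reflections they lie in $O(p)$, whence $M^TM=\bar R\,R^TR\,\bar R^T=\mathbb{1}$. The algebraic core is then a one-line computation,
\[
P_\pm^2-P_\pm=\tfrac14\left(M^2-\mathbb{1}\right),\qquad P_+P_-=P_-P_+=\tfrac14\left(\mathbb{1}-M^2\right),\qquad P_++P_-=\mathbb{1}\,,
\]
so that $P_\pm$ form a complete pair of mutually orthogonal idempotents if and only if $M^2=\mathbb{1}$, i.e. $(R\bar R^T)^2=\mathbb{1}$. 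This disposes of the equivalence ``$P_\pm$ projectors $\Leftrightarrow (R\bar R^T)^2=\mathbb{1}$''; what remains is to tie the projector property to disentanglement.

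For the sufficiency direction I would use that an orthogonal involution is automatically symmetric ($M^T=M^{-1}=M$), hence diagonalisable over $\mathds{R}$ with eigenvalues $\pm1$, splitting the root space as $\mathds{R}^p=V_+\oplus V_-$ with $P_\pm$ the orthogonal projectors onto $V_\pm$. Passing to an eigenbasis — a real orthogonal rotation of the oscillators, admissible because it respects $y^\dagger=\bar y$ — the $P_-$ term of \eqref{e: covariant derivative} drops out on the $V_+$ oscillators, reproducing the adjoint derivative \eqref{e: adj standard}, while on the $V_-$ oscillators the $P_+$ term drops out, reproducing the twisted-adjoint derivative \eqref{e: tw standard}. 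The module therefore factorises into a tensor product of standard adjoint modules (one per $V_+$ direction) and standard twisted-adjoint modules (one per $V_-$ direction), i.e. it is disentangled.

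For the converse I would invoke the real canonical form of the orthogonal matrix $M$: it decomposes $\mathds{R}^p$ into $\pm1$ eigenlines together with two-dimensional blocks on which $M$ is a rotation by an angle $\theta\notin\{0,\pi\}$, with $M^2=\mathbb{1}$ holding precisely when no genuine rotation block occurs. On such a block $P_\pm$ fail to be idempotent and the adjoint ($P_+$) and twisted-adjoint ($P_-$) pieces of \eqref{e: covariant derivative} remain irreducibly coupled. The point to be established is that no admissible redefinition removes the angle $\theta$. A change of oscillator frame $y\mapsto O_1 y$, $\bar y\mapsto O_2\bar y$ with $O_1,O_2\in O(p)$ preserves the oscillator commutators and sends $M\mapsto O_1 R O_1^T\,(O_2\bar R O_2^T)^T$, but the reality condition $y^\dagger=\bar y$ forces $O_2=O_1$, so that $M\mapsto O_1 M O_1^T$ transforms by orthogonal conjugation, under which the eigenangles are invariant. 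A block with $\theta\notin\{0,\pi\}$ can thus never be rotated away, so the module is entangled. Combining the two equivalences yields the proposition.

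The step I expect to be the main obstacle is precisely this converse, and within it the decisive role of the reality condition. If independent rotations $O_1\neq O_2$ were permitted, every orthogonal $M$ could be brought to $\pm\mathbb{1}$ form and all modules would disentangle, so the proposition would be false; the whole content lies in showing that anti-hermiticity $y^\dagger=\bar y$ reduces the redefinition freedom to the diagonal $O(p)$ and thereby promotes the eigenangles of $M=R\bar R^T$ to genuine invariants. Making this rigorous — in particular verifying that after the redefinition the dressed Klein operators still act as the prescribed reflections $R,\bar R$, so that the normal form is not enlarged by some exotic automorphism — is where the real work sits.
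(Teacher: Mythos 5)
Your proof is correct, and its skeleton --- reducing disentanglement to the real spectral theory of the single matrix $M=R\bar R{}^T$ --- is the same as the paper's, but you execute both directions differently, and in each case your version carries more information. For sufficiency the paper argues via the minimal polynomial (if $(R\bar R{}^T)^2=\mathbb{1}$ then $q_{min}$ divides $t^2-1$, so $M$ is diagonalizable over $\mathds{R}$ with eigenvalues $\pm1$), which only yields diagonalizability by \emph{some} invertible real matrix; you instead observe that an orthogonal involution is symmetric and hence \emph{orthogonally} diagonalizable, which is precisely what is needed for the diagonalizing frame rotation to preserve the $\delta^{nm}$ contraction in the star product (\ref{e:star product}) and the reality condition $y^\dagger=\bar y$ --- a point the paper leaves implicit. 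For necessity the paper gives a one-line direct argument (disentangled $\Rightarrow$ $M=\mathrm{diag}(+1,\dots,-1)$ $\Rightarrow$ $M^2=\mathbb{1}$), whereas you prove the contrapositive: a rotation block with angle $\theta\notin\{0,\pi\}$ survives conjugation by the diagonal $O(p)$, which is the full residual redefinition freedom once $y^\dagger=\bar y$ forces $O_1=O_2$. That is a genuine strengthening, because it explains why the holomorphic-only changes of variables that the paper itself exhibits after the exponential ansatz (\ref{def:expAnsatz}) --- which would formally ``disentangle'' any module --- do not count as isomorphisms. Your closing caveat is also fair: both your argument and the paper's tacitly identify module isomorphisms with linear oscillator redefinitions compatible with the star product and conjugation, and the paper never addresses this, so on the point where both proofs are thinnest yours is the more explicit one.
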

\begin{proof}
Indeed, if the module is a product of standard HS modules then
\begin{equation*}
    R\bar{R}{}^T=\text{diag}(+1,...,+1,-1,...,-1)
\end{equation*}
and $(R\bar{R}{}^T)^2 = \mathbb{1}$.\\
If $(R\bar{R}{}^T)^2 = \mathbb{1}$ then the minimal polynomial of $R\bar{R}{}^T$ is either $q_{min}(t)= t \pm 1$ or $q_{min}(t) = t^2 - 1$. In the first case $R\bar{R}{}^T = \pm \mathbb{1}$. In the second case $R\bar{R}{}^T$ is diagonalizable with eigenvalues $\lambda = \pm 1$. Since matrices $R\bar{R}{}^T$ and eigenvalues $\lambda$ are real, the diagonalization of $R\bar{R}{}^T$ occurs over the $\mathds{R}$-field.
\end{proof}

Disentangled modules exist in any CHS theory since $(R_v)^2 = \mathbb{1}$ for any root vector $v$. In general $(R\bar{R}{}^T)^2 \neq \mathbb{1}$ for any non-trivial (\textit{i.e.}, beyond $A_1 \cong \mathds{Z}_2$) Coxeter group, therefore the resulting CHS module is a product of standard HS modules and infinite-dimensional entangled modules of the new type. Such modules appear in all CHS models with a non-trivial $\mathcal{C}$. For example, a group $B_{p}$ with $p\geq3$ contains cycles of length $n$ with $3\leq n\leq p$ and the square of the $n$-cycle is not an identity transformation which means that the corresponding module is entangled. Thus, a question of unitarizability of the CHS modules arises. In the sequel of this section we perform a full classification of unitary and non-unitary $B_2$ modules.

\subsection{Covariant constancy equations in the $B_2$ theory}

The root system of $B_2$ consists of two conjugacy classes
\begin{equation}\label{e: B2 conjugacy classes}
    \mathcal{R}_1 = \{\pm e^1, \pm e^2\}\,, \quad \mathcal{R}_2 = \{\pm e^1 \pm e^2\}\,.
\end{equation}
A generating set of $B_2$ is $\{R_{e^i}, R_{e^1-e^2}\}$. Holomorphic Klein operators associated with the generating reflections are $k_i$ and $k_{12}$. The holomorphic group $B_2$ is generated by
\begin{equation}
   \Bigl\{ k_i, k_{12}| k_i^2 = 1, k_{12}^2 = 1, k_1 k_{12} = k_{12} k_2, k_2 k_{12} = k_{12} k_1, i \in \{1,2\} \Bigl\} \,.
\end{equation}
It is useful to denote the product of all generators as
\begin{equation}
k^+_{12} := k_1 k_2 k_{12}\,
\end{equation}
and view $k^+_{12}$ as an additional redundant generator that corresponds to the root vector $e^1+e^2 \in \mathcal{R}_2$ (reflection with respect to $e^1+e^2$ is equivalent to the composition of reflections with respect to $e^1-e^2$ and basis vectors $e^i$). By doing this, we equate the number of Klein operators corresponding to the conjugacy classes $\mathcal{R}_i$ ($\mathcal{R}_1$ corresponds to two Klein operators (reflections) $\{k_1, k_2\}$ and $\mathcal{R}_2$ corresponds to $\{k_{12}, k_{12}^+\}$).

The reflection matrices $R(k)$ in (\ref{e: covariant derivative}) are
\begin{gather}
R(1) = \begin{pmatrix}
1 & 0 \\
0 & 1
\end{pmatrix}\,,
R(k_1) = \begin{pmatrix}
-1 & 0 \\
0 & 1
\end{pmatrix}\,,
R(k_2) = \begin{pmatrix}
1 & 0 \\
0 & -1
\end{pmatrix}\,,
R(k_{12}) = \begin{pmatrix}
0 & 1 \\
1 & 0
\end{pmatrix}\,, \label{e: B2 R first}\\
R(k_1 k_2) = \begin{pmatrix}
-1 & 0 \\
0 & -1
\end{pmatrix}\,,
R(k_1 k_{12}) = \begin{pmatrix}
0 & -1 \\
1 & 0
\end{pmatrix}\,,
R(k_2 k_{12}) = \begin{pmatrix}
0 & 1 \\
-1 & 0
\end{pmatrix}\,,\\
R(k^+_{12}) = \begin{pmatrix}
0 & -1 \\
-1 & 0
\end{pmatrix}\,. \label{e: B2 R last}
\end{gather}
Analogous matching of the reflection matrices $\bar R(\bar k)$ takes place for the anti-holomorphic Klein operators $\bar k_i, \bar k_{12}$.

In the $B_2$ HS model, the zero-form field $C(Y_1,Y_2,I;\hat{k},\hat{\bar k}|x)$ has $64$  component fields in the $I_1 I_2$ sector
\begin{equation}
    C(Y_1,Y_2;\hat{k},\hat{\bar k}|x)*I_1 I_2 = \sum_{a,b,c,\bar a,\bar b,\bar c = 0}^1 C_{abc\bar a \bar b \bar c}(Y_1,Y_2|x)*I_1 I_2*\hat k_1^a \hat k_2^b \hat k_{12}^c \hat{\bar k}{}_1^{\bar a}\hat{\bar k}{}_2^{\bar b}\hat{\bar k}{}_{12}^{\bar c}\,
\end{equation}
and $4$ component fields in each $I_i$ sector
\begin{equation}
    C(Y_i;\hat{k},\hat{\bar k}|x)*I_i = \sum_{a,\bar a = 0}^1 C_{a \bar a}(Y_i|x)*I_i*\hat k_i^a \hat{\bar k}{}_i^{\bar a}\,
\end{equation}
that naively leads to $64$ linearized  covariant constancy equations (\ref{e: covariant derivative}). However, as discussed in the previous section, the actual number of types of independent equations is equal to the order of the Coxeter group.

 In particular, in the case of $B_2$,  all possible matrix products $R(k)\bar R(\bar k){}^T$ group into the $8$ categories
\begin{gather}
    R(k)\bar{R}(\bar{k}){}^T = \Biggl\{ \begin{pmatrix}
1 & 0 \\
0 & 1
\end{pmatrix}\,,
\begin{pmatrix}
-1 & 0 \\
0 & 1
\end{pmatrix}\,,
\begin{pmatrix}
1 & 0 \\
0 & -1
\end{pmatrix}\,,
\begin{pmatrix}
-1 & 0 \\
0 & -1
\end{pmatrix}\,,
\begin{pmatrix}
0 & 1 \\
1 & 0
\end{pmatrix}\,, \label{e: RbarR1} \\
\begin{pmatrix}
0 & -1 \\
-1 & 0
\end{pmatrix}\,,
\begin{pmatrix}
0 & -1 \\
1 & 0
\end{pmatrix}\,,
\begin{pmatrix}
0 & 1 \\
-1 & 0
\end{pmatrix}\, \label{e: RbarR2}\Biggr\}.
\end{gather}
Therefore, there are $8$ types of covariant constancy equations (modules)
\begin{gather}
    \bigg(D_L + e^{\alpha \dot\alpha}\sum_{i = 1}^2 (y_{\alpha i}\bar\partial_{\dot \alpha i} + \bar y_{\dot\alpha i}\partial_{\alpha i})\bigg)C(Y_1,Y_2,I;\hat{k},\hat{\bar k}|x) = 0\,, \label{e:cov1}\\
    \bigg(D_L - i e^{\alpha \dot\alpha}(y_{\alpha 1}\bar y_{\dot \alpha 1} - \partial_{\alpha 1} \bar \partial_{\dot\alpha 1}) + e^{\alpha \dot\alpha}(y_{\alpha 2}\bar\partial_{\dot \alpha 2} + \bar y_{\dot\alpha 2}\partial_{\alpha 2})\bigg)C(Y_1,Y_2,I;\hat{k},\hat{\bar k}|x) = 0\,, \label{e:cov2}\\
    \bigg(D_L + e^{\alpha \dot\alpha}(y_{\alpha 1}\bar\partial_{\dot \alpha 1} + \bar y_{\dot\alpha 1}\partial_{\alpha 1}) - i e^{\alpha \dot\alpha}(y_{\alpha 2}\bar y_{\dot \alpha 2} - \partial_{\alpha 2} \bar \partial_{\dot\alpha 2}) \bigg)C(Y_1,Y_2,I;\hat{k},\hat{\bar k}|x) = 0\,, \label{e:cov3}\\
    \bigg(D_L - i e^{\alpha \dot\alpha}\sum_{i = 1}^2 (y_{\alpha i}\bar y_{\dot \alpha i} - \partial_{\alpha i} \bar \partial_{\dot\alpha i})\bigg)C(Y_1,Y_2,I;\hat{k},\hat{\bar k}|x) = 0\,, \label{e:cov4}
\end{gather}
\begin{multline}
    \bigg(D_L + \frac{1}{2}e^{\alpha \dot \alpha}\bigg[(y_{\alpha 1} + y_{\alpha 2})(\bar\partial_{\dot \alpha 1} + \bar\partial_{\dot \alpha 2}) + (\bar y_{\dot \alpha 1} + \bar y_{\dot \alpha 2})(\partial_{\alpha 1} + \partial_{\alpha 2})\bigg] - \\- \frac{i}{2}e^{\alpha \dot \alpha}\bigg[(y_{\alpha 1} - y_{\alpha 2})(\bar y_{\dot\alpha 1} - \bar y_{\dot\alpha 2}) - (\partial_{\alpha 1} - \partial_{\alpha 2})(\bar\partial_{\dot \alpha 1} - \bar\partial_{\dot \alpha 2})\bigg]\bigg)C(Y_1,Y_2,I;\hat{k},\hat{\bar k}|x) = 0\,, \label{e:cov5}
\end{multline}
\begin{multline}
    \bigg(D_L + \frac{1}{2}e^{\alpha \dot \alpha}\bigg[(y_{\alpha 1} - y_{\alpha 2})(\bar\partial_{\dot \alpha 1} - \bar\partial_{\dot \alpha 2}) + (\bar y_{\dot \alpha 1} - \bar y_{\dot \alpha 2})(\partial_{\alpha 1} - \partial_{\alpha 2})\bigg] - \\- \frac{i}{2}e^{\alpha \dot \alpha}\bigg[(y_{\alpha 1} + y_{\alpha 2})(\bar y_{\dot\alpha 1} + \bar y_{\dot\alpha 2}) - (\partial_{\alpha 1} + \partial_{\alpha 2})(\bar\partial_{\dot \alpha 1} + \bar\partial_{\dot \alpha 2})\bigg]\bigg)C(Y_1,Y_2,I;\hat{k},\hat{\bar k}|x) = 0\,, \label{e:cov6}
\end{multline}
\begin{multline}
    \bigg(D_L + \frac{1}{2}e^{\alpha \dot \alpha}\bigg[y_{\alpha 1}(\bar\partial_{\dot\alpha 1} - \bar\partial_{\dot\alpha 2}) + y_{\alpha 2}(\bar\partial_{\dot\alpha 1} + \bar\partial_{\dot\alpha 2}) + \bar y_{\dot\alpha 1}(\partial_{\alpha 1} + \partial_{\alpha 2}) - \bar y_{\dot \alpha 2}(\partial_{\alpha 1} - \partial_{\alpha 2})\bigg] - \\- \frac{i}{2}e^{\alpha \dot \alpha}\bigg[y_{\alpha 1}(\bar y_{\dot\alpha 1} + \bar y_{\dot\alpha 2}) - y_{\alpha 2}(\bar y_{\dot\alpha 1} - \bar y_{\dot\alpha 2}) - \partial_{\alpha 1}(\bar\partial_{\dot\alpha 1} + \bar\partial_{\dot\alpha 2}) + \partial_{\alpha 2}(\bar\partial_{\dot\alpha 1} - \bar\partial_{\dot\alpha 2})\bigg]\bigg)C(Y_1,Y_2,I;\hat{k},\hat{\bar k}|x) = 0\,,\label{e:cov7}
\end{multline}
\begin{multline}
    \bigg(D_L + \frac{1}{2}e^{\alpha \dot \alpha}\bigg[y_{\alpha 1}(\bar\partial_{\dot\alpha 1} + \bar\partial_{\dot\alpha 2}) - y_{\alpha 2}(\bar\partial_{\dot\alpha 1} - \bar\partial_{\dot\alpha 2}) + \bar y_{\dot\alpha 1}(\partial_{\alpha 1} - \partial_{\alpha 2}) + \bar y_{\dot \alpha 2}(\partial_{\alpha 1} + \partial_{\alpha 2})\bigg] - \\- \frac{i}{2}e^{\alpha \dot \alpha}\bigg[y_{\alpha 1}(\bar y_{\dot\alpha 1} - \bar y_{\dot\alpha 2}) + y_{\alpha 2}(\bar y_{\dot\alpha 1} + \bar y_{\dot\alpha 2}) - \partial_{\alpha 1}(\bar\partial_{\dot\alpha 1} - \bar\partial_{\dot\alpha 2}) - \partial_{\alpha 2}(\bar\partial_{\dot\alpha 1} + \bar\partial_{\dot\alpha 2})\bigg]\bigg)C(Y_1,Y_2,I;\hat{k},\hat{\bar k}|x) = 0\,,\label{e:cov8}
\end{multline}
corresponding to the matrices (\ref{e: RbarR1}), (\ref{e: RbarR2}) reading from left to right, from top to bottom. It is worth noting that unhatted derivatives $\partial_{\alpha i}$ appear in the covariant constancy equations due to the properties (\ref{e: qIK1}) and (\ref{e:KI absorb}).

In terms of types of modules, all but the last two modules are disentangled. However, there is a way to represent them as deformed disentangled via a nonlocal field redefinition. More precisely, exponential ansatzes
\begin{align}
    C(Y_1,Y_2,I;\hat{k},\hat{\bar k}|x) &= \exp\bigg(-i y_{1\alpha}y_2^{\alpha} + i \bar{y}_{1 \dot\alpha}\bar{y}_2^{\dot\alpha}\bigg) \tilde{C}(Y_1,Y_2,I;\hat{k},\hat{\bar k}|x)\,, \label{def:expAnsatz} \\
    C(Y_1,Y_2,I;\hat{k},\hat{\bar k}|x) &= \exp\bigg(i y_{1\alpha}y_2^{\alpha} - i \bar{y}_{1 \dot\alpha}\bar{y}_2^{\dot\alpha}\bigg) \tilde{C}(Y_1,Y_2,I;\hat{k},\hat{\bar k}|x)\,
\end{align}
transform entangled equations into
\begin{equation}
    \bigg(D_L - \frac{i}{2}e^{\alpha \dot \alpha}\bigg[2 y_{\alpha 1}(\bar y_{\dot\alpha 1} + \bar y_{\dot\alpha 2}) - 2y_{\alpha 2}(\bar y_{\dot\alpha 1} - \bar y_{\dot\alpha 2}) - \partial_{\alpha 1}(\bar\partial_{\dot\alpha 1} + \bar\partial_{\dot\alpha 2}) + \partial_{\alpha 2}(\bar\partial_{\dot\alpha 1} - \bar\partial_{\dot\alpha 2})\bigg]\bigg)\tilde{C}(Y_1,Y_2,I;\hat{k},\hat{\bar k}|x) = 0\,,
\end{equation}
\begin{equation}
    \bigg(D_L - \frac{i}{2}e^{\alpha \dot \alpha}\bigg[2 y_{\alpha 1}(\bar y_{\dot\alpha 1} - \bar y_{\dot\alpha 2}) + 2 y_{\alpha 2}(\bar y_{\dot\alpha 1} + \bar y_{\dot\alpha 2}) - \partial_{\alpha 1}(\bar\partial_{\dot\alpha 1} - \bar\partial_{\dot\alpha 2}) - \partial_{\alpha 2}(\bar\partial_{\dot\alpha 1} + \bar\partial_{\dot\alpha 2})\bigg]\bigg)\tilde{C}(Y_1,Y_2,I;\hat{k},\hat{\bar k}|x) = 0\,.
\end{equation}
By a linear change limited to holomorphic variables, for example, $(y^{'}_{\alpha 1} = y_{\alpha 1} - y_{\alpha 2}\,; y^{'}_{\alpha 2} = y_{\alpha 1} + y_{\alpha 2})$ in the first equation, one can transform the remaining equation on $\tilde{C}(Y_1,Y_2, I;\hat{k},\hat{\bar k}|x)$ into the equation (\ref{e:cov4}), which describes a tensor product of two standard twisted-adjoint modules. While such a transformation is obviously inconsistent with the conjugation rules of $Y^A_n$ (\ie oscillators $y^{'}_{\alpha i}$ and $\bar{y}_{\dot\alpha i}$ are not conjugated), our two entangled modules resemble a product of two twisted-adjoint ones entangled by the exponential factor. This resemblance does not imply an isomorphism between the modules and, as will be shown later, the modules have different properties, in particular, in regards to unitarity. The suggested ansatzes have to be treated with caution as the star product behavior of the exponential factors is ill-defined. Fortunately, this problem is a feature of the star product in the current approach with oscillators $Y^A_n$ that can be avoided in the linear order after a transition to the doubled set of oscillators as will be shown in Section \ref{Fock Space Realization}.

\subsection{Boundary Conditions} \label{BoundaryConditions}
Equations (\ref{e:cov1})-(\ref{e:cov4}) describe tensor products of adjoint and twisted-adjoint modules of the standard $4d$ HS theory that will be denoted as $\{M_{adj\otimes adj}, M_{tw\otimes adj}, M_{adj\otimes tw}, M_{tw\otimes tw}\}$. In the equations (\ref{e:cov5}) and (\ref{e:cov6}) one can perform a change of variables
\begin{equation}\label{e:change of variables y}
    y_{\alpha +} = \frac{1}{\sqrt{2}}(y_{\alpha 1} + y_{\alpha 2})\,,  \quad y_{\alpha -} = \frac{1}{\sqrt{2}}(y_{\alpha 1} - y_{\alpha 2})\,, \quad
    \bar{y}_{\dot\alpha +} = \frac{1}{\sqrt{2}}(\bar y_{\dot\alpha 1} + \bar y_{\dot\alpha 2})\,, \quad \bar{y}_{\dot\alpha -} = \frac{1}{\sqrt{2}}(\bar y_{\dot\alpha 1} - \bar y_{\dot\alpha 2})\,,
\end{equation}
to transform them into  equations (\ref{e:cov2}) and (\ref{e:cov3}). The existence of this change of variables is attributed to the fact that for these specific cases of reflection matrices $R\bar{R}{}^T$ operators  $P_{\pm}^{kl}$ (\ref{def:orthogonalProjectors}) are orthogonal projectors. Hence, modules (\ref{e:cov5}) and (\ref{e:cov6}) also describe tensor products of adjoint and twisted-adjoint modules of the standard $4d$ HS theory in appropriate variables and therefore equations (\ref{e:cov1})-(\ref{e:cov6}) correspond to the unitary modules, provided that the adjoint part is eliminated (set to be a constant) by imposing appropriate boundary conditions. An interesting observation is that the change of variables (\ref{e:change of variables y}) swaps conjugacy classes $\mathcal{R}_1$ and $\mathcal{R}_2$ and therefore swaps Klein operator $\{\hat{k}_1,\hat{k}_2\}$ and $\{\hat{k}_{12},\hat{k}^+_{12}\}$ in the $I_1 I_2$ sector. This is a unique feature of the $B_2$ group because for a general group $B_p$ conjugacy classes $\mathcal{R}_1$ and $\mathcal{R}_2$ have different sizes.

To impose the required boundary conditions, let us consider adjoint and twisted-adjoint equations (\ref{e: adj standard}) and (\ref{e: tw standard}) in the standard HS theory. In the stereographic coordinates for the hyperboloid realization of $AdS_4$
\begin{equation}
\begin{aligned}
& e_{0 \underline{n}}{ }^{\alpha \dot{\beta}}=-z^{-1} \sigma_{\underline{n}}{ }^{\alpha \dot{\beta}}\,,\\
& \omega_{0 \underline{n}}^{\alpha \beta}=- \lambda^2(2z)^{-1}(\sigma_{\underline{n}}^{\alpha \dot{\beta}} x^\beta{ }_{\dot{\beta}} + \sigma_{\underline{n}}^{\beta \dot{\beta}} x^\alpha{ }_{\dot{\beta}})\,, \\
& \bar{\omega}_{0 \underline{n}}{ }^{\dot{\alpha} \dot{\beta}}=- \lambda^2(2z)^{-1} (\sigma_{\underline{n}}{ }^{\alpha \dot{\alpha}} x_\alpha{ }^{\dot{\beta}} + \sigma_{\underline{n}}{ }^{\alpha \dot{\beta}} x_\alpha{ }^{\dot{\alpha}})\,, \\
&
\end{aligned}
\end{equation}
where $\lambda$ is an inverse radius of $AdS_4$,
\begin{equation}
x_{\alpha \dot{\beta}}=\sigma_{\alpha \dot{\beta}}^a x_a, \quad x^2=x_a x^a=\frac{1}{2} x_{\alpha \dot{\beta}} x^{\alpha \dot{\beta}}, \quad z=1+\lambda^2 x^2,
\end{equation}
and sigma-matrices $\sigma_{\alpha \dot{\beta}}^a$ are Hermitian, with the normalization $\sigma_{a\alpha \dot{\beta}} \sigma^{\alpha \dot{\beta}}_b = 2\eta_{ab}$ where $\eta_{ab} = diag(1,-1,-1,-1)$.

As was shown in \cite{Bolotin:1999fa}, $AdS_4$ connection $\Omega_{AdS}(y, \bar{y} | x)$ can be represented as
\begin{equation}
    \Omega_{AdS}(y, \bar{y} | x) = g^{-1}(y, \bar{y} | x)*\d_x g(y, \bar{y} | x)\,,
\end{equation}
where
\begin{equation}
    g(y, \bar{y} | x)=2 \frac{\sqrt{z}}{1+\sqrt{z}} \exp \left[\frac{i \lambda}{1+\sqrt{z}} x^{\alpha \dot{\alpha}} y_\alpha \bar{y}_{\dot{\alpha}}\right]
\end{equation}
with the inverse
\begin{equation}
    g^{-1}(y, \bar{y} | x)= \tilde{g}(y, \bar{y} | x) = 2 \frac{\sqrt{z}}{1+\sqrt{z}} \exp \left[-\frac{i\lambda}{1+\sqrt{z}} x^{\alpha \dot{\alpha}} y_\alpha \bar{y}_{\dot{\alpha}}\right]\,.
\end{equation}

Then the general solutions $C_{adj}(Y|x)$ of (\ref{e: adj standard}) and $C_{tw}(Y|x)$ of (\ref{e: tw standard}) are \cite{Didenko:2003aa}
\begin{equation}\label{e: general solution standard}
    C_{tw}(Y|x) = g^{-1}*C_{0tw}(Y)*\tilde{g}\,, \quad C_{adj}(Y|x) = g^{-1}*C_{0adj}(Y)*g\,,
\end{equation}
where $C_{0tw}(Y)$ and $C_{0adj}(Y)$ serve as initial data. After some calculations one can see that
\begin{equation}
    C_{tw}(Y|x) = z \exp{-i \lambda x^{\alpha\dot\alpha}y_\alpha \bar{y}_{\dot{\alpha}} + i(\sqrt{z}-1)y^\alpha p_{0\alpha} + i(\sqrt{z}-1)\bar{y}^{\dot{\alpha}} \bar{p}_{0\dot{\alpha}} -i \lambda x^{\alpha\dot\alpha}p_{0\alpha}\bar{p}_{0\dot{\alpha}}}C_{0tw}(Y)\,,
\end{equation}
\begin{equation}
    C_{adj}(Y|x) = C_{0adj}\bigg(\frac{1}{\sqrt{z}}(y^\alpha + \lambda x^{\alpha\dot\alpha}\bar{y}_{\dot{\alpha}}), \frac{1}{\sqrt{z}}(\bar{y}^{\dot\alpha} + \lambda x^{\alpha\dot\alpha} y_\alpha) \bigg)\,,
\end{equation}
where
\begin{equation}
p_{0\mu} C_0(Y|x) = C_0(Y|x)p_{0\mu} := -i \frac{\partial}{\partial y^\mu} C_0(Y|x)\,
\end{equation}
and the subscript 0 indicates that $p_{0\alpha}$ acts on the initial fields $C_{0tw}$.

The free parameters $C_{0tw}(Y)$ and $C_{0adj}(Y)$ describe all higher derivatives of the fields $C_{tw}(Y|x_0)$ and $C_{adj}(Y|x_0)$ at the point $x_0$ with $g(Y |x_0) = I$. Formula (\ref{e: general solution standard}) plays a role of the covariantized Taylor expansion reconstructing
generic solution in terms of its derivatives at $x = x_0$, which is a standard property of unfolded dynamics.

The solutions $C_{tw}(Y|x)$ and $C_{adj}(Y|x)$ have a different behavior in the limit $z\rightarrow 0$, \ie approaching the boundary of $AdS_4$. Since the initial data $C_{0tw}(Y)$ and $C_{0adj}(Y)$ are analytic functions of $Y$, the field $C_{tw}(Y|x)$ tends to zero and the field $C_{adj}(Y|x)$ blows up except the case $C_{0adj}(Y) = C_{0adj}$.

These results admit a straightforward generalization to the $B_2$ model. Obviously, in the stereographic coordinates $AdS_4$ connection (\ref{e:AdS connection}) can be represented as
\begin{equation}
    \Omega_{AdS}(Y_1, Y_2 | x) = G^{-1}(Y_1, Y_2 | x)*\d_x G(Y_1, Y_2 | x)\,,
\end{equation}
where
\begin{equation}
    G(Y_1, Y_2 | x)= g(y_1, \bar{y}_1 | x)g(y_2, \bar{y}_2 | x)\,.
\end{equation}
Then the solutions of equations (\ref{e:cov1})-(\ref{e:cov8}) have a form
\begin{equation}\label{e: solution of CovConst}
    C(Y_1, Y_2,I; \hat{K} | x) = G^{-1}*C_0(Y_1,Y_2,I)*\pi(G)*\hat{K}\,,
\end{equation}
where $\pi(\bullet)$ is an automorphism of the $B_2$ HS algebra induced by the dressed Klein operators $\hat{K}$. It is easy to see that the solutions of (\ref{e:cov1})-(\ref{e:cov6}) behave as products of the standard $C_{tw}(Y|x)$ and $C_{adj}(Y|x)$. For example, the solution to the eq.(\ref{e:cov2}) is
\begin{multline}
    C(Y_1, Y_2,I; \hat{K} | x) = z \exp[-i \lambda x^{\alpha\dot\alpha}y_{\alpha1} \bar{y}_{\dot{\alpha}1} + i(\sqrt{z}-1)y^\alpha_1 p_{0\alpha 1} + i(\sqrt{z}-1)\bar{y}^{\dot{\alpha}}_1 \bar{p}_{0\dot{\alpha}1} -i \lambda x^{\alpha\dot\alpha}p_{0\alpha 1}\bar{p}_{0\dot{\alpha} 1}]\\
    C_{0 tw\otimes adj}\bigg(y_1,\bar{y}_1,\frac{1}{\sqrt{z}}(y^\alpha_2 + \lambda x^{\alpha\dot\alpha}\bar{y}_{\dot{\alpha}2}), \frac{1}{\sqrt{z}}(\bar{y}^{\dot\alpha}_2 + \lambda x^{\alpha\dot\alpha} y_{\alpha 2}),I; \hat{K}\bigg)\,.
\end{multline}

Therefore, the boundary condition
\begin{equation}\label{e: boundary condition}
    \lim_{z\rightarrow 0} \frac{1}{\sqrt{z}}C(Y_1, Y_2,I; \hat{K} | x) = 0
\end{equation}
restricts the adjoint parts of (\ref{e:cov2})-(\ref{e:cov3}) and (\ref{e:cov5})-(\ref{e:cov6}) to constants, yielding unitarizable $B_2$ modules.

Note that equations (\ref{e:cov7}) and (\ref{e:cov8}) cannot be represented as products of standard HS modules by a change of variables since matrices $P_{\pm} =\frac{1}{2}\begin{pmatrix}
1 & \pm 1 \\
\mp 1 & 1
\end{pmatrix}$ cannot be simultaneously diagonalized over real numbers (it is easy to see that $(R\bar{R}{}^T)^2 \neq \mathbb{1}$). Thence, the question whether these infinite dimensional modules are unitarizable requires a thorough analysis. We tackle this question by the generalization of Bogolyubov transform method used in the standard HS theory \cite{Vasiliev:2001zy}, application of a plane wave solution \cite{Bolotin:1999fa} and a special ansatz.

\subsection{Fock Space Realization}\label{Fock Space Realization}

Following \cite{Vasiliev:2001zy}, we replace $Y_A^n$ oscillators by a doubled set of oscillators and reformulate linear equations in terms of Fock module valued fields. Consider the associative star-product algebra with $16$ generating elements $a_{1,2}{}_{A}$ and $b_{1,2}{}^{B}$ $(A,B \in \{1,...,4\})$. The particular star product realization of the algebra of oscillators we use represents the totally symmetric (\textit{i.e.}, Weyl) ordering.
\begin{multline}\label{e:star product ab}
    (f*g)(a,b) = \frac{1}{\pi^8}\int d^4 u_{1,2}d^4 v_{1,2} d^4 s_{1,2} d^4 t_{1,2} f(a+u,b+t)g(a+s,b+v) \times \\ \times exp\bigg(2 s_{1A} t_{1}^{A} - 2 u_{1A}v_1^{A} + 2 s_{2A} t_{2}^{A} - 2 u_{2A}v_2^{A}\bigg)\,.
\end{multline}

The Moyal star product (\ref{e:star product ab}) gives rise to the commutation relations
\begin{equation}\label{e: commutators ab (AB)}
    [a_{i}{}_{A},b_j{}^{B}]_* = \delta_{ij}\delta_{A}{}^{B}\,, \quad [a_{i}{}_{A},a_{i}{}_{B}]_* = 0\,, \quad [b_i{}^{A},b_j{}^{B}]_* = 0
\end{equation}
with $[f,g]_* = f*g - g*f$. From (\ref{e:star product ab}) it is easy to derive
\begin{gather}
    a_{iA} * = a_{iA} + \frac{1}{2}\frac{\partial}{\partial b_i^A}\,, \quad b_i^A * = b_i^A - \frac{1}{2}\frac{\partial}{\partial a_{iA}}\,, \label{e: ab*}\\
    * a_{iA} = a_{iA} - \frac{1}{2}\frac{\partial}{\partial b_i^A}\,, \quad * b_i^A = b_i^A + \frac{1}{2}\frac{\partial}{\partial a_{iA}}\,. \label{e: *ab}
\end{gather}

The Lie algebra $gl(4,\mathds{C}) \oplus gl(4,\mathds{C})$ is spanned by the bilinears
\begin{equation}
    T_i{}_{A}{}^{B} = a_{i}{}_{A}b_i{}^{B}  \equiv \frac{1}{2}(a_{i}{}_{A}*b_i{}^{B} + b_i{}^{B}*a_{i}{}_{A})\,.
\end{equation}
The central elements are
\begin{equation}
    H_i = a_{i}{}_{A}b_i{}^{A}\equiv \frac{1}{2}(a_{i}{}_{A}*b_i{}^{A} + b_i{}^{A}*a_{i}{}_{A})\,.
\end{equation}

Factorization by the central elements $H_i$ yields the Lie algebra $sl(4,\mathds{C}) \oplus sl(4,\mathds{C})$ spanned by
\begin{equation}
    t_i{}_{A}{}^{B} = (a_{i}{}_{A}b_i{}^{B} - \frac{1}{4}\delta_{A}{}^{B}H_i )\,.
\end{equation}

The $su(2,2) \oplus su(2,2)$ real form of $sl(4,\mathds{C}) \oplus sl(4,\mathds{C})$ results from the reality conditions
\begin{equation}\label{e: reality condition (AB)}
    \bar{a}_{i}{}_{A} = b_{i}{}^{B}C_{BA}\,, \quad \bar{b}_{i}{}^{A} = C^{AB}a_{i}{}_{B}\,,
\end{equation}
where bar denotes complex conjugation and $C_{AB} = - C_{BA}$ and $C^{AB} = - C^{BA}$ are real antisymmetric matrices obeying
\begin{equation}
    C_{AC}C^{BC} = \delta_{A}{}^{B}\,.
\end{equation}
Note that
\begin{equation}
    su(2,2) \oplus su(2,2) \subset sp(8) \oplus sp(8)\,.
\end{equation}
with  $sp(8) \oplus sp(8) $ spanned by various bilinears of $a_i$ and $b_i$ at $i=1$ or $2$.

In the sequel we set
\begin{equation}
    C^{AB} = \begin{pmatrix}
\varepsilon^{\alpha \dot\beta} & 0 \\
0 & \varepsilon^{\dot\alpha \beta}
\end{pmatrix}\,, \quad \varepsilon^{12} = - \varepsilon^{21} = 1\,, \quad \varepsilon^{11} = \varepsilon^{22} = 0\,
\end{equation}
splitting generating elements $a_{1,2}{}_{A}$ and $b_{1,2}{}^{B}$ into the pairs of two-component spinors $a_{1,2}{}_\alpha$, $b_{1,2}{}^\alpha$, $\tilde{a}_{1,2}{}_{\dot\alpha}$, $\tilde{b}_{1,2}{}^{\dot\alpha}$. Then commutators (\ref{e: commutators ab (AB)}) transform to
\begin{equation}\label{e: commutators ab}
    [a_{i\alpha}, b_j^\beta]_* = \delta_{ij}\delta_\alpha^\beta\,, \quad [\tilde{a}_{i \dot\alpha}, \tilde{b}_{j}{}^{\dot\beta}]_* = \delta_{ij}\delta_{\dot\alpha}^{\dot\beta}\,
\end{equation}
with the other commutation relations being zero. The  conjugation rules
(\ref{e: reality condition (AB)}) read as
\begin{equation}
    \bar{a}_{i\alpha} = \tilde{b}_{i \dot\alpha}\,, \quad \bar{b}{}_{i}^{\alpha} = \tilde{a}{}_{i}^{\dot\alpha}\,, \quad \bar{\tilde{a}}{}_{i\dot\alpha} = b_{i\alpha}\,, \quad \bar{\tilde{b}}{}_i^{\dot\alpha} = a_i^\alpha\,. \label{e: complex conjugation}
\end{equation}

Let us now introduce vacua $\pi^i{}_p$ for each set of $a_{iA}, b_i{}^B$, by imposing the following conditions:
\begin{gather}
    a_{i\alpha}*\pi^i{}_1=0=\pi^i{}_1*\tilde{a}_{i\dot\alpha}\,, \quad \tilde{b}_{i}^{\dot\alpha}*\pi^i{}_1=0=\pi^i{}_1*b_{i}^{\alpha}\,, \\ b_{i}^{\alpha}*\pi^i{}_2=0=\pi^i{}_2*\tilde{b}_{i}^{\dot\alpha}\,, \quad \tilde{a}_{i\dot\alpha}*\pi^i{}_2=0=\pi^i{}_2*a_{i\alpha}\,, \\ a_{i\alpha}*\pi^i{}_3=0=\pi^i{}_3*\tilde{b}_{i}^{\dot\alpha}\,, \quad \tilde{a}_{i\dot\alpha}*\pi^i{}_3=0=\pi^i{}_3*b_{i}^{\alpha}\,,\\ b_{i}^{\alpha}*\pi^i{}_4=0=\pi^i{}_4*\tilde{a}_{i\dot\alpha}\,,\quad \tilde{b}_{i}^{\dot\alpha}*\pi^i{}_4=0=\pi^i{}_4*a_{i\alpha}\,.
\end{gather}

Such vacua can be realized as elements of the star-product algebra using (\ref{e: ab*}) and (\ref{e: *ab}):
\begin{gather}
    \pi^i{}_1 = \exp{-2a_{i\alpha} b_i{}^\alpha + 2\tilde{a}_{i \dot\alpha} \tilde{b}_i{}^{\dot\alpha}}\,, \quad \pi^i{}_2 = \exp{2a_{i\alpha} b_i{}^\alpha - 2\tilde{a}_{i \dot\alpha} \tilde{b}_i{}^{\dot\alpha}}\,, \label{e: Fock vacuum 1}\\
    \pi^i{}_3 = \exp{-2a_{i\alpha} b_i{}^\alpha - 2\tilde{a}_{i \dot\alpha} \tilde{b}_i{}^{\dot\alpha}}\,, \quad \pi^i{}_4 = \exp{2a_{i\alpha} b_i{}^\alpha + 2\tilde{a}_{i \dot\alpha} \tilde{b}_i{}^{\dot\alpha}}\,. \label{e: Fock vacuum 2}
\end{gather}

This yields an explicit realization of the Fock module with states created from a particular pair of vacua, for instance, $\pi^1{}_1$ and $\pi^2{}_1$:
\begin{equation}
    \ket{C^{11}} =C^{11}(b_1, \tilde{a}_1, b_2, \tilde{a}_2) *\pi^1{}_1 \pi^2{}_1 = C^{11}(2b_1, 2\tilde{a}_1, 2b_2, 2\tilde{a}_2) \pi^1{}_1 \pi^2{}_1 \,. \label{FockModule11}
\end{equation}
As will be explained bellow, all choices of Fock-space vacuum projectors are equivalent in the context of studying properties of $AdS_4$ modules.

Let us now introduce the $su(2,2)$ generators in a canonical way:
\begin{gather}
    L^i{}_\alpha{}^\beta = a_{i \alpha}b_i{}^\beta - \frac{1}{2} \delta_\alpha{}^\beta a_{i \gamma}b_i{}^\gamma \,, \quad P^i {}_\alpha {}^{\dot\beta} = a_{i \alpha} \tilde{b}_i{}^{\dot\beta} \,,\\
    \bar L^i{}_{\dot\alpha}{}^{\dot\beta} = \tilde{a}_{i \dot\alpha}\tilde{b}_i{}^{\dot\beta} - \frac{1}{2} \delta_{\dot\alpha}{}^{\dot\beta} \tilde{a}_{i \dot\gamma}\tilde{b}_i{}^{\dot\gamma} \,, \quad K^i{}_{\dot\alpha}{}^\beta = \tilde{a}_{i \dot\alpha} b_i{}^\beta\,.\\
    D^i = \frac{1}{2}  (a_{i \alpha} b_i{}^{\alpha} - \tilde{a}_{i \dot{\alpha}} \tilde{b}_i{}^{\dot{\alpha}})\,,
\end{gather}
and the central elements
\begin{equation}
    H^i = \frac{1}{2}  (a_{i \alpha} b_i{}^{\alpha} + \tilde{a}_{i \dot{\alpha}} \tilde{b}_i{}^{\dot{\alpha}})
\end{equation}
that correspond to the helicity operators.

The $AdS_4$ connection can be introduced via an embedding of $AdS_4$ algebra into $su(2,2)\oplus su(2,2)$
\begin{equation}\label{e: connection su(2,2)}
    \omega_0 = \omega_0 {}^\alpha{}_\beta (L^1{}_\alpha{}^\beta + L^2{}_\alpha{}^\beta) + \bar{\omega}_0 {}^{\dot\alpha}{}_{\dot\beta}( \bar{L}^1{}_{\dot\alpha}{}^{\dot\beta} + \bar{L}^2 {}_{\dot\alpha}{}^{\dot\beta}) + e_0 {}^\alpha {}_{\dot\beta} (P^1 {}_\alpha {}^{\dot\beta} + P^2 {}_\alpha {}^{\dot\beta} + K^1 {}^{\dot\beta} {}_\alpha + K^2 {}^{\dot\beta} {}_\alpha)\,.
\end{equation}
It obeys the flatness condition
\begin{equation}
    \d_x \omega_0 + \omega_0 \wedge* \omega_0 = 0\,.
\end{equation}
That the connection (\ref{e: connection su(2,2)}) is flat implies that $\omega_0 {}^\alpha{}_\beta$, $\bar{\omega}_0 {}^{\dot\alpha}{}_{\dot\beta}$ and $e_0 {}^\alpha {}_{\dot\beta}$ describe $AdS_4$ Lorentz connection and vierbein, respectively. Note that the generator $P^i {}_\alpha {}^{\dot\beta} + K^i {}^{\dot\beta} {}_\alpha$ describes
the embedding of the $AdS_4$ translations (transvections) into the conformal algebra $su(2, 2)$.

Note that vacua $\pi^i{}_p$ are bi-Lorentz invariant
\begin{equation}
    L^j{}_\alpha{}^\beta * \pi^i{}_p = 0 = \pi^i{}_p * L^j{}_\alpha{}^\beta\,, \quad \bar{L}^j {}_{\dot\alpha}{}^{\dot\beta} * \pi^i{}_p = 0 = \pi^i{}_p * \bar{L}^j {}_{\dot\alpha}{}^{\dot\beta}\,
\end{equation}
and eigenvectors of $D^i$ and $H^i$. Eigenvalues for the vacuum $\pi^1{}_1\pi^2{}_1$ are
\begin{equation}
    D^i*\pi^1{}_1\pi^2{}_1 = \pi^1{}_1\pi^2{}_1\,, \quad H^i*\pi^1{}_1\pi^2{}_1 = 0\,.
\end{equation}

The $M_{tw\otimes tw}$ module (\ref{e:cov4}) can then be obtained by subjecting the Fock module (\ref{FockModule11}) to equations of the form:

\begin{equation}
    \d_x \ket{C^{11}} + \omega_0*\ket{C^{11}} = 0\,.
\end{equation}
Indeed, after some calculation it yields
\begin{equation}\label{e: twist-twist ab}
    D_L C^{11}(2b_1, 2\tilde{a}_1, 2b_2, 2\tilde{a}_2) + e_0{}^{\alpha \dot\alpha}\sum_{i = 1}^2\left(4\tilde{a}_{i \dot\alpha}b_{i\alpha} - \frac{1}{4}\frac{\partial^2}{\partial b_i^\alpha \partial \tilde{a}_{i}^{\dot\alpha}}\right)C^{11}(2b_1, 2\tilde{a}_1, 2b_2, 2\tilde{a}_2) = 0\,,
\end{equation}
where
\begin{equation}
    D_L = \d_x + \omega_0 {}^\alpha{}_\beta \sum_{i = 1}^{2}\bigg(b_i^\beta \frac{\partial}{\partial b_i^\alpha} - \frac{1}{2}\delta_{\alpha}{}^\beta b_i^\gamma \frac{\partial}{\partial b_i^\gamma}\bigg) - \bar{\omega}_0 {}^{\dot\alpha}{}_{\dot\beta}\sum_{i=1}^{2}\bigg(\tilde{a}_{i\dot\alpha}\frac{\partial}{\partial \tilde{a}_{i\dot\beta}} - \frac{1}{2}\delta_{\dot\alpha}{}^{\dot\beta}\tilde{a}_{i\dot\gamma}\frac{\partial}{\partial \tilde{a}_{i\dot\gamma}}\bigg)\,.
\end{equation}

The module (\ref{e:cov4}) is related to  (\ref{e: twist-twist ab}) via the substitution
$2b_i \rightarrow y_i$,  $2\tilde{a}_i \rightarrow \bar{y}_i$.

To reproduce the other equations in (\ref{e:cov1})-(\ref{e:cov8}), however, one has to apply one of the automorphisms $\rho$ of the $AdS_4$ algebra that can be associated with the action of Klein operators $\hat{k}$, $\hat{\bar k}$. It turns out that these $AdS_4$ automorphisms are also automorphisms of the star product algebra (\ref{e: commutators ab}). To map module $M_{tw\otimes tw}$ to any other $B_2$ module one has to apply the automorphism $\rho$ to the $AdS_4$ algebra and keep the underlying Fock module $\ket{C^{11}}$ unchanged meaning that we stick to the Fock module generated from the $\pi^1{}_1 \pi^2{}_1$ vacuum. Note that the entire analysis can be carried out over any other Fock vacuum (\ref{e: Fock vacuum 1}), (\ref{e: Fock vacuum 2}), leading to the same results since the transition from the description of modules in terms of one Fock module to  another can be induced by a suitable automorphism. The idea of connecting field-theoretically different modules by automorphisms is inspired by the observation that in a standard HS theory formulated in terms of $Y^A$-oscillators one can obtain an adjoint module from a twisted-adjoint one via the action of an automorphism of the $AdS_4$ algebra.
However, the mapping procedure in a $Y^A$-oscillator setup is highly complicated due to the involvement of half Fourier transform that maps polynomials into derivatives of $\delta$ functions and vice versa. Fortunately, in a $\{a_{i}{}_{A}\,,b_{i}{}^{B}\}$-setup the mapping procedure operates with polynomials only. The required automorphisms of $\{a_{i}{}_{A}\,,b_{i}{}^{B}\}$ star-product algebra have the following form (trivial action is omitted in each case):

\begin{align}
   \Biggl\{ \rho_i(a_{i\alpha}) =  b_{i\alpha} \,, \quad \rho_i(b_{i}^{\alpha}) =  a_{i}^{\alpha} \,, &\quad \bar{\rho}_i(\tilde{a}_{i \dot\alpha}) =  \tilde{b}_{i \dot\alpha} \, \quad \bar{\rho}_i(\tilde{b}_{i}^{\dot\alpha}) =  \tilde{a}_{i}^{\dot\alpha} \Biggr\}\nonumber\\
   &\Updownarrow \nonumber\\
   \{\hat{k}_i* y_i^\alpha = -y_i^\alpha*\hat{k}_i\,, &\quad \hat{\bar k}_i*\bar{y}_i^{\dot\alpha} = -\bar{y}_i^{\dot\alpha} *\hat{\bar k}_i\}\,,
\end{align}

\begin{align}
    \Biggl\{ \psi_+ (a_{1 \alpha}) = \frac{1}{2} (b_1 + b_2 + a_1 - a_2)_\alpha &\,, \quad \psi_+ (a_{2 \alpha}) = \frac{1}{2} (b_1 + b_2 + a_2 - a_1)_\alpha \,, \nonumber\\
    \quad \psi_+ (b_{1}^{\alpha}) = \frac{1}{2} (a_1 + a_2 + b_1 - b_2)^\alpha &\,, \quad \psi_+ (b_{2}^{\alpha}) = \frac{1}{2} (a_1 + a_2 +b_2 - b_1)^\alpha \Biggr\}\nonumber\\
    &\Updownarrow \nonumber\\ \{\hat{k}_{12}^+*y_1^\alpha=-y_2^\alpha*\hat{k}_{12}^+&\,, \quad \hat{k}_{12}^+*y_2^\alpha=-y_1^\alpha*\hat{k}_{12}^+\}\,,
\end{align}

\begin{align}
    \Biggl\{\psi_- (a_{1 \alpha}) = \frac{1}{2} (b_1 - b_2 + a_1 + a_2)_\alpha &\,, \quad \psi_- (a_{2 \alpha}) = \frac{1}{2} (b_2 - b_1 + a_1 + a_2)_\alpha \,, \nonumber\\
    \quad \psi_- (b_{1}^{\alpha}) = \frac{1}{2} (a_1 - a_2 + b_1 + b_2)^\alpha &\,, \quad \psi_- (b_{2}^{\alpha}) = \frac{1}{2} (a_2 - a_1 + b_2 +  b_1)^\alpha \Biggr\}\nonumber\\
    &\Updownarrow \nonumber\\ \{\hat{k}_{12}*y_1^\alpha=y_2^\alpha*\hat{k}_{12}&\,, \quad \hat{k}_{12}*y_2^\alpha=y_1^\alpha*\hat{k}_{12}\}\,,
\end{align}
and an additional practically useful automorphism
\begin{equation}
    \chi(a_{1 \alpha}) = \frac{1}{\sqrt{2}} (a_{1\alpha} + a_{2\alpha}) \,, \quad \chi(a_{2 \alpha}) = \frac{1}{\sqrt{2}} (a_{2\alpha} - a_{1\alpha}) \,, \quad \chi(b_{1}^{\alpha}) = \frac{1}{\sqrt{2}} (b_{1}^{\alpha} + b_{2}^{\alpha}) \,, \quad \chi(b_{2}^{\alpha}) = \frac{1}{\sqrt{2}} (b_{2}^{\alpha} - b_{1}^{\alpha})\,,
\end{equation}
that does not correspond to any Klein operator. Instead, the automorphisms $\chi$ and $\bar\chi$ generate a change of variables (\ref{e:change of variables y}) that relates conjugacy classes $\mathcal{R}_1$ and $\mathcal{R}_2$.

The involutive automorphisms $\rho_i$, $\psi_+$, $\psi_-$ and their complex conjugated leave  $D_L$ invariant, \ie
\begin{equation}
    \rho\bigg(\omega_0{}^{\alpha\beta}[L^1_{\alpha\beta}+L^2_{\alpha\beta}]\bigg) = \omega_0{}^{\alpha\beta}[L^1_{\alpha\beta}+L^2_{\alpha\beta}]\,, \quad \forall \rho \in \{\rho_i, \psi_+, \psi_-\}\,,
\end{equation}
while non-trivially transforming the $e_0{}^{\alpha \dot\alpha}\sum_{i=1}^2(P^i_{\alpha\dot\alpha}+K^i_{\alpha\dot\alpha})$ term of the connection. Therefore, for any composition of the Klein-related automorphisms $\rho$
\begin{equation}\label{e: all ab equations}
    \d_x \ket{C^{11}} + \rho(\omega_0)*\ket{C^{11}} = 0
\end{equation}
is a new equation imposed on the Fock module $\ket{C^{11}}$ which means that we obtain some other $B_2$ module. These new equations can be identified with equations (\ref{e:cov1})-(\ref{e:cov8}). For example, the automorphism $\rho_1$ leads to the module $M_{adj\otimes tw}$ described by the equation (\ref{e:cov3})
\begin{align}
    \d_x \ket{C^{11}} &+ \rho_1(\omega_0)*\ket{C^{11}} = 0\\
    &\Updownarrow\nonumber\\
    \bigg(D_L + e_0{}^{\alpha \dot{\alpha}}\bigg[\tilde{a}_{1\dot{\alpha}}\frac{\partial}{\partial b_1^\alpha} - b_{1\alpha}\frac{\partial}{\partial \tilde{a}_{1}^{\dot{\alpha}}} &+ 4\tilde{a}_{2\dot{\alpha}}b_{2\alpha}-\frac{1}{4}\frac{\partial^2}{\partial b_2^\alpha \tilde{a}_{2}^{\dot{\alpha}}}\bigg] \bigg)C^{11}(2b_1, 2\tilde{a}_1, 2b_2, 2\tilde{a}_2) = 0\,.
\end{align}

The realization of linear equations (CHS modules) in terms of the Fock modules can be easily extended to the case of general $B_p$ models. Indeed, to describe $B_p$ modules we should consider star-product algebra $\{a_{i}{}_{A}\,,b_{i}{}^{B}\}$ with $i\in\{1,...,p\}$ and extend all summation above over the index $i$ from the range $\{1,2\}$ to $\{1,...,p\}$. Then automorphisms $\rho_i$ reproduce an action of Klein operators $\hat{k}_i$ and automorphisms $\psi_\pm{}_{ij}$ give an action of Klein operators $\hat{k}_{ij}$ and $\hat{k}^+_{ij}$ (replace $1,2$ with $i,j$ in formulas for $\psi_\pm$).

Considering compositions $\rho_1 \psi_+$ and $\rho_2 \psi_+$, one arrives at equations associated with two entangled modules (\ref{e:cov7}) and (\ref{e:cov8}).  Indeed,
\begin{align}
    \d_x \ket{C^{11}} &+ \rho_1\psi_+(\omega_0)*\ket{C^{11}} = 0\\
    &\Updownarrow \nonumber\\
    \bigg(D_L + \frac{1}{2}e_0{}^{\alpha \dot{\alpha}}\bigg[b_{1 \alpha} \frac{\partial}{\partial \tilde{a}_2^{\dot{\alpha}}}-b_{1 \alpha} \frac{\partial}{\partial \widetilde{a}_1^{\dot{\alpha}}}-b_{2 \alpha} &\frac{\partial}{\partial \tilde{a}_1^{\dot{\alpha}}}-b_{2 \alpha} \frac{\partial}{\partial \tilde{a}_2^{\dot{\alpha}}}+\tilde{a}_{1 \dot{\alpha}} \frac{\partial}{\partial b_1^\alpha}+\widetilde{a}_{1 \dot{\alpha}} \frac{\partial}{\partial b_2^\alpha}+\tilde{a}_{2 \dot{\alpha}} \frac{\partial}{\partial b_2^\alpha}-\widetilde{a}_{2 \dot{\alpha}} \frac{\partial}{\partial b_1^\alpha}\bigg] + \nonumber\\
    +\frac{1}{2}e_0{}^{\alpha \dot{\alpha}}\bigg[4 \tilde{a}_{1 \dot{\alpha}} b_{1 \alpha}&-4 \tilde{a}_{1 \dot{\alpha}} b_{2 \alpha}+4 \tilde{a}_{2 \dot{\alpha}} b_{1 \alpha}+4 \tilde{a}_{2 \dot{\alpha}} b_{2 \alpha} -\nonumber\\
    -\frac{1}{4} \frac{\partial^2}{\partial b_1^\alpha \partial \tilde{a}_1^{\dot\alpha}}+\frac{1}{4} \frac{\partial^2}{\partial b_2^\alpha \partial \tilde{a}_1^{\dot\alpha}} -\frac{1}{4} \frac{\partial^2}{\partial b_1^\alpha \partial \tilde{a}_2^{\dot\alpha}}-&\frac{1}{4} \frac{\partial^2}{\partial b_2^\alpha \partial \tilde{a}_2^{\dot\alpha}}\bigg] \bigg)C^{11}(2b_1, 2\tilde{a}_1, 2b_2, 2\tilde{a}_2)=0\,.\label{e: ab entangled eq}
\end{align}
In each case, except for the entangled modules, the vacuum $\pi^1{}_1 \pi^2{}_1$ and the Fock module $\ket{C^{11}}$ diagonalize dilation $\rho(D)=\rho(D^1)+\rho(D^2)$ and helicity $\rho(H)=\rho(H^1)+\rho(H^2)$ operators. For example,
\begin{gather}
    D*\ket{C^{11}}=\frac{1}{2} \bigg(b^\alpha_i\frac{\partial}{\partial b^\alpha_i}C^{11} + \tilde{a}_{i \dot\alpha}\frac{\partial}{\partial \tilde{a}_{i \dot\alpha}}C^{11} + 4C^{11}\bigg)\pi^1{}_1 \pi^2{}_1\,, \\
    \rho_1(D)*\ket{C^{11}} = \frac{1}{2} \bigg(- b^\alpha_1\frac{\partial}{\partial b^\alpha_1}C^{11} + b^\alpha_2\frac{\partial}{\partial b^\alpha_2}C^{11} + \tilde{a}_{i \dot\alpha}\frac{\partial}{\partial \tilde{a}_{i \dot\alpha}}C^{11} + 2 C^{11}\bigg)\pi^1{}_1 \pi^2{}_1\,, \\
    \psi_+(D)*\ket{C^{11}} = \frac{1}{2} \bigg(- 2b^\alpha_1\frac{\partial}{\partial b^\alpha_2}C^{11} - 2b^\alpha_2\frac{\partial}{\partial b^\alpha_1}C^{11} + \tilde{a}_{i \dot\alpha}\frac{\partial}{\partial \tilde{a}_{i \dot\alpha}}C^{11} + 2 C^{11}\bigg)\pi^1{}_1 \pi^2{}_1\,,\\
    \rho_1\psi_+(D)*\ket{C^{11}} = \frac{1}{2} \bigg(4b_{2\alpha}b_1^{\alpha}C^{11}+\frac{1}{4}\frac{\partial^2}{\partial b_1^\alpha \partial b_{2\alpha}}C^{11} + \tilde{a}_{i \dot\alpha}\frac{\partial}{\partial \tilde{a}_{i \dot\alpha}}C^{11} + 2 C^{11}\bigg)\pi^1{}_1 \pi^2{}_1\,.\label{e: noneigen vacuum ab}
\end{gather}
Note that the vacuum $\pi^1{}_1 \pi^2{}_1$ does not diagonalize operators $\rho_1\psi_+(D)$ and $\rho_1\psi_+(H)$, but $\exp(\pm 4 b_{1\alpha}b_2^{\alpha})\pi^1{}_1 \pi^2{}_1$ diagonalizes them both.

For the entangled module the exponential ansatz (\ref{def:expAnsatz}) becomes clearer in variables $\{a_{iA}, b_j^B\}$. While the ansatz does not diagonalize operators $\rho(D)$ and $\rho(H)$, it reduces the entangled equation generated by the automorphism $\rho_1 \psi_+$ to take the form of the equation on the product of twisted-adjoint modules in new variables, with conjugation rules being violated.
The ansatz has a form
\begin{equation}
    C^{11}(2b_1, 2\tilde{a}_1, 2b_2, 2\tilde{a}_2) = \exp\bigg(4 b_{1\alpha}b_2^{\alpha} - 4\tilde{a}_{1\dot\alpha}\tilde{a}_2^{\dot\alpha}\bigg)\tilde{C}^{11}(2b_1, 2\tilde{a}_1, 2b_2, 2\tilde{a}_2)\, \label{e: ansatz 1 ab}
\end{equation}
and equation (\ref{e: ab entangled eq}) turns into
\begin{multline}
    \bigg(D_L + \frac{1}{2}e_0{}^{\alpha \dot{\alpha}}\bigg[8 \tilde{a}_{1 \dot{\alpha}} (b_{1 \alpha} - b_{2 \alpha})+8 \tilde{a}_{2 \dot{\alpha}}( b_{1 \alpha} + b_{2 \alpha}) - \\
    - \frac{1}{4} \frac{\partial^2}{ \partial \tilde{a}_1^{\dot\alpha} (\partial b_1^\alpha - \partial b_2^\alpha)} + \frac{1}{4} \frac{\partial^2}{ \partial \tilde{a}_2^{\dot\alpha} (\partial b_1^\alpha + \partial b_2^\alpha)}
    \bigg] \bigg)\tilde{C}^{11}(2b_1, 2\tilde{a}_1, 2b_2, 2\tilde{a}_2) = 0\,.
\end{multline}

Here, once again, a change to $b^\pm_\alpha =  b_{1 \alpha} \pm b_{2 \alpha}$, while keeping $\tilde{a}_{i \dot{\alpha}}$ unchanged (thus violating conjugation rules) leaves us with the equation for the product of two twisted-adjoint modules. Compared to $Y^A$ variables the exponential function from the ansatz behaves starkly different in the $\{a_{iA}, b_j^B\}$, such that its star product square
\begin{equation}
    \exp\bigg(b_{1\alpha}b_2^{\alpha} - \tilde{a}_{1\dot\alpha}\tilde{a}_2^{\dot\alpha}\bigg) * \exp\bigg( b_{1\alpha}b_2^{\alpha} - \tilde{a}_{1\dot\alpha}\tilde{a}_2^{\dot\alpha}\bigg) = \exp\bigg(2 b_{1\alpha}b_2^{\alpha} - 2\tilde{a}_{1\dot\alpha}\tilde{a}_2^{\dot\alpha}\bigg)\,
\end{equation}
is a well-defined expression, as in practical terms star product only acts as a point-wise product in this case. While not changing the unitarizability of the module, this ansatz is still important for further analysis of the spectrum of the theory.

Overall, the action of the Klein-related automorphisms $\rho$ on the $AdS_4$ connection $\omega$ reproduces all $B_2$ modules. However, the resulting modules are not unitary as a result of the Lorentz invariance of the vacua $\pi^i{}_p$. The dependence on the space-time coordinates of the elements of the field $\ket{C^{11}}$ is completely determined by the equation (\ref{e: all ab equations}) in terms of its value at any fixed point $x_0$. This means that the module $\ket{C^{11}(x_0)}$ contains the complete information on the on-mass-shell dynamics of the $4d$ field. Therefore, the question of unitarizability reduces to the (non-)existence of transformation between the module $\ket{C^{11}(x_0)}$ and some unitary $su(2, 2)$ module.

To analyze the unitarizability of such modules, we shall use the explicit construction in the twisted-adjoint case of standard $4d$ HS theory presented in \cite{Vasiliev:2001zy} and Klein-related automorphisms defined above. To that end, we introduce a new set of oscillators $e^i_{\nu A}$ and $f^i{}_A{}^\nu$ such that

\begin{gather}
    [e^i_{\nu A}, e^j_{\mu B}]_* = 0\,, \quad [f^i{}_A{}^\nu, f^j{}_B{}^\mu]_* = 0\,, \quad [e^i_{\nu A}, f^j{}_B{}^\mu]_* = \delta^{ij}\delta_{\nu}^{\mu} K_{AB}\,,
\end{gather}
where $K_{AB} = \begin{pmatrix}
1 & 0 \\
0 & -1
\end{pmatrix}$ and $i\,,\nu\,, A\in\{1\,,2\}$. The oscillators obey the Hermiticity conditions
\begin{equation}
    (e^i_{\nu A})^\dagger = f^i{}_A{}^\nu\,.
\end{equation}

Note that
\begin{equation}\label{e:change of variables ef}
    e^\pm_{\nu A} = \frac{1}{\sqrt{2}}(e^1_{\nu A} \pm e^2_{\nu A})\,, \quad f^\pm{}_A{}^\nu = \frac{1}{\sqrt{2}}(f^1{}_A{}^\nu \pm f^2{}_A{}^\nu)
\end{equation}
satisfy
\begin{gather}
    [e^\pm_{\nu A}, e^\pm_{\mu B}]_* = [e^\pm_{\nu A}, e^\mp_{\mu B}]_* = 0\,, \quad [f^\pm{}_A{}^\nu, f^\pm{}_B{}^\mu]_* = [f^\pm{}_A{}^\nu, f^\mp{}_B{}^\mu]_* = 0\,, \\
    [e^\pm_{\nu A}, f^\pm{}_B{}^\mu]_* = [e^\mp_{\nu A}, f^\mp{}_B{}^\mu]_* = \delta_{\nu}^{\mu} K_{AB}\,, \quad [e^\pm_{\nu A}, f^\mp{}_B{}^\mu]_* = 0\,,\\
     (e^\pm_{\nu A})^\dagger = f^\pm{}_A{}^\nu\,.
\end{gather}
The transition from oscillators $\{e^i_{\nu A}, f^i{}_A{}^\nu\}$ to $\{e^\pm_{\nu A}, f^\pm{}_A{}^\nu\}$ can be attributed to the action of the automorphism $\chi$.

These oscillators allow us to construct the Lie algebra $su(2,2) \oplus su(2,2) \subset sp(8) \oplus sp(8)$
\begin{gather}
    \tau^i_{A\nu}{}^\mu = f^i{}_A{}^\mu e^i_{\nu A} \quad \text{($A,i = 1, 2$ with no summation over $A,i$)}\label{def:tau}\,, \\
    t^{+i}{}^\mu_\nu = e^i_{\nu 2}f^i{}_1{}^\mu\,, \quad t^{-i}{}^\mu_\nu = e^i_{\nu 1}f^i{}_2{}^\mu\,, \label{def:t,t}\\
    E^i = f^i{}_1{}^\lambda e^i_{\lambda 1} + f^i{}_2{}^\lambda e^i_{\lambda 2}\,, \label{def:E}\\
    H^i = f^i{}_1{}^\lambda e^i_{\lambda 1} - f^i{}_2{}^\lambda e^i_{\lambda 2}\,,
\end{gather}
where $\tau^i_{A\nu}{}^\mu$ generate compact subalgebra $(u(2)\oplus u(2))^i$, non-compact generators are $t^{+i}{}^\mu_\nu$ and $t^{-i}{}^\mu_\nu$, operator $E^i$ can be interpreted as an energy operator in the $i$-th sector and central elements $H^i$ are helicity operators. Recall that we use the Weyl star-product notation, \ie all bilinears listed above
are elements of the star-product algebra. For further analysis, we extract the diagonal subalgebra $su(2,2)$ with generators
\begin{gather}
    \tau_{A\nu}{}^\mu= \tau^1_{A\nu}{}^\mu + \tau^2_{A\nu}{}^\mu\,, \quad t^{\pm}{}^\mu_\nu = t^{\pm1}{}^\mu_\nu + t^{\pm2}{}^\mu_\nu\,,\\
    E = E^1+E^2\,,\quad H = H^1+H^2\,.\label{e: total energy}
\end{gather}

We also introduce the Fock module $F$ constructed from the vacuum $\Pi$ defined as

\begin{equation}\label{e: ef vacuum}
    e^i_{\nu 1} * \Pi = 0\,, \quad f^i{}_2{}^\mu * \Pi = 0\,, \quad  \Pi * e^i_{\nu 2} = 0\,, \quad \Pi *  f^i{}_1{}^\mu = 0\,.
\end{equation}

For the Fock module $F$ to be suitable for the description of physical states as a representation of $su(2,2)$ it must satisfy two conditions:
\begin{itemize}
  \item $F$ is a highest/lowest-weight module meaning the energy $E$ is bounded from above/from below and spins are finite.
  \item $F$ admits an invariant positive-definite Hermitian form, \ie $F$ is a unitary module.
\end{itemize}

The two sets of oscillators $\{a_{iA}, b_{jB}\}$ and $\{e^i_{\nu A}, f^i{}_A{}^\nu\}$ can be related via a Bogolyubov transform
\begin{gather}
    e^i_{11} = \frac{1}{\sqrt{2}} (a_{i1} + i \tilde{a}_{i \dot2}) \,, \quad e^i_{12} = \frac{1}{\sqrt{2}} (a_{i1} - i \tilde{a}_{i \dot2}) \,, \quad e^i_{21} = \frac{1}{\sqrt{2}} (\tilde{a}_{i\dot1} + i a_{i 2}) \,, \quad e^i_{22} = \frac{1}{\sqrt{2}} (\tilde{a}_{i\dot1} - i a_{i 2})\,, \label{e: bogolyubov 1} \\
    f^i{}_1{}^{1} = \frac{1}{\sqrt{2}} (b_{i2} + i \tilde{b}_{i \dot1}) \,, \quad f^i{}_2{}^{1} = \frac{1}{\sqrt{2}} (-b_{i2} + i \tilde{b}_{i \dot1}) \,, \quad f^i{}_1{}^{2} = \frac{1}{\sqrt{2}} (\tilde{b}_{i \dot2} + i b_{i 1}) \,, \quad f^i{}_2{}^{2} = \frac{1}{\sqrt{2}} (- \tilde{b}_{i \dot2} + i b_{i 1}) \,. \label{e: bogolyubov 2}
\end{gather}
Then the Fock vacuum $\Pi$ is realized in terms of the star product algebra as
\begin{equation}
    \Pi = \exp\bigg\{-2 e^1_{\nu1}f^1{}_1{}^\nu -2 e^1_{\nu2}f^1{}_2{}^\nu -2 e^2_{\nu1}f^2{}_1{}^\nu -2 e^2_{\nu2}f^2{}_2{}^\nu\bigg\}\,.
\end{equation}

Bogolyubov transform relates modules $\ket{C^{11}(x_0)} \simeq M_{tw \otimes tw}$ and $F_{tw \otimes tw}$. We shall be using the automorphisms $\rho_i$, $\psi_+$, $\psi_-, \chi$, their complex conjugated and their counterparts on oscillators $\{e^i_{\nu A}, f^i{}_A{}^\nu\}$. This allows us to analyze all emerging modules starting with the product of two twisted-adjoint ones $M_{tw \otimes tw}$. Due to (\ref{e: all ab equations}), each module of the $B_2$ theory has the same underlying Fock module $F$ but different realizations of the algebra $su(2,2)$ in terms of oscillators $\{e^i_{\nu A}, f^i{}_A{}^\nu\}$. Alternatively, after a composition with an automorphism acting on the full equation, it can be viewed as the same realization of $su(2,2)$ algebra acting on different vacua. In other words, application of an automorphism changes the slicing of the underlying Fock module in terms of spin-$s$ submodules of the background isometry algebra. We shall adopt the latter approach for the following section. These modules can be obtained by Klein-related automorphisms $\rho_i, \psi_+, \psi_-$ both for $\{a_{iA}, b_{jB}\}$ and $\{e^i_{\nu A}, f^i{}_A{}^\nu\}$. Since the total spectra of representations remains the same in both oscillator realizations, we can establish correspondence of representations presented in terms of any set of oscillators. The values of Casimir operators can always be the final check. It may happen that in some representations the total energy $E$ or helicity $H$ does not have the vacuum as its eigenvector as was in case of (\ref{e: noneigen vacuum ab}), meaning that the module under consideration is not a highest/lowest-weight module. The unitarity can be straightforwardly checked by inspecting whether the creation and annihilation operators for any particular vacuum are each other's conjugate (\ie bilinear form is positive-definite) and whether the compact generators $\tau^i_{A\nu}{}^\mu$, energy $E^i$ and helicity $H^i$ are Hermitian with $(t^{+}{}^\mu_\nu)^\dagger = t^{-}{}^\mu_\nu$ (\ie bilinear form is invariant). Since we keep the oscillator realization of $su(2,2)$ the same, these conditions are enough to fix conjugation in oscillators $\{e^i_{\nu A}\,,f^i{}_A{}^\nu\}$ as the same in all modules, requiring
\begin{equation}\label{e: conj rules ef}
    (e^i_{\nu A})^\dagger = f^i{}_A{}^\nu\,.
\end{equation}
However, these conjugation rules can lead to the module's vacuum $\rho(\Pi)$ being not self-conjugated which automatically means that the module is non-unitary. As will be shown later, in all cases, except for the entangled modules, the vacuum $\rho(\Pi)$ is self-conjugated with respect to the conjugation rules (\ref{e: conj rules ef}).

We start with the illustration of the above procedure by the standard $4d$ HS theory, then uplifting it to the $B_2$-modules.

\subsection{Standard HS modules}
The exposition of Section \ref{Fock Space Realization} applies to the standard $A_1$ HS theory
of \cite{Vasiliev:1992av} upon dropping the Coxeter index $i$ and automorphisms $\psi_\pm$.

In the standard HS theory a twisted-adjoint module $F_{tw}$, which describes physical sector of $C(Y;K|x)$, is induced from the vacuum (\ref{e: ef vacuum}). The conjugation correctly relates creation and annihilation operators
\begin{equation}
    (e_{\nu A})^\dagger = f{}_A{}^\nu\,,
\end{equation}
making the vacuum $\Pi = \ket{0}_{tw}$ self-conjugated. The $su(2,2)$ properly acts correctly on the module. Namely, energy is positive-definite and Hermitian, with the vacuum being its eigenvector
\begin{equation}
    E = f{}_1{}^\lambda e_{\lambda 1} + f{}_2{}^\lambda e_{\lambda 2}, \quad E*\ket{0}_{tw}=2\ket{0}_{tw}\,.
\end{equation}

Helicity operator $H$ is Hermitian and together with $t^{\pm i}{}^\mu_\nu$ also act appropriately
\begin{equation}
    H = f{}_1{}^\lambda e_{\lambda 1} - f{}_2{}^\lambda e_{\lambda 2}, \quad H*\ket{0}_{tw}=0, \quad t^{-i}{}^\mu_\nu *\ket{0}_{tw} = 0\,, \quad t^{+i}{}^\mu_\nu *\ket{0}_{tw} \neq 0\,.
\end{equation}
Vectors $(e_{\nu2})^n\ket{0}_{tw}$ and $(f_1{}^\mu)^m\ket{0}_{tw}$ are singular ones that generate an infinite-dimensional irreducible nonintersecting submodules of helicities $(-n)$ and $m$.
Therefore, $F_{tw}$ is a lowest-weight unitary module that decomposes into the direct sum of irreducible modules of all spins (helicities).

The adjoint module $F_{adj}$ is a representation of $su(2,2)$ over the transformed self-conjugated vacuum
\begin{equation}
  \rho(\Pi) = \ket{0}_{adj} = \exp\bigg\{-2 e_{\nu1}f{}_1{}^\nu +2 e_{\nu2}f{}_2{}^\nu\bigg\}\,,
\end{equation}
where $\rho$ is an automorphism corresponding to the Klein operator $k$. It acts as
\begin{equation}
    \rho(e_{\nu 2}) = - f{}_2{}^\nu, \quad \rho(f{}_2{}^\nu) = e_{\nu 2}
\end{equation}
and as identity on the remaining oscillators. For this vacuum the annihilation operators are $\{e_{\nu 1}\,, e_{\nu 2}\}$ and as such the module is not unitary, as the pair of creation operators are $\{f{}_1{}^\nu\,, -f{}_2{}^\nu\}$, the norm of the state $||(-f{}_2{}^\nu) * \ket{0}_{adj}||^2 = -1$ (creation and annihilation operators are not conjugated). This is anticipated since the adjoint module decomposes into an infinite sum of finite-dimensional modules of a non-compact algebra. However, the highest weight structure is respected
\begin{equation}
E*\ket{0}_{adj}= 0,\quad  H*\ket{0}_{adj} = 2\ket{0}_{adj}, \quad t^{\pm}{}^\mu_\nu *\ket{0}_{adj}= 0\,.
\end{equation}
Vectors $(f{}_2{}^\nu)^n\ket{0}_{adj}$ are singular ones that lead to finite-dimensional submodules.

It is worth noting that the unitary left Fock module built from the vacuum $\Pi$ identifies with a doubled singleton Fock space known as doubleton representation of $su(2, 2)$ \cite{Gunaydin:1984fk, Gunaydin:1984vz}, that contains all irreducible $4d$ massless unitary representations of the conformal algebra. As shown in \cite{Iazeolla:2008ix, Basile:2018dzi}, the standard adjoint HS module corresponds to the tensor product of singleton and anti-singleton, which decomposes under the background isometry algebra into the sum of all different adjoint spin-$s$ modules. Therefore, the action of the automorphism $\rho$ can be viewed as a flipping of singleton into the anti-singleton in the tensor product.

\subsection{$B_2$ HS modules}\label{Modules}
\subsubsection{Module $R(k)\bar R(\bar k){}^T = \begin{pmatrix}
-1 & 0 \\
0 & -1
\end{pmatrix}$ } \label{ModuleTT}
Let us apply the procedure to the $B_2$ modules, starting with the module $F_{tw\otimes tw} \simeq M_{tw \otimes tw}$. This module is equipped with positive-definite Hermitian conjugation
\begin{equation}
    (e^i_{\nu A})^\dagger = f^i{}_A{}^\nu
\end{equation}
and total energy and helicity have the self-conjugated vacuum
\begin{equation}
     \ket{0}_{tw \otimes tw} = \exp\bigg\{-2(e^1_{\nu 1} f^1{}_1{}^\nu + e^1_{\nu 2}f^1{}_2{}^\nu + e^2_{\nu 1}f^2{}_1{}^\nu + e^2_{\nu 2}f^2{}_2{}^\nu)\bigg\}
\end{equation}
that satisfies
\begin{equation}
    E = \sum_{i=1}^{2}\bigg(f^i{}_1{}^\lambda e^i_{\lambda 1} + f^i{}_2{}^\lambda e^i_{\lambda 2} \bigg)\,, \quad H = \sum_{i=1}^{2}\bigg(f^i{}_1{}^\lambda e^i_{\lambda 1} - f^i{}_2{}^\lambda e^i_{\lambda 2} \bigg)\,,
\end{equation}

\begin{equation}
    E*\ket{0}_{tw \otimes tw} = 4\ket{0}_{tw \otimes tw}, \quad H*\ket{0}_{tw \otimes tw} = 0\,,
\end{equation}

\begin{equation}
    t^{-}{}^\mu_\nu*\ket{0}_{tw \otimes tw} = 0\,, \quad t^{+}{}^\mu_\nu*\ket{0}_{tw \otimes tw} \neq 0\,.
\end{equation}

Unitary lowest weight module $F_{tw \otimes tw}$ corresponds to the case $R(k)\bar R(\bar k){}^T = \begin{pmatrix}
-1 & 0 \\
0 & -1
\end{pmatrix}$. Other $B_2$ modules result from
$F_{tw \otimes tw}$ via application of combinations of Klein-related automorphisms of the $\{e^i_{\nu A}, f^i{}_A{}^\nu\}$ algebra that correspond to the remaining seven cases of possible matrix products $R(k)\bar R(\bar k){}^T$.

\subsubsection{Module $R(k)\bar R(\bar k){}^T = \begin{pmatrix} \label{ModuleTA1}
-1 & 0 \\
0 & 1
\end{pmatrix}$ or $
\begin{pmatrix}
1 & 0 \\
0 & -1
\end{pmatrix}$}

These modules $F_{tw \otimes adj}$ and $F_{adj \otimes tw}$ result from the automorphisms corresponding to $k_1$ and $k_2$, respectively, which can be realized on the $\{e^i_{\nu A}, f^i{}_A{}^\nu\}$. For the $k_2$ example of the non-trivial transformation:
\begin{equation}
    \rho_2(e^2_{\nu 2}) = - f^2{}_2{}^\nu \,, \quad \rho_2(f^2{}_2{}^\nu) = e^2_{\nu 2}\,.
\end{equation}

The vacuum then takes the form:

\begin{equation}
    \ket{0}_{tw \otimes adj} = \rho_2(\ket{0}_{tw \otimes tw}) = \exp\bigg\{-2(e^1_{\nu 1} f^1{}_1{}^\nu + e^1_{\nu 2}f^1{}_2{}^\nu + e^2_{\nu 1}f^2{}_1{}^\nu - e^2_{\nu 2}f^2{}_2{}^\nu)\bigg\}\,.
\end{equation}

We see that $F_{tw \otimes adj}$ is indeed a product of adjoint and twisted-adjoint modules of the standard HS theory, its annihilation operators being $\{e^1_{\nu 1}\,, e^2_{\nu 1}\,, f^1{}_2{}^\nu \,, e^2_{\nu 2}\}$. It is non-unitary but contains a unitary lowest-weight submodule resulting from the quotienting by the adjoint part. At the field level, it can be achieved by enforcing $C(Y_1,Y_2,I;\hat{K}|x) = C(Y_2,I;\hat{K}|x)$ via the boundary condition (\ref{e: boundary condition}). Analogously, application of the $\rho_1$-automorphism leads to the non-unitary module $F_{adj \otimes tw}$ that contains a unitary lowest-weight submodule which can be extracted at the field level through the condition $C(Y_1,Y_2,I;\hat{K}|x) = C(Y_1,I;\hat{K}|x)$ imposed by the boundary asymptotic behavior (\ref{e: boundary condition}).

\subsubsection{Module $R(k)\bar R(\bar k){}^T = \begin{pmatrix} \label{ModuleAA}
1 & 0 \\
0 & 1
\end{pmatrix}$}

This case results from the composition of automorphisms $\rho_1 \rho_2$, which yields the vacuum
\begin{equation}
    \ket{0}_{adj \otimes adj} = \rho_1\rho_2(\ket{0}_{tw \otimes tw}) = \exp\bigg\{-2(e^1_{\nu 1} f^1{}_1{}^\nu - e^1_{\nu 2}f^1{}_2{}^\nu + e^2_{\nu 1}f^2{}_1{}^\nu - e^2_{\nu 2}f^2{}_2{}^\nu)\bigg\}\,
\end{equation}
with annihilation operators $\{e^1_{\nu 1}\,, e^2_{\nu 1}\,, e^1_{\nu 2} \,, e^2_{\nu 2}\}$ and obviously leads to the product of non-unitary adjoint modules $F_{adj\otimes adj}$. Module $F_{adj\otimes adj}$ contains a unitary trivial submodule that in terms of fields has the form of $C(Y_1,Y_2,I;\hat{K}|x) = C(0,0,I;\hat{K}|0)$.

\subsubsection{Module $R(k)\bar R(\bar k){}^T = \begin{pmatrix} \label{ModuleTA2}
0 & 1 \\
1 & 0
\end{pmatrix}$ or $
\begin{pmatrix}
0 & -1 \\
-1 & 0
\end{pmatrix}$}

Application of the automorphisms $\psi_+$ and $\psi_-$ reproduces these modules. We can define $\psi_+$ on $\{e^i_{\nu A}, f^i{}_A{}^\nu\}$ as (the action on other oscillators is trivial)
\begin{align}
    &\psi_+(e^1_{\nu 2})= \frac{1}{2}(e^1_{\nu 2} - e^2_{\nu 2} - f^1{}_2{}^\nu - f^2{}_2{}^\nu), \quad \psi_+(e^2_{\nu 2})= \frac{1}{2}(-e^1_{\nu 2} + e^2_{\nu 2} - f^1{}_2{}^\nu - f^2{}_2{}^\nu)\,,\\
    &\psi_+(f^1{}_2{}^\nu)= \frac{1}{2}(e^1_{\nu 2} + e^2_{\nu 2} + f^1{}_2{}^\nu - f^2{}_2{}^\nu), \quad \psi_+(f^2{}_2{}^\nu)= \frac{1}{2}(e^1_{\nu 2} + e^2_{\nu 2} - f^1{}_2{}^\nu + f^2{}_2{}^\nu)\,.
    \nonumber
\end{align}

Under this automorphism the vacuum transforms to

\begin{equation}
    \psi_+(\ket{0}_{tw \otimes tw})= \exp\bigg\{-2(e^1_{\nu 1} f^{1 \nu}_1 - e^1_{\nu 2} f^{2 \nu}_2 + e^2_{\nu 1} f^{2 \nu}_1 - e^2_{\nu 2} f^{1 \nu}_2)\bigg\}\,.
\end{equation}

The set of annihilation operators for this vacuum contains linear combinations of the $\{e^i_{\nu A}, f^i{}_A{}^\nu\}$ oscillators and is more conveniently described in terms of oscillators $\{e^\pm_{\nu A}, f^\pm{}_A{}^\nu\}$: $\{e^+_{\nu 1}\,, e^-_{\nu 1}\,, e^+_{\nu 2}\,, f^-{}_2{}^\nu\}$. In these terms one can see that the module is entirely analogous to \ref{ModuleTA1}, also being product of adjoint and twisted-adjoint modules of the standard HS theory. Therefore, similarly quotienting away the adjoint part, generated by the oscillators $\{f^+{}_1{}^\nu, -f^+{}_2{}^\nu\}$, yields a unitary lowest-weight submodule equivalent to the standard twisted-adjoint module. The field realization of this submodule is $C(Y_1,Y_2,I;\hat{K}|x) = C(Y_1-Y_2,I;\hat{K}|x)$. Likewise, the automorphism $\psi_-$ leads to the module that is a product of adjoint and twisted-adjoint modules with a lowest-weight unitary submodule which field realization is $C(Y_1,Y_2,I;\hat{K}|x) = C(Y_1+Y_2,I;\hat{K}|x)$.

\subsubsection{Module $R(k)\bar R(\bar k){}^T = \begin{pmatrix} \label{ModuleM}
0 & -1 \\
1 & 0
\end{pmatrix}$ or $
\begin{pmatrix}
0 & 1 \\
-1 & 0
\end{pmatrix}$}

These entangled modules result from the automorphisms $\rho_1 \psi_+$ and $\rho_2 \psi_+$. The automorphism $\rho_1 \psi_+$ yields the vacuum
\begin{equation}
    \ket{0}_{ent}=\exp\bigg\{-2(e^1_{\nu 1} f^1{}_1{}^\nu - e^1_{\nu 2} e^2_{\nu 2} + e^2_{\nu 1} f^2{}_1{}^\nu  + f^1{}_2{}^\nu f^2{}_2{}^\nu )\bigg\}\,.
\end{equation}
Note that the vacuum $\ket{0}_{ent}$ is not self-conjugated with respect to the conjugation rules (\ref{e: conj rules ef})
\begin{equation}
    \ket{0}_{ent}^\dagger=\exp\bigg\{-2(e^1_{\nu 1} f^1{}_1{}^\nu - f^1{}_2{}^\nu f^2{}_2{}^\nu + e^2_{\nu 1} f^2{}_1{}^\nu  + e^1_{\nu 2} e^2_{\nu 2} )\bigg\} \neq \ket{0}_{ent}\,.
\end{equation}
Therefore, to introduce a bilinear form we have to impose a different conjugation rules on oscillators $\{e^i_{\nu A}, f^i{}_A{}^\nu\}$:
\begin{equation}
     (e^i_{\nu 1})^\dagger = f^i{}_1{}^\nu\,, \quad  (e^1_{\nu 2})^\dagger = f^1{}_2{}^\nu\,, \quad (e^2_{\nu 2})^\dagger = -f^2{}_2{}^\nu\,.
\end{equation}
These rules result in a wrong conjugation of non-compact generators: $\{(t^{+1}{}^\mu_\nu)^\dagger = t^{-1}{}^\mu_\nu\,, (t^{+2}{}^\mu_\nu)^\dagger = -t^{-2}{}^\mu_\nu\}$ (\ie the bilinear form is not invariant).

The set of annihilation operators for this vacuum mixes $\{e^i_{\nu A}, f^i{}_A{}^\nu\}$ : $v^-_{\nu a}=\{e^1_{\nu 1}\,, e^2_{\nu 1}\,, \frac{1}{\sqrt{2}}(f^2{}_2{}^\nu - e^1_{\nu 2})\,, \frac{1}{\sqrt{2}}(f^1{}_2{}^\nu - e^2_{\nu 2})\}$. The corresponding set of creation operators is $v^+_{\nu a} = \{f^1{}_1{}^\nu\,, f^2{}_1{}^\nu\,,\frac{1}{\sqrt{2}}(e^2_{\nu 2} + f^1{}_2{}^\nu)\,,\frac{1}{\sqrt{2}}(e^1_{\nu 2} + f^2{}_2{}^\nu)\}$ so that $[v^-_{\nu a}\,, v^+_{\mu b}] = \delta_{\nu \mu} \delta_{ab}$. As can be seen, the norm $||v^+_{\nu 3}*\ket{0}_{ent}||^2=-1$ (\ie the bilinear form is not positive-definite). Therefore, the module is not unitary. Moreover, the lowest/highest weight structure is also lost. While we have
\begin{gather}
    t^{-}{}^\mu_\nu = e^1_{\nu 1}f^1{}_2{}^\mu + e^2_{\nu 1}f^2{}_2{}^\mu \equiv \frac{1}{\sqrt{2}}\bigg(v_{\nu 1}^-(v_{\mu 4}^- + v_{\mu 3}^+) + v_{\nu 2}^-(v_{\mu 4}^+ + v_{\mu 3}^-)\bigg) \Rightarrow\\
    t^{-}{}^\mu_\nu*\ket{0}_{ent} = 0\,, \quad \quad t^{-}{}^\mu_\nu*G(v^+_3\,,v^+_4)*\ket{0}_{ent} = 0\,, \text{ for any function $G(v^+_3\,,v^+_4)$\,,}
\end{gather}
total energy and helicity no longer act diagonally as in (\ref{e: noneigen vacuum ab})
\begin{gather}
    E = \sum_{i=1}^{2}\bigg(f^i{}_1{}^\lambda e^i_{\lambda 1} + f^i{}_2{}^\lambda e^i_{\lambda 2} \bigg) \equiv (v_{\lambda 1}^+ v_{\lambda 1}^- + v_{\lambda 2}^+ v_{\lambda 2}^- + v_{\lambda 3}^+ v_{\lambda 4}^+ - v_{\lambda 3}^- v_{\lambda 4}^-)\,, \\
    H = \sum_{i=1}^{2}\bigg(f^i{}_1{}^\lambda e^i_{\lambda 1} - f^i{}_2{}^\lambda e^i_{\lambda 2} \bigg) \equiv (v_{\lambda 1}^+ v_{\lambda 1}^- + v_{\lambda 2}^+ v_{\lambda 2}^- - v_{\lambda 3}^+ v_{\lambda 4}^+ + v_{\lambda 3}^- v_{\lambda 4}^-)\,,
\end{gather}
\begin{equation}
    E*{\ket{0}_{ent}}= 2(1 + (e^1_{\nu 2} + f^2{}_2{}^\nu)(e^2_{\nu 2} + f^1{}_2{}^\nu ))\ket{0}_{ent} = 2(1 + 2 v^+_{\lambda 3} v^+_{\lambda 4})\ket{0}_{ent}\,,
\end{equation}
\begin{equation}
    H*{\ket{0}_{ent}}= 2(1 - (e^1_{\nu 2} + f^2{}_2{}^\nu)(e^2_{\nu 2} + f^1{}_2{}^\nu ))\ket{0}_{ent}=2(1 - 2 v^+_{\lambda 3} v^+_{\lambda 4})\ket{0}_{ent}\,.
\end{equation}

This point can be further illustrated by taking the flat limit in the free equations. To that end one can restore the $AdS_4$ radius in the equation (\ref{e:cov7}) and take the limit $\lambda \rightarrow 0$ after rescaling $y^\alpha \rightarrow \lambda^{1/2} y^\alpha; \quad \partial_\alpha = \lambda^{-1/2} \partial_\alpha$. For the module under consideration this yields the equation of the form
\begin{equation}
    \bigg(\d_x + \frac{i}{2} e^{\alpha\dot\alpha}\bigg(\partial_{\alpha 1} \bar{\partial}_{\dot{\alpha} 1} +\partial_{\alpha 1} \bar{\partial}_{\dot{\alpha} 2} +\partial_{\alpha 2} \bar{\partial}_{\dot{\alpha} 2} - \partial_{\alpha 2} \bar{\partial}_{\dot{\alpha} 1}\bigg) \bigg)C(Y_1,Y_2,I;\hat{k},\hat{\bar k}|x) = 0\,. \label{e:flatlimit}
\end{equation}

This equation admits plane wave solutions
\begin{equation}
     C(Y_1,Y_2,I;\hat{k},\hat{\bar k}|x) = \exp{i\bigg(A^{IJ} \xi_{I \alpha} \bar{\xi}_{J \dot{\alpha}} x^{\alpha\dot\alpha} + \delta^{IJ} \xi_{I \alpha} y^\alpha_J + \delta^{IJ} \bar{\xi}_{I \dot{\alpha}} \bar{y}_{J}^{\dot{\alpha}}\bigg)}\,, \label{e:planewave}
\end{equation}
where $\xi$, $\bar{\xi}$ are the Fourier partners for $y$ and $\bar{y}$ and
\begin{equation}
A = \frac{1}{2}\begin{pmatrix}
1 & 1 \\
-1 & 1
\end{pmatrix}\,.
\end{equation}

As this matrix is not diagonalizable in real numbers, the positive and negative frequencies cannot be separated, thus the module is not highest/lowest weight. Therefore, the module in question is not suitable for the description of physical states.

\subsection{Truncation to unitary submodules}

For the $B_2$ model to be a generalization of the standard $4d$ HS theory, one has to ensure that there is a way to consistently eliminate all non-unitary modules in the zero-form sector of the full nonlinear system. As in the standard theory, nonlinear Coxeter system admits an automorphism $\hat{K}_v \rightarrow -\hat{K}_v$ related to the total-parity of Klein operators. Finding an invariant subsystem of this automorphism is rather straightforward.
Indeed, the only equation in the nonlinear system (\ref{e:nonlinear system 1})-(\ref{e:nonlinear system 5}) that has an explicit dependence on the Klein operators is (\ref{e:nonlinear system 5})
\begin{equation}
     S*S = i \bigg(dZ^{An}dZ_{An} + \sum_i \sum_{v \in \mathcal{R}_i} \bigg[F_{i*}(B) \frac{v^n v^m}{(v,v)}dz^\alpha_n dz_{\alpha m} * \varkappa_v \hat{k}_v  + \bar F_{i*}(B) \frac{v^n v^m}{(v,v)}d\bar z^{\dot\alpha}_n d\bar z_{\dot\alpha m} * \bar\varkappa_v \hat{\bar k}_v\bigg]\bigg)\,.
\end{equation}
The invariance under the total parity transformation of the dressed Klein operators condition demands the equation to be even in Klein operators, hence $F_{i*}(B)$ and $\bar F_{i*}(B)$ to be odd. One can confirm by inspection that zero-form modules \ref{ModuleTA1} and \ref{ModuleTA2} and none other fit this restriction. Indeed, since the product of matrices $R\bar{R}^T$ determines the type of module, using the Klein-matrix correspondence (\ref{e: B2 R first})-(\ref{e: B2 R last}) we can make sure that the product of odd number of Klein operators lead to modules \ref{ModuleTA1} and \ref{ModuleTA2} and even number of $\hat{K}_v$ yields the remaining type of modules. Since zero-form fields $B(Y,Z,I;\hat{K}|x)$ valued in modules \ref{ModuleTA1} and \ref{ModuleTA2} guarantee the invariance of the r.h.s of nonlinear equation (\ref{e:nonlinear system 5}), a truncation of the system to fields $B(Y,Z,I;-\hat{K}|x) = -B(Y,Z,I;\hat{K}|x)$ and $W(Y,Z,I;-\hat{K}|x) = W(Y,Z,I;\hat{K}|x)$ is consistent. Therefore, the $B_2$ model is a consistent extension of the standard HS theory that faithfully represents all the massless single-particle states of the standard theory as unitary submodules of \ref{ModuleTA1} and \ref{ModuleTA2} encode generalized Weyl tensors. In particular, it contains the spin-2 gravity sector of the standard theory.

\subsection{Summary}

Summarizing the results, we can distinguish between $B_2$-modules of four categories (Modules \ref{ModuleTA1} and \ref{ModuleTA2} being equivalent via a change of variables). Module \ref{ModuleTT} is unitary. Modules \ref{ModuleAA}, \ref{ModuleTA1} and \ref{ModuleTA2} admit truncation to unitary submodules supported by functions of the reduced number of spinor variables, such as, for instance, $C(Y_1,Y_2,I;\hat{K}|x) = C(Y_2,I;\hat{K}|x)$ in \ref{ModuleTA1}. Let us stress that this formal truncation results from  imposing the boundary conditions on the fields at the linear order, as explained in Section \ref{BoundaryConditions}. The truncation to the unitary submodules in the full nonlinear system is less obvious due to the non-trivial intermixing of different $B_2$ modules, each with their own restriction to unitary submodule conditions, at the vertices. This situation is reminiscent of the analogous entanglement problem of topological and dynamical fields in the $3d$ HS theory \cite{Korybut:2022kdx} so that the solution beyond the linear order could be provided order by order by a suitable shifted or differential homotopy.

Modules \ref{ModuleM}, not being unitary lower-weight modules, are of a new type not present in the standard HS theory. While field equations can be transformed to resemble equation on the product of two twisted-adjoint modules, modules \ref{ModuleM} are not isomorphic to the $M_{tw\otimes tw}$ and form a distinct family of modules specific to the Coxeter extension. Entangled modules arise due to nontrivial mixing of $Y_A^n$ oscillators induced by the action of Coxeter group and their role is to be explored. Moving from the $B_2$ to other Coxeter groups of higher rank the number of entangled modules  rapidly increases. For example, in $B_2$ model entanglement occurs if we combine a transposition with a reflection with respect to the basis vector $e_i$ of the root space. In case of a general $B_p$ model, $n$-cycles and its combinations with reflection with respect to $e_i$ also lead to the entangled modules whose properties are yet to be studied.

Generalization of the approach developed in this section to higher order Coxeter groups is a fascinating topic for the future. A disentanglement criterion for a higher order CHS models allows us to easily separate CHS modules into two groups: products of standard HS modules with well-known properties and entangled modules, which require special considerations.

\section{First On-Shell Theorem}\label{FOST}

In this section we adapt the shifted homotopy technique of \cite{Didenko:2018fgx} in a way applicable to CHS models and extract the First On-Shell Theorem from the general CHS model including the $B_2$ theory.

\subsection{Modified shifted homotopy}\label{Shifted homotopies}

\subsubsection{Contracting homotopy operator}
To reconstruct interaction vertices which look schematically
\begin{gather}
    \d_x \omega = -\omega*\omega + \Upsilon(\omega\,,\omega\,,C) + \Upsilon(\omega\,,\omega\,,C\,,C) + ...\,,\\
    \d_x C = -[\omega\,,C]_* + \Upsilon(\omega\,,C\,,C) + ...\,
\end{gather}
from nonlinear CHS equations one has to repetitively solve equations of the form
\begin{equation}\label{e: generalhomoeq}
    \d_Z f(Y,Z,I;\hat{K};dZ) = g(Y,Z,I;\hat{K};dZ)\,,
\end{equation}
with some $g(Y,Z,I;\hat{K};dZ)$ built from nonlinear combinations of the lower-order fields, that obeys the consistency condition $\d_Z g(Y,Z,I;\hat{K};dZ) = 0$. These equations can be solved by modifying a well-known homotopy trick. Firstly, similarly to \cite{Didenko:2018fgx}, we choose a nilpotent homotopy operator
\begin{equation}
    \partial = (Z^A_n + I_n Q^A_n) \frac{\partial}{\partial d Z^A_n}\,,
\end{equation}
where $Q^A_n$ is some $ Z^A_n$ independent operator,
\begin{equation}
    \frac{\partial Q^B_m}{\partial Z^A_n} = 0\,.
\end{equation}
Note that idempotents $I_n$ appear in the definition of the homotopy operator and in any object derived from $\partial$. Therefore, we denote the dressed shift parameters as
\begin{equation}
    \hat{Q}_n^A := I_n Q_n^A\,.
\end{equation}

Then we introduce operator
\begin{equation}
    N=\{\d_{Z}, \partial\}
\end{equation}
and its almost inverse
\begin{equation}
    N^{*} g(Y, Z, I; dZ):=\int_{0}^{1}  \frac{d t}{t} g(Y, t Z_n-(1-t) \hat{Q}_n, I; t dZ), \quad g(Y, -\hat{Q}_n, I; 0)=0\,.
\end{equation}
	
The contracting homotopy operator
\begin{equation}
    \Delta_{Q}:=\partial N^{*}, \quad \Delta_{Q} g(Y, Z ; dZ)=\left(Z^{A}_n+\hat{Q}^{A}_n\right)
\frac{\partial}{\partial dZ^{A}_n} \int_{0}^{1} \frac{d t }{t} g(Y, t Z_i-(1-t) \hat{Q}_i, I; t dZ) \label{e: shifted homopoty operator}
\end{equation}
satisfies the resolution of identity
\begin{equation}
    \{\d_Z, \Delta_Q \} = 1 - h_Q\,, \label{e: resolution of identity}
\end{equation}
with $h_Q$ being a cohomology projector
\begin{equation}
    h_Q f(Z, I; dZ) = f(-\hat{Q}_n, I; 0)\,.
\end{equation}
	
Hence, resolution of identity yields a particular solution to (\ref{e: generalhomoeq})
\begin{equation}\label{fqg}
    f = \Delta_Q g
\end{equation}
as long as $h_Q g = 0$. General solution of (\ref{e: generalhomoeq}) is
\begin{equation}
    f(Y,Z,I;dZ) = \Delta_Q g(Y,Z,I;dZ) + h(Y,I) + \d_Z \epsilon(Y,Z,I;dZ)\,,
\end{equation}
where $h(Y,I)$ is a cohomology representative and $\epsilon(Y,Z,I;dZ)$ is a gauge transformation parameter ($\d_Z$-exact term). Transition from one $Q$ to another affects the $h$ and $\epsilon$-dependent parts of the solution. As a result, the choice of $Q$ in (\ref{fqg}) affects the choice of field variables and is essential for the analysis of locality.

\subsubsection{Properties of $\Delta_Q$}

Here properties of the operators $\Delta_Q$ and $h_Q$ are presented. Most of them directly generalize those of \cite{Didenko:2018fgx} with a notable change in the star-exchange formulas. Firstly, operators $\Delta_Q$ and $\Delta_P$ anti-commute
\begin{equation}
    \Delta_Q \Delta_P = - \Delta_P \Delta_Q\,
\end{equation}
that follows from a direct application of (\ref{e: shifted homopoty operator}).
Analogously, anti-symmetry in the indices $P$ and $Q$ is present in
\begin{equation}
    h_P \Delta_Q = - h_Q \Delta_P\,.
\end{equation}
Other important relations are
\begin{equation}
    h_P h_Q = h_Q\,, \quad \Delta_P h_Q = 0\,
\end{equation}
and
\begin{equation}
    \Delta_B - \Delta_A = [\d_Z,\Delta_A \Delta_B] + h_A \Delta_B \label{e: delta difference}
\end{equation}
that follows from the resolution of identity (\ref{e: resolution of identity}).

Confining ourselves to the holomorphic variables $(Z^n_A, Y^n_A, \hat K) \rightarrow (z^n_\mu, y^n_\mu, \hat k)$,
let us write down how $\Delta_b \Delta_a$ and $h_c \Delta_b \Delta_a$ act
\begin{equation}
    \Delta_b \Delta_a f(y,z, I) dz^{n \mu} dz_{n \mu} = 2 \int_{[0,1]^3} d^3 \tau \delta(1-\tau_1 -
\tau_2 - \tau_3) (z+\hat{b})_{m \nu} (z+\hat{a})^{m \nu} f(y, \tau_1 z - \tau_3 \hat{b} -\tau_2 \hat{a},I)\,, \label{dd}
\end{equation}
\begin{equation}
    h_c \Delta_b \Delta_a f(y,z, I) dz^{n \mu} dz_{n \mu} = 2 \int_{[0,1]^3} d^3 \tau \delta(1-
\tau_1 - \tau_2 - \tau_3)(\hat{b}-\hat{c})_{m\nu} (\hat{a}-\hat{c})^{m\nu} f(y, -\tau_1 \hat{c} - \tau_3 \hat{b} - \tau_2 \hat{a}, I)\, \label{hdd}\,,
\end{equation}
where $\{\hat{a}_n,\hat{b}_n ,\hat{c}_n\} = \{I_n a_n, I_n b_n, I_n c_n\}$.
	
Note that from (\ref{hdd}) it follows that for any parameter $\kappa$
\begin{equation}
    h_{(\kappa + 1)q_2 - \kappa q_1} \Delta_{q_2} \Delta_{q_1} = 0 \,. \label{hdd=0}
\end{equation}

Application of (\ref{hdd}) to the $\hat{\gamma}_v$ defined as
\begin{equation}
    \hat{\gamma}_v = \exp\bigg(i \frac{v^p v^q}{(v,v)} z_{\alpha p}y^\alpha_q \bigg)\frac{v^n v^m}{(v,v)}dz^\alpha_n dz_{\alpha m} \hat{k}_v
\end{equation}
yields
\begin{equation}
    h_c \Delta_b \Delta_a \hat{\gamma}_v = 2 \int_{[0,1]^3} d^3 \tau \delta(1-\tau_1 - \tau_2 - \tau_3)
 (b-c)_{n\nu} (a-c)^\nu_m \frac{v^n v^m}{(v,v)} exp\bigg\{-i\frac{v^p v^q}{(v,v)}(\tau_1 c + \tau_2 a + \tau_3 b)_{p\alpha} y^\alpha_q\bigg\} \hat{k}_v \,.\label{e: hdd_int}
\end{equation}
Note that we absorbed idempotents that go with shift parameters $a, b, c$ into vectors $v^n$. Element $\hat{\gamma}_v$ commutes with all $Y^A, Z^A$ variables, but is non-trivialy transformed by the action of Klein operators
\begin{equation}
    \hat{k}_u * \hat{\gamma}_v = \hat{\gamma}_{R_u(v)}*\hat{k}_u\,.
\end{equation}

Another important property of the operators $\Delta_Q$ and $h_P$, implying the $z$-independence of the vertices resulting from the nonlinear equations, is
\begin{equation}\label{e: hdd_0}
    \left(\Delta_{d}-\Delta_{c}\right)\left(\Delta_{a}-\Delta_{b}\right) \hat{\gamma}_v=\left(h_{d}-h_{c}\right) \Delta_{a} \Delta_{b} \hat{\gamma}_v\,.
\end{equation}
Indeed, one can check that $\d_z \hat{\gamma}_v = 0\,, h_a \hat{\gamma}_v = 0\,, h_a \Delta_b \hat{\gamma}_v = 0$ and $\Delta_a \Delta_b \Delta_c \hat{\gamma}_v = 0$. Combining all these facts with (\ref{e: delta difference}) one arrives at  (\ref{e: hdd_0}).

A practically important consequence of (\ref{e: hdd_0}) at $d = a$ is
\begin{equation}\label{e: hdd}
    (\Delta_c \Delta_b - \Delta_c \Delta_a + \Delta_b \Delta_a) \hat{\gamma}_v = h_c \Delta_b \Delta_a \hat{\gamma}_v\,.
\end{equation}

Star-exchange relations	with $z$-independent elements in a CHS theory take the form
\begin{equation}
    \Delta_{q + \alpha y} ( C(y,I) * \phi(z,y,I;\hat{k}_v;dZ)) = C(y,I) * \Delta_{q + (1-\alpha)I p + \alpha y} \phi(z,y,I;\hat{k}_v;dZ)\,, \label{e: Cexchange}
\end{equation}
\begin{equation}
    \Delta_{q + \alpha y} ( \phi(z,y,I;dZ) * \hat{k}_v * C(y,I) ) = \Delta_{q + (1+\alpha)I R_v(p) + \alpha y} (\phi(z,y,I;dZ) *\hat{k}_v) * C(y,I) \label{e: exchangeC}\,,
\end{equation}
where
\begin{equation}
p^n_\mu C(Y,I;\hat{K}) = C(Y,I;\hat{K})p^n_\mu := -i \frac{\partial}{\partial y^\mu_n} C(Y,I;\hat{K})\,.
\end{equation}

Comparison with the star-exchange in the standard framework of \cite{Didenko:2018fgx} shows that shift parameter $p^n_\mu$ acquires an idempotent dressing in both formulas and $p^n_\mu$ is reflected with respect to the Klein $\hat{k}_v$ in (\ref{e: exchangeC}). Standard HS theory corresponds to the $\mathds{Z}_2$ case with a unique reflection matrix $R_k = -1$.

The central elements in the Coxeter models can be obtained by summation of $\hat{\gamma}_v$ over root vectors of any conjugacy class $\mathcal{R}_i$ with equal weights to preserve the $\mathcal{C}$ invariance
\begin{equation}\label{e: gammai}
    \hat{\gamma}_i = \sum_{v \in \mathcal{R}_i} \hat{\gamma}_v\,.
\end{equation}

Modified star-exchange properties (\ref{e: Cexchange}) and (\ref{e: exchangeC}) yield
\begin{equation}\label{e: gammaexch}
\Delta_{q + \alpha y} \hat{\gamma}_v * C(y,I) = C(y,I) * \Delta_{q + \alpha y + (1-\alpha)I p - (1+\alpha)I R_v(p)} \hat{\gamma}_{v}\,.
\end{equation}
Therefore,
\begin{equation}
   \Delta_{q + \alpha y}  \hat{\gamma}_i * C(y,I) = C(y,I) * \sum_{v \in \mathcal{R}_i}  \Delta_{q + \alpha y + (1-\alpha)I p - (1+\alpha)I R_v(p)} \hat{\gamma}_{v}\,.
\end{equation}

Note that the field $C$ does not depend on Klein operators in star-exchange formulas (\ref{e: Cexchange})-(\ref{e: exchangeC}) and (\ref{e: gammaexch}) (in \cite{Didenko:2018fgx} the analogous formulas have a Klein-dependent field $C$). This is important because in a general CHS model Klein operators $\hat{K}_v$ and $\hat{K}_u$ do not commute (\ref{e: qIK2}). Therefore, it is necessary to control the placement and order of Klein operators in each expression. By default we pull all Klein operators from the fields to the far right position in each expression and arrange them in the order in which the fields containing them are located.

\subsection{First order of a general CHS}

A vacuum solution to the full nonlinear general CHS system (\ref{e:nonlinear system 1})-(\ref{e:nonlinear system 5}) is taken in the form
\begin{align}
    B_0(Y,Z,I;\hat{K}|x) &= 0\,,\\
    S_0(Y,Z,I;\hat{K}|x) &= dZ^{\alpha n}Z_{\alpha n}\,,\\
    W_0(Y,Z,I;\hat{K}|x) &= \omega_0(Y,I;\hat{K}|x)\,,
\end{align}
where $\omega_0$ is some flat connection,
\begin{equation}
    \d_x \omega_0(Y,I;\hat{K}|x) + \omega_0(Y,I;\hat{K}|x)*\omega_0(Y,I;\hat{K}|x)=0\,.
\end{equation}

It is important to notice that
\begin{equation}
    [S_0, f(Y,Z,I;\hat{K})]_* = -2i dZ^A_n \frac{\partial}{\partial Z^A_n} f(Y,Z,I;\hat{K}) = -2i \d_Z f(Y,Z,I;\hat{K})\,. \label{e: dz}
\end{equation}

Then, in the first order, equation (\ref{e:nonlinear system 4}) yields
\begin{equation}
[S_0, B_1]_* + [S_1, B_0]_* = 0\,.
\end{equation}
From the vacuum solution and (\ref{e: dz}) it follows that $B_1$ is $Z$-independent, $B_1 = C(Y,I;\hat{K}|x)$.
Therefore eq.(\ref{e:nonlinear system 2}) leads to
\begin{equation}
    \d_x C + [\omega, C]_* = 0\,, \label{dC+CC}
\end{equation}
which encodes the covariant constancy equations studied in Section \ref{Derivative and modules}. To simplify resulting equations, here and in the sequel we will combine the background field $\omega_0(Y,I;\hat{K}|x)$ and the $Z$-independent part of first-order fluctuations $\omega_1(Y,I;\hat{K}|x)$ into a single field $\omega(Y,I;\hat{K}|x) = \omega_0(Y,I;\hat{K}|x) + \omega_1(Y,I;\hat{K}|x)$ . We shall consider the resulting equations up to the first order, meaning that out of the total $\omega$ in (\ref{dC+CC}) only the zero-order $\omega_0$ is present since the field $C$ is a first order field.

Expression for $S_1$ via the field $C$ can be extracted from eq.(\ref{e:nonlinear system 5})
\begin{equation}\label{e: S1}
    -2i \d_Z S_1 = i\sum_l \bigg(\eta_l C*\hat{\gamma}_l +  \bar{\eta}_l C * \hat{\bar{\gamma}}_l\bigg)\,,
\end{equation}
where $\hat{\gamma}_l$ and $\hat{\bar{\gamma}}_l$ are central elements (\ref{e: gammai}) corresponding to the conjugacy class $\mathcal{R}_l$.
Then, for $S_1 = S_1^\eta +S_1^{\bar\eta}$, we obtain in the $\eta$-dependent (holomorphic)
sector using standard homotopy
\begin{equation}\label{e: S1_solution}
    S_1^\eta =-\sum_k \frac{\eta_k}{2} \Delta_{0} \big(C * \hat{\gamma}_k \big) =-\sum_k \frac{\eta_k}{2}\sum_{v \in \mathcal{R}_k} \Delta_{0} \big(C * \hat{\gamma}_v \big) = -\sum_k \frac{\eta_k}{2}\sum_{v \in \mathcal{R}_k} C * \Delta_{I p}\hat{\gamma}_v\,.
\end{equation}
	
The next step is to solve eq.(\ref{e:nonlinear system 3})  which yields in the first order
\begin{equation}\label{e: W1}
    \d_z W_1^\eta = \frac{1}{2 i}\big(\d_x S^\eta_1 + \omega * S_1^\eta + S_1^\eta * \omega \big)\,.
\end{equation}

Adopting notation from \cite{Didenko:2018fgx}
\begin{equation}
    t_\mu^n \omega (Y,I;\hat{K}|x) = -i \frac{\partial}{\partial y^\mu_n} \omega(Y,I;\hat{K}|x)\,,
\end{equation}
conventional homotopy leads to
\begin{equation}\label{e: W1eta}
    W_1^\eta = \frac{1}{4 i}\sum_k \eta_k \sum_{v \in \mathcal{R}_k} \bigg(\omega*C*\Delta_{I(t+p)}\Delta_{Ip} \hat{\gamma}_v - C*\omega*\Delta_{I(t+p)}\Delta_{I(t+p-R_v(t))}\hat{\gamma}_v\bigg)\,.
\end{equation}

Now consider equation (\ref{e:nonlinear system 1}). In the first order it yields
\begin{equation}
    \d_x \omega + \omega*\omega + \d_x W_1^\eta + \omega*W_1^\eta + W_1^\eta *\omega + c.c. = 0\,.
\end{equation}
Using (\ref{e: W1eta}) and applying formulas (\ref{e: hdd})-(\ref{e: exchangeC}), (\ref{e: gammaexch}) one can obtain
\begin{equation}
    \d_x \omega + \omega*\omega = \Upsilon^\eta(\omega,\omega,C) + \Upsilon^\eta(\omega,C,\omega) + \Upsilon^\eta(C,\omega,\omega) + c.c.\,,
\end{equation}
where
\begin{equation}\label{e: full_vert_beginning}
    \Upsilon^\eta(\omega,\omega,C) = \frac{1}{4i}\sum_k \eta_k \sum_{v \in \mathcal{R}_k} \omega*\omega*C*h_{I(t_1+t_2+p)}\Delta_{I p}\Delta_{I(p+t_2)}\hat{\gamma}_v\,,
\end{equation}
\begin{multline}
    \Upsilon^\eta(\omega,C,\omega) = -\frac{1}{4i}\sum_k \eta_k \sum_{v \in \mathcal{R}_k} \omega*C*\omega*\bigg(h_{I(t_1+t_2+p)}\Delta_{I(p+t_1+t_2-R_v(t_2))}\Delta_{I(p+t_2)}\hat{\gamma}_v+ \\
    + h_{I(p+t_1+t_2-R_v(t_2))}\Delta_{I(p+t_2-R_v(t_2))}\Delta_{I(p+t_2)}\hat{\gamma}_v\bigg)\,,
\end{multline}
\begin{equation}\label{e: full_vert_ending}
    \Upsilon^\eta(C,\omega,\omega) = \frac{1}{4i}\sum_k \eta_k \sum_{v \in \mathcal{R}_k} C*\omega*\omega*h_{I(t_1+t_2+p)}\Delta_{I(p+t_1+t_2-R_v(t_2))}\Delta_{I(p+t_1+t_2-R_v(t_1+t_2))}\hat{\gamma}_v\,.
\end{equation}

Structurally, vertices (\ref{e: full_vert_beginning})-(\ref{e: full_vert_ending}) resemble those of \cite{Didenko:2018fgx} in the standard HS theory. Therefore, standard First On-Shell Theorem should be present in the vertices decomposition over $AdS_4$ background. However, in the Coxeter HS model we have a variety of Klein operators and related elements $\hat{\gamma}_v$ which results in additional non-standard terms in the expansion over $AdS_4$ background. An important distinction that may be important for finding a connection with the String theory is the presence of multiple constants $\eta_k$. In the next section we consider vertices (\ref{e: full_vert_beginning})-(\ref{e: full_vert_ending}) over $AdS_4$ space both in a general CHS theory and in the $B_2$ case.

An extension of Coxeter-modified shifted homotopies to the differential homotopy, introduced in \cite{Vasiliev:2023yzx}, would be a productive avenue of work, useful at higher orders in the perturbative procedure. However, at the linear level the shifted homotopy is sufficient.

\subsection{First On-Shell Theorem}
\subsubsection{General case}
In this section we calculate $\Upsilon^\eta(\Omega_{AdS},\Omega_{AdS},C)$, $\Upsilon^\eta(\Omega_{AdS},C,\Omega_{AdS})$ and $\Upsilon^\eta(C,\Omega_{AdS},\Omega_{AdS})$ with
$\Omega_{AdS}(Y|x)$ (\ref{e:ansatz}).
Using (\ref{e: hdd_int}), properties of idempotents $I_n$ and reflection matrices $R_v{}^n{}_m$ we obtain
\begin{multline}
    h_{I(t_1+t_2+p)}\Delta_{I p}\Delta_{I(p+t_2)}\hat{\gamma}_v = 2 \int_{[0,1]^3} d^3 \tau \delta(1-
\sum_i \tau_i) t_2{}_{\alpha m}t_1{}^\alpha{}_n \frac{v^n v^m}{(v,v)} \\exp\bigg(-i \frac{v^a v^b}{(v,v)}(y^\alpha_a p_{\alpha b} 
+y^\alpha_a [\tau_1(t_1+t_2)+\tau_2 t_2]_{\alpha b}) \bigg)*\hat{k}_v\,,
\end{multline}
\begin{multline}
    h_{I(t_1+t_2+p)}\Delta_{I(p+t_1+t_2-R_v(t_2))}\Delta_{I(p+t_1+t_2-R_v(t_1+t_2))}\hat{\gamma}_v = 2 \int_{[0,1]^3} d^3 \tau \delta(1-
\sum_i \tau_i) t_2{}_{\alpha m}t_1{}^\alpha{}_n \frac{v^n v^m}{(v,v)} \\
exp\bigg(-i \frac{v^a v^b}{(v,v)}(y^\alpha_a [p+t_1+t_2]_{\alpha b} +y^\alpha_a [\tau_1 t_2+\tau_2 (t_1+t_2)]_{\alpha b}) \bigg)*\hat{k}_v\,,
\end{multline}
\begin{multline}
    h_{I(t_1+t_2+p)}\Delta_{I(p+t_1+t_2-R_v(t_2))}\Delta_{I(p+t_2)}\hat{\gamma}_v = -2 \int_{[0,1]^3} d^3 \tau \delta(1-
\sum_i \tau_i) t_2{}_{\alpha m}t_1{}^\alpha{}_n \frac{v^n v^m}{(v,v)} \\
exp\bigg(-i \frac{v^a v^b}{(v,v)}(y^\alpha_a [p + t_2]_{\alpha b} +y^\alpha_a [\tau_1 t_1 + \tau_2 (t_1+t_2)]_{\alpha b}) \bigg)*\hat{k}_v\,,
\end{multline}
\begin{multline}
    h_{I(p+t_1+t_2-R_v(t_2))}\Delta_{I(p+t_2-R_v(t_2))}\Delta_{I(p+t_2)}\hat{\gamma}_v = -2 \int_{[0,1]^3} d^3 \tau \delta(1-
\sum_i \tau_i) t_2{}_{\alpha m}t_1{}^\alpha{}_n \frac{v^n v^m}{(v,v)} \\
exp\bigg(-i \frac{v^a v^b}{(v,v)}(y^\alpha_a [p + t_2]_{\alpha b} +y^\alpha_a [\tau_1 (t_1 + t_2) + \tau_2 t_2]_{\alpha b}) \bigg)*\hat{k}_v\,.
\end{multline}

Consequently,
\begin{multline}
    \Omega_{AdS}(Y|x)*\Omega_{AdS}(Y|x)*C(Y,I;\hat{K}_C|x) *  h_{I(t_1+t_2+p)}\Delta_{I p}\Delta_{I(p+t_2)}\hat{\gamma}_v \bigg|_{ee} = \\
    = -\frac{1}{4}\frac{v^n v^m}{(v,v)}\bar{H}_{\dot\alpha \dot\beta}\bigg[\bar{y}^{\dot\alpha}_n\bar{y}^{\dot\beta}_m + i \bar{y}^{\dot\alpha}_n \bar{\partial}^{\dot\beta}_m + i \bar{y}^{\dot\beta}_m \bar{\partial}^{\dot\alpha}_n - \bar{\partial}^{\dot\alpha}_n \bar{\partial}^{\dot\beta}_m \bigg]C(\mathds{P}_v(y),\bar{y},I;\hat{K}_C|x)*\hat{k}_v\,,
\end{multline}
\begin{multline}
    C(Y,I;\hat{K}_C|x)*\Omega_{AdS}(Y|x)*\Omega_{AdS}(Y|x) *  h_{I(t_1+t_2+p)}\Delta_{I(p+t_1+t_2-R_v(t_2))}\Delta_{I(p+t_1+t_2-R_v(t_1+t_2))}\hat{\gamma}_v \bigg|_{ee} = \\
    = -\frac{1}{4}\frac{v_k v_l}{(v,v)} \delta^{nm} \delta^{pq} R(\hat{K}_C)^k_n R(\hat{K}_C)^l_p \bar{R}(\hat{K}_C)^w_m \bar{R}(\hat{K}_C)^z_q \bar{H}_{\dot\alpha \dot\beta}
    \bigg[\bar{y}^{\dot\alpha}_w\bar{y}^{\dot\beta}_z - i \bar{y}^{\dot\alpha}_w \bar{\partial}^{\dot\beta}_z -\\
    -i \bar{y}^{\dot\beta}_z \bar{\partial}^{\dot\alpha}_w - \bar{\partial}^{\dot\alpha}_w \bar{\partial}^{\dot\beta}_z \bigg]C(\mathds{P}_v(y),\bar{y},I;\hat{K}_C|x)*\hat{k}_v\,,
\end{multline}
\begin{multline}
    \Omega_{AdS}(Y|x)*C(Y,I;\hat{K}_C|x)*\Omega_{AdS}(Y|x) *  h_{I(t_1+t_2+p)}\Delta_{I(p+t_1+t_2-R_v(t_2))}\Delta_{I(p+t_2)}\hat{\gamma}_v \bigg|_{ee} = \\
    = \frac{1}{4}\frac{v^n v_k}{(v,v)}\delta^{pq} R(\hat{K}_C)^k_p \bar{R}(\hat{K}_C)^l_q \bar{H}_{\dot\alpha \dot\beta}\bigg[\bar{y}^{\dot\alpha}_n\bar{y}^{\dot\beta}_l - i \bar{y}^{\dot\alpha}_n \bar{\partial}^{\dot\beta}_l + i \bar{y}^{\dot\beta}_l \bar{\partial}^{\dot\alpha}_n + \bar{\partial}^{\dot\alpha}_n \bar{\partial}^{\dot\beta}_l \bigg]C(\mathds{P}_v(y),\bar{y},I;\hat{K}_C|x)*\hat{k}_v\,,
\end{multline}
\begin{multline}
    \Omega_{AdS}(Y|x)*C(Y,I;\hat{K}_C|x)*\Omega_{AdS}(Y|x) *  h_{I(p+t_1+t_2-R_v(t_2))}\Delta_{I(p+t_2-R_v(t_2))}\Delta_{I(p+t_2)}\hat{\gamma}_v \bigg|_{ee} = \\
    = \frac{1}{4}\frac{v^n v_k}{(v,v)}\delta^{pq} R(\hat{K}_C)^k_p \bar{R}(\hat{K}_C)^l_q \bar{H}_{\dot\alpha \dot\beta}\bigg[\bar{y}^{\dot\alpha}_n\bar{y}^{\dot\beta}_l - i \bar{y}^{\dot\alpha}_n \bar{\partial}^{\dot\beta}_l + i \bar{y}^{\dot\beta}_l \bar{\partial}^{\dot\alpha}_n + \bar{\partial}^{\dot\alpha}_n \bar{\partial}^{\dot\beta}_l \bigg]C(\mathds{P}_v(y),\bar{y},I;\hat{K}_C|x)*\hat{k}_v\,,
\end{multline}
where we only account for the terms that contain the product of two vierbeins $e$
\begin{equation}\label{e: HH_decomp}
    e^{\nu\dot\nu} e^{\lambda\dot\lambda} = \frac{1}{2}H^{\nu\lambda}\bar\varepsilon^{\dot\nu\dot\lambda}+\frac{1}{2}\bar H^{\dot\nu\dot\lambda}\varepsilon^{\nu\lambda}\,,
\end{equation}
where the basis two-forms are
\begin{align}
    H^{\nu\lambda}=H^{(\nu\lambda)}:=e^{\nu}_{\ \dot\gamma} e^{\lambda\dot\gamma}\,,\qquad
    \bar H^{\dot\nu\dot\lambda}=H^{(\dot\nu\dot\lambda)}:=e^{\ \dot\nu}_{\gamma} e^{\gamma\dot\lambda}\,.
\end{align}
Matrices $R(\hat{K}_C)$ and $\bar{R}(\hat{K}_C)$ are reflections corresponding to the Klein operator $\hat{K}_C$. Since Klein operators can come from fields $C$, $\omega$ (although the latter will not play a role in our
  analysis), as well as via $\hat{\gamma}_v$ for clarity we have introduced a subscript designating the source of the Klein operator, such as $\hat{K}_C$ sourced by the fields $C$.
\begin{equation}\label{e: projector}
    (\mathds{P}_v){}^n{}_m = \delta^n_m - \frac{v^n v_m}{(v,v)}
\end{equation}
is a projector onto a plane orthogonal to the root vector $v$ that reduces the number of spinor variables in the field $C$.

Note that, according to the convention on the arrangement of Klein operators, the expression $C(\mathds{P}_v(y),\bar{y},I;\hat{K}_C|x)*\hat{k}_v \equiv C(\mathds{P}_v(y),\bar{y},I|x)*\hat{k}_v*\hat{K}_C$, \ie we have already pulled Klein operators $\hat{K}_C$ to the right most position.

In the standard HS theory the underlying structure of deformed oscillator algebra guarantees the Lorenz covariant form of the resulting equations \cite{Vasiliev:1999ba}. Since the framed Cherednik algebra (\ref{e:cherednik comm}) is a generalization of the deformed oscillator algebra respecting $sl_2(\mathds{R})$ one can carry out the same reasoning, implying that terms $\omega\omega$ and $\omega e$ cancel out in the vertices.

The resulting holomorphic vertices are
\begin{equation}\label{e: COST OOC}
    \Upsilon^\eta(\Omega_{AdS},\Omega_{AdS},C) = \frac{i}{16}\sum_k \eta_k \sum_{v \in \mathcal{R}_k} \frac{v^n v^m}{(v,v)}\bar{H}_{\dot\alpha \dot\beta}\bigg[\bar{y}^{\dot\alpha}_n\bar{y}^{\dot\beta}_m + i \bar{y}^{\dot\alpha}_n \bar{\partial}^{\dot\beta}_m + i \bar{y}^{\dot\beta}_m \bar{\partial}^{\dot\alpha}_n - \bar{\partial}^{\dot\alpha}_n \bar{\partial}^{\dot\beta}_m \bigg]C(\mathds{P}_v(y),\bar{y},I;\hat{K}_C|x)*\hat{k}_v\,,
\end{equation}
\begin{multline}\label{e: COST COO}
    \Upsilon^\eta(C,\Omega_{AdS},\Omega_{AdS}) = \frac{i}{16}\sum_k \eta_k \sum_{v \in \mathcal{R}_k} \frac{v_k v_l}{(v,v)} \delta^{nm} \delta^{pq} R(\hat{K}_C)^k_n R(\hat{K}_C)^l_p \bar{R}(\hat{K}_C)^w_m \bar{R}(\hat{K}_C)^z_q \\
    \bar{H}_{\dot\alpha \dot\beta}\bigg[\bar{y}^{\dot\alpha}_w\bar{y}^{\dot\beta}_z - i \bar{y}^{\dot\alpha}_w \bar{\partial}^{\dot\beta}_z -i \bar{y}^{\dot\beta}_z \bar{\partial}^{\dot\alpha}_w - \bar{\partial}^{\dot\alpha}_w \bar{\partial}^{\dot\beta}_z \bigg]C(\mathds{P}_v(y),\bar{y},I;\hat{K}_C|x)*\hat{k}_v\,,
\end{multline}
\begin{multline}\label{e: COST OCO}
    \Upsilon^\eta(\Omega_{AdS},C,\Omega_{AdS}) = \frac{i}{8}\sum_k \eta_k \sum_{v \in \mathcal{R}_k}\frac{v^n v_k}{(v,v)}\delta^{pq} R(\hat{K}_C)^k_p \bar{R}(\hat{K}_C)^l_q \bar{H}_{\dot\alpha \dot\beta}\bigg[\bar{y}^{\dot\alpha}_n\bar{y}^{\dot\beta}_l - i \bar{y}^{\dot\alpha}_n \bar{\partial}^{\dot\beta}_l + \\
    + i \bar{y}^{\dot\beta}_l \bar{\partial}^{\dot\alpha}_n + \bar{\partial}^{\dot\alpha}_n \bar{\partial}^{\dot\beta}_l \bigg]C(\mathds{P}_v(y),\bar{y},I;\hat{K}_C|x)*\hat{k}_v\,.
\end{multline}

At $\mathcal{C} = \mathds{Z}_2$ total holomorphic vertex reduces to
\begin{multline}
    \Upsilon_{tot}^{\eta}(\Omega_{AdS},\Omega_{AdS},C) = \Upsilon^\eta(\Omega_{AdS},\Omega_{AdS},C) + \Upsilon^\eta(C,\Omega_{AdS},\Omega_{AdS}) + \Upsilon^\eta(\Omega_{AdS},C,\Omega_{AdS}) = \\
    =\frac{i \eta}{4}\bar{H}_{\dot\alpha \dot\beta} \bar{y}^{\dot\alpha}\bar{y}^{\dot\beta}\bigg(C(0,\bar{y};K_C|x) + C(0,\bar{y};-K_C|x) \bigg)*k - \frac{i \eta}{4}\bar{H}_{\dot\alpha \dot\beta} \bar{\partial}^{\dot\alpha}\bar{\partial}^{\dot\beta}\bigg(C(0,\bar{y};K_C|x) - C(0,\bar{y};-K_C|x) \bigg)*k\,,
\end{multline}
which is the standard $4d$ First On-Shell Theorem \cite{Vasiliev:1988sa}. Moreover, for a general Coxeter group $\mathcal{C}$ the standard form of the First On-Shell Theorem is preserved along any root vector. Indeed, consider the field $C(y,\bar{y},I;\hat{k}_u|x)$, where $u$ is a root vector belonging to the conjugacy class $\mathcal{R}_0$. Then
\begin{equation}
    \Upsilon_{tot}^{\eta_0}(\Omega_{AdS},\Omega_{AdS},C) = \frac{\eta_0}{2i} \frac{u^n u^m}{(u,u)}\bar{H}_{\dot\alpha \dot\beta}\bar{\partial}^{\dot\alpha}_n\bar{\partial}^{\dot\beta}_m C(\mathds{P}_u(y),\bar{y},I;\hat{k}_u|x)*\hat{k}_u + \frac{i \eta_0}{16}\bar{H}_{\dot\alpha \dot\beta}\sum_{v \in \mathcal{R}_0, v\neq \pm u}(...)\,,
\end{equation}
where $\Upsilon_{tot}^{\eta_0}(\Omega_{AdS},\Omega_{AdS},C)$ is a $\eta_0$ part of a total holomorphic vertex.
Projection $\mathds{P}_u$ transforms the vertex to a partial ultra-local form in the terminology of \cite{Didenko:2018fgx} since
it projects out variables $Y$ along the root vector $u$.

\subsubsection{$B_2$}

Now we use reflection matrices of $B_2$ (\ref{e: B2 R first})-(\ref{e: B2 R last}) and conjugacy classes (\ref{e: B2 conjugacy classes}) to derive the explicit form of the First On-Shell Theorem. We introduce the notation
\begin{equation}
    \Upsilon_{tot}^{\eta_1}(\Omega_{AdS},\Omega_{AdS},C) \Leftrightarrow \mathcal{R}_1\,, \quad \Upsilon_{tot}^{\eta_2}(\Omega_{AdS},\Omega_{AdS},C) \Leftrightarrow\mathcal{R}_2\,.
\end{equation}

Then
\begin{multline}\label{e: B2 vert 1}
    \Upsilon_{tot}^{\eta_1}(\Omega_{AdS},\Omega_{AdS},C) = \frac{i \eta_1}{8}\bar{H}_{\dot\alpha \dot\beta}\bigg[\bar{y}^{\dot\alpha}_1\bar{y}^{\dot\beta}_1 + 2 i \bar{y}^{\dot\alpha}_1 \bar{\partial}^{\dot\beta}_1  - \bar{\partial}^{\dot\alpha}_1 \bar{\partial}^{\dot\beta}_1 \bigg]C(0,y_2,\bar{y}_1,\bar{y}_2,I;\hat{K}_C|x)*\hat{k}_1 + \\
    +\frac{i \eta_1}{8}\bar{H}_{\dot\alpha \dot\beta}\delta^{nm} \delta^{pq} R(\hat{K}_C)^1_n R(\hat{K}_C)^1_p \bar{R}(\hat{K}_C)^w_m \bar{R}(\hat{K}_C)^z_q
    \bigg[\bar{y}^{\dot\alpha}_w\bar{y}^{\dot\beta}_z - 2 i \bar{y}^{\dot\alpha}_w \bar{\partial}^{\dot\beta}_z  - \bar{\partial}^{\dot\alpha}_w \bar{\partial}^{\dot\beta}_z \bigg]C(0,y_2,\bar{y}_1,\bar{y}_2,I;\hat{K}_C|x)*\hat{k}_1 + \\
    + \frac{i \eta_1}{4}\bar{H}_{\dot\alpha \dot\beta}\delta^{pq} R(\hat{K}_C)^1_p \bar{R}(\hat{K}_C)^l_q \bar{H}_{\dot\alpha \dot\beta}\bigg[\bar{y}^{\dot\alpha}_1\bar{y}^{\dot\beta}_l - i \bar{y}^{\dot\alpha}_1 \bar{\partial}^{\dot\beta}_l + i \bar{y}^{\dot\beta}_l \bar{\partial}^{\dot\alpha}_1 + \bar{\partial}^{\dot\alpha}_1 \bar{\partial}^{\dot\beta}_l \bigg]C(0,y_2,\bar{y}_1,\bar{y}_2,I;\hat{K}_C|x)*\hat{k}_1 +\\
    +\frac{i \eta_1}{8}\bar{H}_{\dot\alpha \dot\beta}\bigg[\bar{y}^{\dot\alpha}_2\bar{y}^{\dot\beta}_2 + 2 i \bar{y}^{\dot\alpha}_2 \bar{\partial}^{\dot\beta}_2  - \bar{\partial}^{\dot\alpha}_2 \bar{\partial}^{\dot\beta}_2 \bigg]C(y_1,0,\bar{y}_1,\bar{y}_2,I;\hat{K}_C|x)*\hat{k}_2 + \\
    +\frac{i \eta_1}{8}\bar{H}_{\dot\alpha \dot\beta}\delta^{nm} \delta^{pq} R(\hat{K}_C)^2_n R(\hat{K}_C)^2_p \bar{R}(\hat{K}_C)^w_m \bar{R}(\hat{K}_C)^z_q
    \bigg[\bar{y}^{\dot\alpha}_w\bar{y}^{\dot\beta}_z - 2 i \bar{y}^{\dot\alpha}_w \bar{\partial}^{\dot\beta}_z  - \bar{\partial}^{\dot\alpha}_w \bar{\partial}^{\dot\beta}_z \bigg]C(y_1,0,\bar{y}_1,\bar{y}_2,I;\hat{K}_C|x)*\hat{k}_2 + \\
    + \frac{i \eta_1}{4}\bar{H}_{\dot\alpha \dot\beta}\delta^{pq} R(\hat{K}_C)^2_p \bar{R}(\hat{K}_C)^l_q \bar{H}_{\dot\alpha \dot\beta}\bigg[\bar{y}^{\dot\alpha}_2\bar{y}^{\dot\beta}_l - i \bar{y}^{\dot\alpha}_2 \bar{\partial}^{\dot\beta}_l + i \bar{y}^{\dot\beta}_l \bar{\partial}^{\dot\alpha}_2 + \bar{\partial}^{\dot\alpha}_2 \bar{\partial}^{\dot\beta}_l \bigg]C(y_1,0,\bar{y}_1,\bar{y}_2,I;\hat{K}_C|x)*\hat{k}_2\,,
\end{multline}
\begin{multline}\label{e: B2 vert 2}
    \Upsilon_{tot}^{\eta_2}(\Omega_{AdS},\Omega_{AdS},C) = \frac{i \eta_2}{16}\bar{H}_{\dot\alpha \dot\beta}\bigg[\bar{y}^{\dot\alpha}_1\bar{y}^{\dot\beta}_1 + 2 i \bar{y}^{\dot\alpha}_1 \bar{\partial}^{\dot\beta}_1  - \bar{\partial}^{\dot\alpha}_1 \bar{\partial}^{\dot\beta}_1 + \bar{y}^{\dot\alpha}_2\bar{y}^{\dot\beta}_2 + 2 i \bar{y}^{\dot\alpha}_2 \bar{\partial}^{\dot\beta}_2  - \bar{\partial}^{\dot\alpha}_2 \bar{\partial}^{\dot\beta}_2  - 2\bar{y}^{\dot\alpha}_1\bar{y}^{\dot\beta}_2 -2 i \bar{y}^{\dot\alpha}_1 \bar{\partial}^{\dot\beta}_2 - \\
    -2 i \bar{y}^{\dot\alpha}_2 \bar{\partial}^{\dot\beta}_1 + 2 \bar{\partial}^{\dot\alpha}_1 \bar{\partial}^{\dot\beta}_2\bigg]C\bigg(\frac{1}{2}(y_1+y_2),\frac{1}{2}(y_1+y_2),\bar{y}_1,\bar{y}_2,I;\hat{K}_C|x\bigg)*\hat{k}_{12} +\\
    + \frac{i \eta_2}{16}\bar{H}_{\dot\alpha \dot\beta}\delta^{nm}\delta^{pq}\bigg(R(\hat{K}_C)^1_n R(\hat{K}_C)^1_p + R(\hat{K}_C)^2_n R(\hat{K}_C)^2_p - R(\hat{K}_C)^1_n R(\hat{K}_C)^2_p - R(\hat{K}_C)^2_n R(\hat{K}_C)^1_p\bigg)\\
    \bar{R}(\hat{K}_C)^w_m \bar{R}(\hat{K}_C)^z_q \bigg[\bar{y}^{\dot\alpha}_w\bar{y}^{\dot\beta}_z - i \bar{y}^{\dot\alpha}_w \bar{\partial}^{\dot\beta}_z -i \bar{y}^{\dot\beta}_z \bar{\partial}^{\dot\alpha}_w - \bar{\partial}^{\dot\alpha}_w \bar{\partial}^{\dot\beta}_z \bigg]C\bigg(\frac{1}{2}(y_1+y_2),\frac{1}{2}(y_1+y_2),\bar{y}_1,\bar{y}_2,I;\hat{K}_C|x\bigg)*\hat{k}_{12} + \\
    +\frac{i \eta_2}{8}\bar{H}_{\dot\alpha \dot\beta} \delta^{pq}\bigg(R(\hat{K}_C)^1_p - R(\hat{K}_C)^2_p\bigg)\bar{R}(\hat{K}_C)^l_q \bigg[\bar{y}^{\dot\alpha}_1\bar{y}^{\dot\beta}_l - i \bar{y}^{\dot\alpha}_1 \bar{\partial}^{\dot\beta}_l + i \bar{y}^{\dot\beta}_l \bar{\partial}^{\dot\alpha}_1 + \bar{\partial}^{\dot\alpha}_1 \bar{\partial}^{\dot\beta}_l \bigg]\\
    C\bigg(\frac{1}{2}(y_1+y_2),\frac{1}{2}(y_1+y_2),\bar{y}_1,\bar{y}_2,I;\hat{K}_C|x\bigg)*\hat{k}_{12}    + \frac{i \eta_2}{8}\bar{H}_{\dot\alpha \dot\beta} \delta^{pq}\bigg(R(\hat{K}_C)^2_p - R(\hat{K}_C)^1_p\bigg)\bar{R}(\hat{K}_C)^l_q \bigg[\bar{y}^{\dot\alpha}_2\bar{y}^{\dot\beta}_l - i \bar{y}^{\dot\alpha}_2 \bar{\partial}^{\dot\beta}_l \\
    + i \bar{y}^{\dot\beta}_l \bar{\partial}^{\dot\alpha}_2 + \bar{\partial}^{\dot\alpha}_2 \bar{\partial}^{\dot\beta}_l \bigg]C\bigg(\frac{1}{2}(y_1+y_2),\frac{1}{2}(y_1+y_2),\bar{y}_1,\bar{y}_2,I;\hat{K}_C|x\bigg)*\hat{k}_{12} +
\end{multline}
\begin{multline*}
    + \frac{i \eta_2}{16}\bar{H}_{\dot\alpha \dot\beta}\bigg[\bar{y}^{\dot\alpha}_1\bar{y}^{\dot\beta}_1 + 2 i \bar{y}^{\dot\alpha}_1 \bar{\partial}^{\dot\beta}_1  - \bar{\partial}^{\dot\alpha}_1 \bar{\partial}^{\dot\beta}_1 + \bar{y}^{\dot\alpha}_2\bar{y}^{\dot\beta}_2 + 2 i \bar{y}^{\dot\alpha}_2 \bar{\partial}^{\dot\beta}_2  - \bar{\partial}^{\dot\alpha}_2 \bar{\partial}^{\dot\beta}_2  + 2\bar{y}^{\dot\alpha}_1\bar{y}^{\dot\beta}_2 +2 i \bar{y}^{\dot\alpha}_1 \bar{\partial}^{\dot\beta}_2 + \\
    +2 i \bar{y}^{\dot\alpha}_2 \bar{\partial}^{\dot\beta}_1 - 2 \bar{\partial}^{\dot\alpha}_1 \bar{\partial}^{\dot\beta}_2\bigg]C\bigg(\frac{1}{2}(y_1-y_2),-\frac{1}{2}(y_1-y_2),\bar{y}_1,\bar{y}_2,I;\hat{K}_C|x\bigg)*\hat{k}^+_{12} +\\
    + \frac{i \eta_2}{16}\bar{H}_{\dot\alpha \dot\beta}\delta^{nm}\delta^{pq}\bigg(R(\hat{K}_C)^1_n R(\hat{K}_C)^1_p + R(\hat{K}_C)^2_n R(\hat{K}_C)^2_p + R(\hat{K}_C)^1_n R(\hat{K}_C)^2_p + R(\hat{K}_C)^2_n R(\hat{K}_C)^1_p\bigg)\\
    \bar{R}(\hat{K}_C)^w_m \bar{R}(\hat{K}_C)^z_q \bigg[\bar{y}^{\dot\alpha}_w\bar{y}^{\dot\beta}_z - i \bar{y}^{\dot\alpha}_w \bar{\partial}^{\dot\beta}_z -i \bar{y}^{\dot\beta}_z \bar{\partial}^{\dot\alpha}_w - \bar{\partial}^{\dot\alpha}_w \bar{\partial}^{\dot\beta}_z \bigg]C\bigg(\frac{1}{2}(y_1-y_2),-\frac{1}{2}(y_1-y_2),\bar{y}_1,\bar{y}_2,I;\hat{K}_C|x\bigg)*\hat{k}^+_{12} + \\
    +\frac{i \eta_2}{8}\bar{H}_{\dot\alpha \dot\beta} \delta^{pq}\bigg(R(\hat{K}_C)^1_p + R(\hat{K}_C)^2_p\bigg)\bar{R}(\hat{K}_C)^l_q \bigg[\bar{y}^{\dot\alpha}_1\bar{y}^{\dot\beta}_l - i \bar{y}^{\dot\alpha}_1 \bar{\partial}^{\dot\beta}_l + i \bar{y}^{\dot\beta}_l \bar{\partial}^{\dot\alpha}_1 + \bar{\partial}^{\dot\alpha}_1 \bar{\partial}^{\dot\beta}_l \bigg]\\
    C\bigg(\frac{1}{2}(y_1-y_2),-\frac{1}{2}(y_1-y_2),\bar{y}_1,\bar{y}_2,I;\hat{K}_C|x\bigg)*\hat{k}^+_{12}    + \frac{i \eta_2}{8}\bar{H}_{\dot\alpha \dot\beta} \delta^{pq}\bigg(R(\hat{K}_C)^2_p + R(\hat{K}_C)^1_p\bigg)\bar{R}(\hat{K}_C)^l_q \bigg[\bar{y}^{\dot\alpha}_2\bar{y}^{\dot\beta}_l - i \bar{y}^{\dot\alpha}_2 \bar{\partial}^{\dot\beta}_l \\
    + i \bar{y}^{\dot\beta}_l \bar{\partial}^{\dot\alpha}_2 + \bar{\partial}^{\dot\alpha}_2 \bar{\partial}^{\dot\beta}_l \bigg]C\bigg(\frac{1}{2}(y_1-y_2),-\frac{1}{2}(y_1-y_2),\bar{y}_1,\bar{y}_2,I;\hat{K}_C|x\bigg)*\hat{k}^+_{12}\,.
\end{multline*}
Even without specifying the explicit structure of the operator $\hat{K}_C$, it is clear from the vertices (\ref{e: B2 vert 1}) and (\ref{e: B2 vert 2}) that the differential operators can be partially transformed into  operators with respect to variables collinear to the root vectors. To further simplify the form of vertices one has to sort through all possible combination of Klein operators $\hat{k}_v\,, \hat{\bar{k}}_v$. However, as it stated in Section \ref{Derivative and modules}, there is no need to consider all $64$ possible combinations since only the products of the reflection matrices $R(\hat{k}_C)\bar R(\hat{\bar{k}}_C)^T$ (\ref{e: RbarR1})-(\ref{e: RbarR2}) matter.

\subsubsection{$R(\hat{k}_C)\bar R(\hat{\bar{k}}_C)^T = \begin{pmatrix}
1 & 0 \\
0 & 1
\end{pmatrix}$}

\begin{equation}
    \Upsilon_{tot}^{\eta_1}(\Omega_{AdS},\Omega_{AdS},C) = \frac{i \eta_1}{2}\bar{H}_{\dot\alpha \dot\beta}\bar{y}^{\dot\alpha}_1\bar{y}^{\dot\beta}_1 C(0,y_2,\bar{y}_1,\bar{y}_2,I;\hat{K}_C|x)*\hat{k}_1 + \frac{i \eta_1}{2}\bar{H}_{\dot\alpha \dot\beta}\bar{y}^{\dot\alpha}_2\bar{y}^{\dot\beta}_2 C(y_1,0,\bar{y}_1,\bar{y}_2,I;\hat{K}_C|x)*\hat{k}_2\,,
\end{equation}
\begin{multline}
    \Upsilon_{tot}^{\eta_2}(\Omega_{AdS},\Omega_{AdS},C) = \frac{i \eta_2}{4}\bar{H}_{\dot\alpha \dot\beta}(\bar{y}_1 - \bar{y}_2)^{\dot\alpha} (\bar{y}_1 - \bar{y}_2)^{\dot\beta} C\bigg(\frac{1}{2}(y_1+y_2),\frac{1}{2}(y_1+y_2),\bar{y}_1,\bar{y}_2,I;\hat{K}_C|x\bigg)*\hat{k}_{12} + \\
    + \frac{i \eta_2}{4}\bar{H}_{\dot\alpha \dot\beta} (\bar{y}_1 + \bar{y}_2)^{\dot\alpha} (\bar{y}_1 + \bar{y}_2)^{\dot\beta} C\bigg(\frac{1}{2}(y_1-y_2),-\frac{1}{2}(y_1-y_2),\bar{y}_1,\bar{y}_2,I;\hat{K}_C|x\bigg)*\hat{k}^+_{12}\,.
\end{multline}

\subsubsection{$R(\hat{k}_C)\bar R(\hat{\bar{k}}_C)^T = \begin{pmatrix}
-1 & 0 \\
0 & 1
\end{pmatrix}$}\label{Vertex TA1}

\begin{equation}
    \Upsilon_{tot}^{\eta_1}(\Omega_{AdS},\Omega_{AdS},C) = -\frac{i \eta_1}{2}\bar{H}_{\dot\alpha \dot\beta}\bar{\partial}^{\dot\alpha}_1\bar{\partial}^{\dot\beta}_1 C(0,y_2,\bar{y}_1,\bar{y}_2,I;\hat{K}_C|x)*\hat{k}_1 + \frac{i \eta_1}{2}\bar{H}_{\dot\alpha \dot\beta}\bar{y}^{\dot\alpha}_2\bar{y}^{\dot\beta}_2 C(y_1,0,\bar{y}_1,\bar{y}_2,I;\hat{K}_C|x)*\hat{k}_2\,,
\end{equation}
\begin{multline}
    \Upsilon_{tot}^{\eta_2}(\Omega_{AdS},\Omega_{AdS},C) = \frac{i \eta_2}{4}\bar{H}_{\dot\alpha \dot\beta}(\bar{y}_2 - i\bar{\partial}_1)^{\dot\alpha} (\bar{y}_2 - i\bar{\partial}_1)^{\dot\beta} C\bigg(\frac{1}{2}(y_1+y_2),\frac{1}{2}(y_1+y_2),\bar{y}_1,\bar{y}_2,I;\hat{K}_C|x\bigg)*\hat{k}_{12} + \\
    + \frac{i \eta_2}{4}\bar{H}_{\dot\alpha \dot\beta} (\bar{y}_2 + i\bar{\partial}_1)^{\dot\alpha} (\bar{y}_2 + i\bar{\partial}_1)^{\dot\beta} C\bigg(\frac{1}{2}(y_1-y_2),-\frac{1}{2}(y_1-y_2),\bar{y}_1,\bar{y}_2,I;\hat{K}_C|x\bigg)*\hat{k}^+_{12}\,.
\end{multline}

\subsubsection{$R(\hat{k}_C)\bar R(\hat{\bar{k}}_C)^T = \begin{pmatrix}
1 & 0 \\
0 & -1
\end{pmatrix}$}\label{Vertex TA2}

\begin{equation}
    \Upsilon_{tot}^{\eta_1}(\Omega_{AdS},\Omega_{AdS},C) = \frac{i \eta_1}{2}\bar{H}_{\dot\alpha \dot\beta}\bar{y}^{\dot\alpha}_1\bar{y}^{\dot\beta}_1 C(0,y_2,\bar{y}_1,\bar{y}_2,I;\hat{K}_C|x)*\hat{k}_1 - \frac{i \eta_1}{2}\bar{H}_{\dot\alpha \dot\beta}\bar{\partial}^{\dot\alpha}_2\bar{\partial}^{\dot\beta}_2 C(y_1,0,\bar{y}_1,\bar{y}_2,I;\hat{K}_C|x)*\hat{k}_2\,,
\end{equation}
\begin{multline}
    \Upsilon_{tot}^{\eta_2}(\Omega_{AdS},\Omega_{AdS},C) = \frac{i \eta_2}{4}\bar{H}_{\dot\alpha \dot\beta}(\bar{y}_1 - i\bar{\partial}_2)^{\dot\alpha} (\bar{y}_1 - i\bar{\partial}_2)^{\dot\beta} C\bigg(\frac{1}{2}(y_1+y_2),\frac{1}{2}(y_1+y_2),\bar{y}_1,\bar{y}_2,I;\hat{K}_C|x\bigg)*\hat{k}_{12} + \\
    + \frac{i \eta_2}{4}\bar{H}_{\dot\alpha \dot\beta} (\bar{y}_1 + i\bar{\partial}_2)^{\dot\alpha} (\bar{y}_1 + i\bar{\partial}_2)^{\dot\beta} C\bigg(\frac{1}{2}(y_1-y_2),-\frac{1}{2}(y_1-y_2),\bar{y}_1,\bar{y}_2,I;\hat{K}_C|x\bigg)*\hat{k}^+_{12}\,.
\end{multline}

\subsubsection{$R(\hat{k}_C)\bar R(\hat{\bar{k}}_C)^T = \begin{pmatrix}
-1 & 0 \\
0 & -1
\end{pmatrix}$}

\begin{equation}
    \Upsilon_{tot}^{\eta_1}(\Omega_{AdS},\Omega_{AdS},C) = -\frac{i \eta_1}{2}\bar{H}_{\dot\alpha \dot\beta}\bar{\partial}^{\dot\alpha}_1\bar{\partial}^{\dot\beta}_1 C(0,y_2,\bar{y}_1,\bar{y}_2,I;\hat{K}_C|x)*\hat{k}_1 - \frac{i \eta_1}{2}\bar{H}_{\dot\alpha \dot\beta}\bar{\partial}^{\dot\alpha}_2\bar{\partial}^{\dot\beta}_2 C(y_1,0,\bar{y}_1,\bar{y}_2,I;\hat{K}_C|x)*\hat{k}_2\,,
\end{equation}
\begin{multline}
    \Upsilon_{tot}^{\eta_2}(\Omega_{AdS},\Omega_{AdS},C) = - \frac{i \eta_2}{4}\bar{H}_{\dot\alpha \dot\beta}(\bar{\partial}_1- \bar{\partial}_2)^{\dot\alpha} (\bar{\partial}_1 - \bar{\partial}_2)^{\dot\beta} C\bigg(\frac{1}{2}(y_1+y_2),\frac{1}{2}(y_1+y_2),\bar{y}_1,\bar{y}_2,I;\hat{K}_C|x\bigg)*\hat{k}_{12} - \\
    - \frac{i \eta_2}{4}\bar{H}_{\dot\alpha \dot\beta} (\bar{\partial}_1 + \bar{\partial}_2)^{\dot\alpha} (\bar{\partial}_1 + \bar{\partial}_2)^{\dot\beta} C\bigg(\frac{1}{2}(y_1-y_2),-\frac{1}{2}(y_1-y_2),\bar{y}_1,\bar{y}_2,I;\hat{K}_C|x\bigg)*\hat{k}^+_{12}\,.
\end{multline}

\subsubsection{$R(\hat{k}_C)\bar R(\hat{\bar{k}}_C)^T = \begin{pmatrix}
0 & 1 \\
1 & 0
\end{pmatrix}$}\label{Vertex TA3}

\begin{multline}
    \Upsilon_{tot}^{\eta_1}(\Omega_{AdS},\Omega_{AdS},C) = \frac{i \eta_1}{8}\bar{H}_{\dot\alpha \dot\beta}(\bar{y}_1+\bar{y}_2 + i\bar{\partial}_1 - i\bar{\partial}_2)^{\dot\alpha}(\bar{y}_1+\bar{y}_2 + i\bar{\partial}_1 - i\bar{\partial}_2)^{\dot\beta}  C(0,y_2,\bar{y}_1,\bar{y}_2,I;\hat{K}_C|x)*\hat{k}_1 +\\
    +\frac{i \eta_1}{8}\bar{H}_{\dot\alpha \dot\beta}(\bar{y}_1+\bar{y}_2 - i\bar{\partial}_1 + i\bar{\partial}_2)^{\dot\alpha}(\bar{y}_1+\bar{y}_2 - i\bar{\partial}_1 + i\bar{\partial}_2)^{\dot\beta} C(y_1,0,\bar{y}_1,\bar{y}_2,I;\hat{K}_C|x)*\hat{k}_2\,,
\end{multline}
\begin{multline}
    \Upsilon_{tot}^{\eta_2}(\Omega_{AdS},\Omega_{AdS},C) = - \frac{i \eta_2}{4}\bar{H}_{\dot\alpha \dot\beta} (\bar{\partial}_1- \bar{\partial}_2)^{\dot\alpha} (\bar{\partial}_1 - \bar{\partial}_2)^{\dot\beta} C\bigg(\frac{1}{2}(y_1+y_2),\frac{1}{2}(y_1+y_2),\bar{y}_1,\bar{y}_2,I;\hat{K}_C|x\bigg)*\hat{k}_{12} + \\
    + \frac{i \eta_2}{4}\bar{H}_{\dot\alpha \dot\beta} (\bar{y}_1 + \bar{y}_2)^{\dot\alpha} (\bar{y}_1 + \bar{y}_2)^{\dot\beta} C\bigg(\frac{1}{2}(y_1-y_2),-\frac{1}{2}(y_1-y_2),\bar{y}_1,\bar{y}_2,I;\hat{K}_C|x\bigg)*\hat{k}^+_{12}\,.
\end{multline}

\subsubsection{$R(\hat{k}_C)\bar R(\hat{\bar{k}}_C)^T = \begin{pmatrix}
0 & -1 \\
-1 & 0
\end{pmatrix}$}\label{Vertex TA4}

\begin{multline}
    \Upsilon_{tot}^{\eta_1}(\Omega_{AdS},\Omega_{AdS},C) = \frac{i \eta_1}{8}\bar{H}_{\dot\alpha \dot\beta}(\bar{y}_1-\bar{y}_2 +i\bar{\partial}_1 + i\bar{\partial}_2)^{\dot\alpha}(\bar{y}_1-\bar{y}_2 +i\bar{\partial}_1 + i\bar{\partial}_2)^{\dot\beta}  C(0,y_2,\bar{y}_1,\bar{y}_2,I;\hat{K}_C|x)*\hat{k}_1 + \\
    +\frac{i \eta_1}{8}\bar{H}_{\dot\alpha \dot\beta}(\bar{y}_1-\bar{y}_2 - i \bar{\partial}_1 - i \bar{\partial}_2)^{\dot\alpha}(\bar{y}_1-\bar{y}_2 - i \bar{\partial}_1 - i \bar{\partial}_2)^{\dot\beta} C(y_1,0,\bar{y}_1,\bar{y}_2,I;\hat{K}_C|x)*\hat{k}_2\,,
\end{multline}
\begin{multline}
    \Upsilon_{tot}^{\eta_2}(\Omega_{AdS},\Omega_{AdS},C) =  \frac{i \eta_2}{4}\bar{H}_{\dot\alpha \dot\beta} (\bar{y}_1- \bar{y}_2)^{\dot\alpha} (\bar{y}_1 - \bar{y}_2)^{\dot\beta} C\bigg(\frac{1}{2}(y_1+y_2),\frac{1}{2}(y_1+y_2),\bar{y}_1,\bar{y}_2,I;\hat{K}_C|x\bigg)*\hat{k}_{12} - \\
    - \frac{i \eta_2}{4}\bar{H}_{\dot\alpha \dot\beta} (\bar{\partial}_1 + \bar{\partial}_2)^{\dot\alpha} (\bar{\partial}_1 + \bar{\partial}_2)^{\dot\beta} C\bigg(\frac{1}{2}(y_1-y_2),-\frac{1}{2}(y_1-y_2),\bar{y}_1,\bar{y}_2,I;\hat{K}_C|x\bigg)*\hat{k}^+_{12}\,.
\end{multline}

\subsubsection{$R(\hat{k}_C)\bar R(\hat{\bar{k}}_C)^T = \begin{pmatrix}
0 & -1 \\
1 & 0
\end{pmatrix}$}

\begin{multline}
    \Upsilon_{tot}^{\eta_1}(\Omega_{AdS},\Omega_{AdS},C) = \frac{i \eta_1}{8}\bar{H}_{\dot\alpha \dot\beta}(\bar{y}_1-\bar{y}_2 + i\bar{\partial}_1 +i\bar{\partial}_2)^{\dot\alpha}(\bar{y}_1-\bar{y}_2 + i\bar{\partial}_1 +i\bar{\partial}_2)^{\dot\beta}C(0,y_2,\bar{y}_1,\bar{y}_2,I;\hat{K}_C|x)*\hat{k}_1 + \\
    + \frac{i \eta_1}{8}\bar{H}_{\dot\alpha \dot\beta}(\bar{y}_1+\bar{y}_2 - i\bar{\partial}_1 + i\bar{\partial}_2)^{\dot\alpha}(\bar{y}_1+\bar{y}_2 - i\bar{\partial}_1 + i\bar{\partial}_2)^{\dot\beta}  C(y_1,0,\bar{y}_1,\bar{y}_2,I;\hat{K}_C|x)*\hat{k}_2\,,
\end{multline}
\begin{multline}
    \Upsilon_{tot}^{\eta_2}(\Omega_{AdS},\Omega_{AdS},C) = \frac{i \eta_2}{4}\bar{H}_{\dot\alpha \dot\beta}(\bar{y}_2 - i\bar{\partial}_1)^{\dot\alpha} (\bar{y}_2 - i\bar{\partial}_1)^{\dot\beta} C\bigg(\frac{1}{2}(y_1+y_2),\frac{1}{2}(y_1+y_2),\bar{y}_1,\bar{y}_2,I;\hat{K}_C|x\bigg)*\hat{k}_{12} + \\
    + \frac{i \eta_2}{4}\bar{H}_{\dot\alpha \dot\beta} (\bar{y}_1 + i\bar{\partial}_2)^{\dot\alpha} (\bar{y}_1 + i\bar{\partial}_2)^{\dot\beta} C\bigg(\frac{1}{2}(y_1-y_2),-\frac{1}{2}(y_1-y_2),\bar{y}_1,\bar{y}_2,I;\hat{K}_C|x\bigg)*\hat{k}^+_{12}\,.
\end{multline}
\subsubsection{$R(\hat{k}_C)\bar R(\hat{\bar{k}}_C)^T = \begin{pmatrix}
0 & 1 \\
-1 & 0
\end{pmatrix}$}

\begin{multline}
    \Upsilon_{tot}^{\eta_1}(\Omega_{AdS},\Omega_{AdS},C) = \frac{i \eta_1}{8}\bar{H}_{\dot\alpha \dot\beta}(\bar{y}_1+\bar{y}_2 + i\bar{\partial}_1 - i\bar{\partial}_2)^{\dot\alpha}(\bar{y}_1+\bar{y}_2 + i\bar{\partial}_1 - i\bar{\partial}_2)^{\dot\beta} C(0,y_2,\bar{y}_1,\bar{y}_2,I;\hat{K}_C|x)*\hat{k}_1 + \\
    + \frac{i \eta_1}{8}\bar{H}_{\dot\alpha \dot\beta}(\bar{y}_1-\bar{y}_2 - i\bar{\partial}_1 - i\bar{\partial}_2)^{\dot\alpha}(\bar{y}_1-\bar{y}_2 - i\bar{\partial}_1 - i\bar{\partial}_2)^{\dot\beta} C(y_1,0,\bar{y}_1,\bar{y}_2,I;\hat{K}_C|x)*\hat{k}_2\,,
\end{multline}
\begin{multline}
    \Upsilon_{tot}^{\eta_2}(\Omega_{AdS},\Omega_{AdS},C) = \frac{i \eta_2}{4}\bar{H}_{\dot\alpha \dot\beta}(\bar{y}_1 - i\bar{\partial}_2)^{\dot\alpha} (\bar{y}_1 - i\bar{\partial}_2)^{\dot\beta} C\bigg(\frac{1}{2}(y_1+y_2),\frac{1}{2}(y_1+y_2),\bar{y}_1,\bar{y}_2,I;\hat{K}_C|x\bigg)*\hat{k}_{12} + \\
    + \frac{i \eta_2}{4}\bar{H}_{\dot\alpha \dot\beta} (\bar{y}_2 + i\bar{\partial}_1)^{\dot\alpha} (\bar{y}_2 + i\bar{\partial}_1)^{\dot\beta} C\bigg(\frac{1}{2}(y_1-y_2),-\frac{1}{2}(y_1-y_2),\bar{y}_1,\bar{y}_2,I;\hat{K}_C|x\bigg)*\hat{k}^+_{12}\,.
\end{multline}

One can observe that in all cases except for $R(\hat{k}_C)\bar R(\hat{\bar{k}}_C)^T = \pm 1$ the total holomorphic vertices contain standard terms such as $\bar{H}^{\dot\alpha\dot\beta}\bar\partial_{i\dot\alpha} \bar\partial_{i\dot\beta}$ and $\bar{H}_{\dot\alpha\dot\beta}\bar{y}^{\dot\alpha}_i \bar{y}^{\dot\beta}_i$ supplemented by a new type of terms such as $\bar{H}_{\dot\alpha\dot\beta}(\bar{y}_2 + i\bar{\partial}_1)^{\dot\alpha} (\bar{y}_2 + i\bar{\partial}_1)^{\dot\beta}$ that mix $\bar{y}$ with $\bar{\partial}$. These new terms glue entangled modules \ref{ModuleM} (present in all CHS models other than $\mathds{Z}_2$) to the remaining $B_2$ modules $\{M_{tw\otimes tw}\,,M_{adj \otimes adj}\,, M_{tw \otimes adj}\,,M_{adj \otimes tw}\}$.

Note that pairs of vertices (\ref{Vertex TA1}; \ref{Vertex TA2}) and (\ref{Vertex TA3}; \ref{Vertex TA4}) are connected by the change of variables automorphism of the star product algebra (\ref{e:change of variables y}) that swaps conjugacy classes $\mathcal{R}_1$ and $\mathcal{R}_2$. This automorphism relates vertices ($\Upsilon_{tot}^{\eta_1}\,,\Upsilon_{tot}^{\eta_2}$) of (\ref{Vertex TA1}; \ref{Vertex TA2}) and ($\Upsilon_{tot}^{\eta_2}\,,\Upsilon_{tot}^{\eta_1}$) of (\ref{Vertex TA3}; \ref{Vertex TA4}).

Restriction to the invariant subspace of total dressed Klein operator involutive automorphism $\hat{K}_v \rightarrow -\hat{K}_v$, which eliminates the non-unitary and non-highest-weight modules from the zero-form sector and preserves modules $\{M_{tw\otimes adj}\,,M_{adj\otimes tw}\}$ that have unitary submodules, leaves us with vertices \ref{Vertex TA1}-\ref{Vertex TA2} and \ref{Vertex TA3}-\ref{Vertex TA4}. One can see that vertices \ref{Vertex TA1}-\ref{Vertex TA2} and \ref{Vertex TA3}-\ref{Vertex TA4} contain standard terms that glue zero-form modules $\{M_{tw\otimes adj}\,,M_{adj\otimes tw}\}$ to one-form modules $\{M_{adj\otimes adj}\,,M_{tw\otimes tw}\}$, and new terms that glue zero-form modules $\{M_{tw\otimes adj}\,,M_{adj\otimes tw}\}$ to one-form entangled modules. Considering the one-form $M_{adj \otimes adj}$ sector we observe that gluing is carried out by $\bar{H} \bar{\partial} \bar{\partial}$ terms and, therefore, the First On-Shell Theorem has an expected form. This sector should contain a number of copies of the standard Fronsdal HS equations and fields. Other one-form sectors have not been previously observed and their physical interpretation is not yet fully clear.

\section{Dynamical content}\label{dynamics}

In this Section we go over all linear equations that remain after the $\hat{K}_v \rightarrow -\hat{K}_v$ truncation coupled with the boundary condition (\ref{e: boundary condition}) and discuss their dynamical content. While this truncation may not be the only possible one, it nonetheless provides a natural starting point as an obvious generalization of that of the standard HS system.

As explained in Section \ref{Derivative and modules} and \ref{FOST}, Klein-related truncation leaves us with the one-form modules \ref{ModuleTT}, \ref{ModuleAA}, \ref{ModuleM} glued to the zero-form modules \ref{ModuleTA1} and \ref{ModuleTA2}. However, zero-form modules $\{M_{tw\otimes adj}\,,M_{adj\otimes tw}\}$ should be further subjected to the boundary condition (\ref{e: boundary condition}), otherwise they are not complex equivalent to unitary modules. The boundary condition effectively eliminates dependence of the zero-form fields $C$ on the spinor oscillators responsible for the description of the adjoint factor. Thus, linearized zero-form equation reduces to the standard twisted-adjoint one
\begin{equation}
    \bigg(D_L - i e^{\alpha \dot \alpha}(y_{\alpha i} \bar y_{\dot\alpha i} - \partial_{\alpha i}\bar\partial_{\dot \alpha i})\bigg)C(Y_i,I;\hat{K}_C|x) = 0\,,
\end{equation}
where $i$ is either $\{1,2\}$ or $\{+,-\}$. Hence, the fields $C(Y_i,I;\hat{K}_C|x)$ encode the Weyl tensors and their descendants. Since idempotents $I_n$ induce filtration and decompose the CHS system into the corresponding sectors, we observe that there are $2$ zero-form fields in each $I_n$ sector and $32$ zero-form fields in the $I_1 I_2$ sector.

Now we turn to the one-form equations. In general it should be noted that since the only remaining $C$ fields describe Weyl tensors and their descendants, which at the linear level completely define the dynamics, the $\omega$ fields glued to them have to consist of a combination of Fronsdal fields and, may be, some topological fields, that carry no local degrees of freedom. Indeed, in $d=4$ massless mixed symmetry fields  do not exist
(carry no degrees of freedom). Therefore Fronsdal fields are the only propagating massless fields free of ghosts. However, an $AdS_4$ algebra admits non-unitary partially massless fields \cite{Deser:1983tm}-\cite{Khabarov:2019dvi} not present in a standard HS theory due to the insufficient number of oscillator copies. It is anticipated that the $(d=4\,, B_2)$ CHS model not truncated to its unitary subsector should contain partially massless fields since the doubling of oscillator variables allows one to encode $sp(4)$ two-row Young diagrams.

 More in detail, let us first consider the one-form field $\omega$ that takes values in the tensor product of two adjoint modules $M_{adj \otimes adj}$, which arises, for example, when  $\omega$  contains no Klein dependencies. Collecting the terms from the previous section, the one-form equation after the $\hat{K}_v \rightarrow -\hat{K}_v$ truncation is
\begin{multline} \label{e: adj-adj omega}
    \left[D_L + e^{\alpha \dot\alpha} \sum_{i=1}^{2}(\bar{y}_{\dot\alpha i} \partial_{\alpha i} + y_{\alpha i} \bar{\partial}_{\dot\alpha i})\right] \omega(y_1, y_2, \bar{y}_1, \bar{y}_2,I | x) = \\ = -\frac{i \eta_1}{2}\bar{H}_{\dot\alpha \dot\beta}\bar{\partial}^{\dot\alpha}_1\bar{\partial}^{\dot\beta}_1 C(0,y_2,\bar{y}_1,\bar{y}_2,I;\hat{k}_1 | x) * \hat{k}_1 - \frac{i \eta_1}{2}\bar{H}_{\dot\alpha \dot\beta}\bar{\partial}^{\dot\alpha}_2\bar{\partial}^{\dot\beta}_2 C(y_1,0,\bar{y}_1,\bar{y}_2,I; \hat{k}_2 |x)*\hat{k}_2 - \\ -\frac{i \eta_2}{2}\bar{H}_{\dot\alpha \dot\beta} \bar{\partial}_-^{\dot\alpha} \bar{\partial}_-^{\dot\beta} C\bigg(y_+,0,\bar{y}_+,\bar{y}_-,I;\hat{k}_{12}|x\bigg)*\hat{k}_{12}
    - \frac{i \eta_2}{2}\bar{H}_{\dot\alpha \dot\beta} \bar{\partial}_+^{\dot\alpha} \bar{\partial}_+^{\dot\beta} C\bigg(0,y_-,\bar{y}_+,\bar{y}_-,I;\hat{k}^+_{12}|x\bigg)*\hat{k}^+_{12} + \text{c.c}\,.
\end{multline}

We see that the structure of this equation is reminiscent of the standard coupling between the $\omega$ field in the adjoint sector and the $C$ field in the twisted sector. Here, however, the $C$ fields belong to the tensor product of the adjoint and twisted-adjoint modules, but with the imposed boundary condition (\ref{e: boundary condition}) leaving only the twisted-adjoint factor the analogy becomes clear:
\begin{multline}\label{e: equation adj-adj truncated}
    \left[D_L + e^{\alpha \dot\alpha} \sum_{i=1}^{2}(\bar{y}_{\dot\alpha i} \partial_{\alpha i} + y_{\alpha i} \bar{\partial}_{\dot\alpha i})\right] \omega(y_1, y_2, \bar{y}_1, \bar{y}_2,I | x) = \\ = -\frac{i \eta_1}{2}\bar{H}_{\dot\alpha \dot\beta}\bar{\partial}^{\dot\alpha}_1\bar{\partial}^{\dot\beta}_1 C(0,\bar{y}_1,I;\hat{k}_1 | x) * \hat{k}_1 - \frac{i \eta_1}{2}\bar{H}_{\dot\alpha \dot\beta}\bar{\partial}^{\dot\alpha}_2\bar{\partial}^{\dot\beta}_2 C(0,\bar{y}_2,I; \hat{k}_2 |x)*\hat{k}_2 - \\ -\frac{i \eta_2}{2}\bar{H}_{\dot\alpha \dot\beta} \bar{\partial}_-^{\dot\alpha} \bar{\partial}_-^{\dot\beta} C\bigg(0,\bar{y}_-,I;\hat{k}_{12}|x\bigg)*\hat{k}_{12}
    - \frac{i \eta_2}{2}\bar{H}_{\dot\alpha \dot\beta} \bar{\partial}_+^{\dot\alpha} \bar{\partial}_+^{\dot\beta} C\bigg(0,\bar{y}_+,I;\hat{k}^+_{12}|x\bigg)*\hat{k}^+_{12} + \text{c.c.}\,.
\end{multline}
The one-form module $M_{adj\otimes adj}$ glues to the set of Weyl modules and according to the standard HS theory encodes several copies of dynamical Fronsdal fields and equations. Indeed, at the linear order we can identify the following component one-forms in $\omega$
\begin{equation} \label{e:linear omega decomposition}
    \omega(y_1, y_2, \bar{y}_1, \bar{y}_2,I | x) = \omega_1(y_1, \bar{y}_1,I | x) + \omega_2(y_2, \bar{y}_2,I | x) + \omega_+(y_+, \bar{y}_+,I | x) + \omega_-(y_-, \bar{y}_-,I | x) +\dots\,,
\end{equation}
where the remaining $(\dots)$ terms are glued to zero-forms excluded by the truncation procedure.
For example, after such a decomposition,
\begin{equation}
    \left[D_L + e^{\alpha \dot\alpha} (\bar{y}_{\dot\alpha 1} \partial_{\alpha 1} + y_{\alpha 1} \bar{\partial}_{\dot\alpha 1})\right] \omega_1(y_1,  \bar{y}_1, I | x) = -\frac{i \eta_1}{2}\bar{H}_{\dot\alpha \dot\beta}\bar{\partial}^{\dot\alpha}_1\bar{\partial}^{\dot\beta}_1 C(0,\bar{y}_1,I;\hat{k}_1 | x) * \hat{k}_1 + \text{c.c.}\,
\end{equation}
reproduces the linear equation of the standard HS theory.

As clarified in Section \ref{Coxeter higher-spin equations}, all oscillator variables $Y^A_n$ implicitly carry a corresponding idempotent $I_n$ and constant terms not multiplied by an idempotent are also not present. Therefore, idempotents induce a filtration that decomposes the full CHS system into sectors that are independent at the linear level but interact in a triangle-like manner in the higher orders of the perturbation theory. Indeed, for simplicity consider a case of $B_2$ group (the same decomposition occurs in a general $B_p$ model). All fields decompose into the three sectors: $F(Y_1;\hat{K}_1|x)*I_1$, $F(Y_2;\hat{K}_2|x)*I_2$ and $F(Y_1,Y_2;\hat{K}|x)*I_1 I_2$, where $F$ is either $\omega$ or $C$. Due to the presence of idempotents in a star product (\ref{e:star product}), fields from sectors $I_2$ and $I_1 I_2$ do not contribute to the sector $I_1$, and $I_1$ and $I_1 I_2$ give no contribution to the sector $I_2$. However, the product of fields from sectors $I_1$ and $I_2$ belongs to the sector $I_1 I_2$. Therefore, interaction vertices decompose into the components along $I_n$ and $I_1 I_2$. The components along $I_n$ are built out of the fields from the corresponding $I_n$ sectors and coincide with the vertices of the standard HS theory, where variables $Y^A$ are replaced by $Y^A_n$. The vertices proportional to $I_1 I_2$ are built from the fields of all sectors and, consequently, differ from the standard ones.

Let us look at the module $M_{adj\otimes adj}$ encoded by (\ref{e: equation adj-adj truncated}) from the filtration perspective. In the $I_n$ sector, we arrive at a singular adjoint module $\omega(Y_n;\hat{K}_n|x)*I_n$ from the standard theory coupled with the twisted Weyl module $C(Y_n;\hat{K}_n|x)*I_n$. The $B_2$ CHS theory features two complete copies of the standard HS theory associated with their own set of spinor variables $Y^A_n$ that exist in the sectors $I_n$. Although the $I_1 I_2$ sector contains the same equations at the linear level, it differs significantly in the full non-linear system. While we have determined the dynamical primary fields and equations embedded into the equation for the one-form $M_{adj \otimes adj}$ glued to the zero-forms $\{M_{tw\otimes adj}\,,M_{adj\otimes tw}\}$ restricted by (\ref{e: boundary condition}), there can be non-dynamical primary fields and equations, \ie one-form fields outside of (\ref{e:linear omega decomposition}) decomposition, the gluing zero-form terms for which get eliminated in the truncation procedure. Such non-dynamical fields can be important since non-zero VEVs of topological fields could serve as a mass parameters. Thus, a $\sigma_-$ cohomological analysis for the case of $M_{adj \otimes adj}$ is needed.

Consider the case of one-form $\omega$ valued in the product of two twisted-adjoint modules $M_{tw\otimes tw}$. For example, the field $\omega(Y_1,Y_2,I;\hat{k}_1 \hat{k}_2| x)$ takes value in $M_{tw\otimes tw}$. Then the equation after Klein truncation is
\begin{multline} \label{e: tw-tw omega}
    \left[D_L - i e^{\alpha \dot\alpha}\sum_{i = 1}^2 (y_{\alpha i}\bar y_{\dot \alpha i} - \partial_{\alpha i} \bar \partial_{\dot\alpha i})\right] \omega(y_1, y_2, \bar{y}_1, \bar{y}_2,I; \hat{k}_1 \hat{k}_2| x) = \\ =\frac{i \eta_1}{2}\bar{H}_{\dot\alpha \dot\beta}\bar{y}^{\dot\alpha}_2\bar{y}^{\dot\beta}_2 C(y_1,0,\bar{y}_1,\bar{y}_2,I;\hat{k}_1|x)*\hat{k}_2 + \frac{i \eta_1}{2}\bar{H}_{\dot\alpha \dot\beta}\bar{y}^{\dot\alpha}_1\bar{y}^{\dot\beta}_1 C(0,y_2,\bar{y}_1,\bar{y}_2,I;\hat{k}_2|x)*\hat{k}_1 + \\ +\frac{i \eta_2}{2}\bar{H}_{\dot\alpha \dot\beta} \bar{y}_+^{\dot\alpha} \bar{y}_+^{\dot\beta} C\bigg(0,y_-,\bar{y}_+,\bar{y}_-,I;\hat{k}_{12}|x\bigg)*\hat{k}^+_{12} + \frac{i \eta_2}{2}\bar{H}_{\dot\alpha \dot\beta} \bar{y}_-^{\dot\alpha} \bar{y}_-^{\dot\beta} C\bigg(y_+,0,\bar{y}_+,\bar{y}_-,I;\hat{k}^+_{12}|x\bigg)*\hat{k}_{12}\,.
\end{multline}
Imposing boundary condition we arrive at
\begin{multline}
    \left[D_L - i e^{\alpha \dot\alpha}\sum_{i = 1}^2 (y_{\alpha i}\bar y_{\dot \alpha i} - \partial_{\alpha i} \bar \partial_{\dot\alpha i})\right] \omega(y_1, y_2, \bar{y}_1, \bar{y}_2,I; \hat{k}_1 \hat{k}_2| x) = \\ =\frac{i \eta_1}{2}\bar{H}_{\dot\alpha \dot\beta}\bar{y}^{\dot\alpha}_2\bar{y}^{\dot\beta}_2 C(y_1,\bar{y}_1,I;\hat{k}_1|x)*\hat{k}_2 + \frac{i \eta_1}{2}\bar{H}_{\dot\alpha \dot\beta}\bar{y}^{\dot\alpha}_1\bar{y}^{\dot\beta}_1 C(y_2,\bar{y}_2,I;\hat{k}_2|x)*\hat{k}_1 + \\ +\frac{i \eta_2}{2}\bar{H}_{\dot\alpha \dot\beta} \bar{y}_+^{\dot\alpha} \bar{y}_+^{\dot\beta} C\bigg(y_-,\bar{y}_-,I;\hat{k}_{12}|x\bigg)*\hat{k}^+_{12} + \frac{i \eta_2}{2}\bar{H}_{\dot\alpha \dot\beta} \bar{y}_-^{\dot\alpha} \bar{y}_-^{\dot\beta} C\bigg(y_+,\bar{y}_+,I;\hat{k}^+_{12}|x\bigg)*\hat{k}_{12}\,.
\end{multline}

This equation shows that Weyl modules are glued to the one-form modules $M_{tw \otimes tw}$, implying the latter are some (most likely non-local) combinations of Fronsdal fields, though their explicit appearance is not yet clear and will be considered elsewhere. The \rhs of equations (\ref{e: adj-adj omega}), (\ref{e: tw-tw omega}) involve not only primary zero-forms, but also their descendants, the fact that has to be taken into account in the $\sigma_-$ cohomological analysis of the independent equations on the one-forms. The general case of the product of two twisted-adjoint modules has been done in \cite{Gelfond:2013lba} where the symmetry properties of the primary fields and equations were considered.

Furthermore, as we have seen in Section \ref{Derivative and modules}, the covariant constancy equation for entangled modules can be transformed into the equation for $M_{tw \otimes tw}$ by the exponential ansatz. Therefore, it can be conjectured that the primary fields and equations in that case can be described in terms of the same Young diagrams as in \cite{Gelfond:2013lba} for $M_{tw\otimes tw}$ albeit after an appropriate resummation and change of variables. Since the exponent is not a graded object, the result is likely to have no compact finite form in terms of $Y_n^A$.

As  mentioned in the beginning of this section, not all possible types of fields can be realized in $d=4$. While this restriction is obviously lifted in higher dimensions, which serves as a motivation for studying CHS theories in $AdS_d$, it is known that 
higher-rank fields in lower dimensions can effectively exhibit behavior of rank one fields in higher dimensions, as was demonstrated in \cite{Gelfond:2003vh, Sorokin:2017irs}.\footnote{In this context, rank means the tensor degree of the fields of the original theory. In terms of the Coxeter extension, this can be understood as the tensor degree of the moduli of the standard HS theory. In the case of the multiparticle extension, this is the tensor degree of the moduli of the theory being extended.} The application of this mechanism to CHS theories or their further multiparticle extensions is an interesting topic for the future research.

\section{Conclusion}\label{Conclusion}

In this paper we have analyzed a Coxeter extension of the standard $4d$ HS theory at the linear order.

It was shown that an $AdS_4$ solution is embedded into the general model with the symmetry $(\mathcal{C}\times \mathcal{C})/\mathcal{J}$, $\mathcal{J} = Span\{I_n - \bar{I}_n\}$, \ie a CHS theory where the holomorphic and anti-holomorphic idempotents are identified, and it is unique. For this embedding a covariant derivative has been constructed for an arbitrary Coxeter group $\mathcal{C}$. We have observed that a new type of modules, that are not isomorphic to the tensor product of standard adjoint and twisted-adjoint modules and referred to as entangled, appears. A necessary and sufficient condition for the module to be entangled has been found.

In case of the $B_2$ Coxeter group a full set of covariant constancy equations and related modules have been determined. All modules are grouped into four categories where three out of four correspond to the tensor product of standard HS modules while the remaining group corresponds to entangled modules. All $B_2$ linear equations have been reformulated in terms of the field-theoretical Fock modules and unitarizability of $B_2$ modules has been analyzed through the identification with $su(2, 2)$ modules induced via a Bogolyubov transform. It has been deduced that entangled $B_2$ modules are not complex equivalent to lowest-weight unitary modules and, therefore, should be eliminated from the zero-form sector of the theory, while they still play an important role in the total system as they remain in the one-form sector. The entangled modules arise due to a mixing of oscillators of different types induced by the action of the Coxeter group that leads to expressions $P_{\pm}^{kl}= \frac{1}{2}\delta^{nm}\bigg(\mathbb{1}^k_n \bar{\mathbb{1}}^l_m \pm R(k){}^k_n \bar{R}(\bar{k}){}^l_m \bigg)$ no longer being orthogonal projectors onto twisted and adjoint terms of the covariant derivative. An increase in the rank of the group in the $B_p$ series leads to the appearance of other types of entangled modules such as linked transpositions $\hat{k}_{ij}\hat{k}_{jl}$ and others combinations of transpositions and basis axis reflections. Their classification and physical meaning beyond $B_2$ is yet to be studied.

In the $B_2$ case, one can truncate to lowest-weight modules which have unitary submodules from the full nonlinear system in a consistent manner by the total involutive automorphism $\hat{K}_v \rightarrow -\hat{K}_v$ leaving  modules that correspond to the product of standard $4d$ HS adjoint and twisted-adjoint modules intact. In those modules the residual formal restriction on the arguments of the zero-form $C(Y_1,Y_2,I;\hat{K}|x)$ field, resulting from the conditions on their asymptotic behaviour at the $AdS_4$ boundary (\ref{e: boundary condition}), further constrain the set of fields, narrowing in down to the twisted module of the standard HS theory, describing the physical single-particle states at the linear order. This is indeed the desired result, as it preserves the interpretation of single-particle states, in particular maintaining a consistent description of gravity within the theory. The restriction to the unitary submodules in the full nonlinear system is yet to be studied but it is anticipated to be consistent since sources, that decrease at infinity cannot induce fields that increase at infinity. A similar intertwining of dynamical and topological sectors can be seen in the HS theory in three dimensions, where it can be successfully resolved by picking very specific families of shifts in the homotopy procedure in all steps of the perturbative expansion \cite{Korybut:2022kdx}.

A generalization of the First On-Shell Theorem has been presented for the case of a general Coxeter group. For this purpose, the shifted homotopy technique was extended to  CHS theory while the extension of the differential homotopy of \cite{Vasiliev:2023yzx} is an interesting problem for the future. In the $B_2$ case all possible linear vertices have been presented. Among these one finds the expected generalizations of the vertices of the standard $4d$ HS system, gluing one-forms $\omega$ from the adjoint sector $M_{adj \otimes adj}$ to dynamical $C$ fields from $\{M_{tw\otimes adj}\,, M_{adj \otimes tw}\}$, which after imposing boundary condition have non-trivial dependencies only in the twisted sector. The resulting equations reproduce the standard First On-Shell theorem and describe multiple copies of Weyl tensors, Fronsdal fields and field equations. New vertices involving one-forms from $M_{tw\otimes tw}$ and the entangled modules are also obtained gluing Weyl modules to some combinations of Fronsdal fields. The exact form of these combinations and the spectrum of primary fields provide a starting point for the further research.

\section*{Acknowledgement}

The authors thank Konstantin Alkalaev, Anatoliy Korybut, Nikita Misuna and especially Olga Gelfond and the referee for helpful comments. MV is grateful for hospitality to Ofer Aharony,
Theoretical High Energy Physics Group of Weizmann Institute of Science where some
part of this work has been done.
The work was supported by the Foundation for the Advancement of Theoretical
Physics and Mathematics “BASIS”.


\end{document}